\tikzstyle{vertex}=[circle, draw, inner sep=0pt, minimum size=6pt]
\newcommand{\vertex}{\node[vertex]}
\newcommand{\nphard}{NP-hard}
\newcommand{\mcD}{\mathcal{D}}
\newcommand{\mcU}{\mathcal{U}}
\newcommand{\mcF}{\mathcal{F}}
\newcommand{\mcN}{\mathcal{N}}
\newcommand{\mcP}{\mathcal{P}}
\newcommand{\Mplus}{2^{[m]}\setminus\emptyset}
\title{The Complexity of Partial Function Extension for Coverage Functions} %TODO Please add
\author{Umang Bhaskar}{Tata Institute of Fundamental Research, Mumbai, India}{umang.bhaskar@tifr.res.in}{}{Supported in part by a Ramanujan fellowship and an Early Career Research award.}%TODO mandatory, please use full name; only 1 author per \author macro; first two parameters are mandatory, other parameters can be empty. Please provide at least the name of the affiliation and the country. The full address is optional
\author{Gunjan Kumar}{Tata Institute of Fundamental Research, Mumbai, India}{gunjan.kumar@tifr.res.in}{}{}
\authorrunning{U. Bhaskar and G. Kumar}%TODO mandatory. First: Use abbreviated first/middle names. Second (only in severe cases): Use first author plus 'et al.'
\keywords{Coverage Functions, PAC Learning, Approximation Algorithm, Partial Function Extension}%TODO mandatory; please add comma-separated list of keywords
\begin{document}

\maketitle

\begin{abstract}
Coverage functions are an important subclass of submodular functions, finding applications in machine learning, game theory, social networks, and facility location. We study the complexity of partial function extension to coverage functions. That is, given a partial function consisting of a family of subsets of $[m]$ and a value at each point, does there exist a coverage function defined on all subsets of $[m]$ that extends this partial function? Partial function extension is previously studied for other function classes, including boolean functions and convex functions, and is useful in many fields, such as obtaining bounds on learning these function classes.

We show that determining extendibility of a partial function to a coverage function is NP-complete, establishing in the process that there is a polynomial-sized certificate of extendibility. The hardness also gives us a lower bound for learning coverage functions. We then study two natural notions of approximate extension, to account for errors in the data set. The two notions correspond roughly to multiplicative point-wise approximation and additive $L_1$ approximation. We show upper and lower bounds for both notions of approximation. In the second case we obtain nearly tight bounds.
\end{abstract}

	\section{Introduction}

When can a \emph{partial function} --- given as a set of points from a domain, and a value at each point --- be extended to a \emph{total function} on the domain, that lies in some particular class of functions? This is the basic question of \emph{partial function extension}, and is studied both independently (such as in convex analysis) and as a recurring subproblem in many areas in combinatorial optimization, including computational learning and property testing. 

In this paper we study the computational complexity of partial function extension for \emph{coverage functions}. Coverage functions are a natural and widely-studied subclass of submodular functions that find many applications, including in machine learning~\cite{krause2008robust}, auctions~ \cite{blumrosen2007combinatorial,lehmann2006combinatorial}, influence maximization~\cite{borgs2014maximizing,seeman2013adaptive}, and plant location~\cite{cornuejols1977exceptional}. For a natural number $m$, let $[m]$ denote the set $\{1,2,\ldots, m\}$. A set function $f:2^{[m]} \rightarrow \mathbb{R}_+$ is a coverage function if there exists a universe $U$ of elements with non-negative weights and $m$ sets $A_1, \ldots, A_m \subseteq U$ such that for all $S \subseteq [m]$, $f(S)$ is the total weight of elements in $\cup_{j \in S} A_j$. A coverage function is succinct if $|U|$ is at most a fixed polynomial in $m$.

The complexity of partial function extension has been studied earlier for other function classes, with a number of important applications shown. For boolean functions, Boros et al. present complexity results for extension to a large number of boolean function classes, as well as results on approximate extension~\cite{BorosIM98}. Pitt and Valiant show a direct relation between the complexity of partial function extension problem and proper PAC-learning. Informally, a class $\mathcal{F}$ of (boolean) functions on $2^{[m]}$ is said to be properly PAC-learnable if for any distribution $\mu$ on $2^{[m]}$ and any small enough $\epsilon > 0$, any function $f^* \in \mcF$ can be learned by a polynomial-time algorithm that returns a function $f \in \mcF$ with a polynomial number of samples that differs from $f^*$ with probability at most $\epsilon$.  Pitt and Valiant show that if partial function extension for a class $\mcF$ of functions is NP-hard, then the class $\mcF$ cannot be PAC-learned unless NP = RP~\cite{PittV88}.\footnote{Randomized Polynomial (RP) is the class of problems for which a randomized algorithm runs in polynomial time, always answers correctly if the input is a `no' instance, and answers correctly with probability at least $1/2$ if the input is a `yes' instance.} They show computational lower bounds for various classes of boolean functions, thereby obtaining lower bounds on the complexity for learning these classes. In this paper, we show lower bounds on partial function extension for coverage functions, which by this relation give lower bounds on proper PAC learning as well. In separate work, we present results on the computational complexity of partial function extension for submodular, subadditive, and convex functions, and show further connections with learning and property testing~\cite{bhaskar2018partial}.

Characterizing partial functions extendible to convex functions is widely studied in convex analysis. Here a partial function is given defined on a non-convex set of points, and is required to be extended to a convex function on the convex hull or some other convex domain. Characterizations for extendible partial functions are given in various papers, such as~\cite{DragomirescuI92,Yan12}. This finds many applications, including mechanism design~\cite{FrongilloK14}, decision making under risk~\cite{PetersW86}, and quantum computation~\cite{Uhlmann10}.

Another example of the ubiquity of partial function extension is in property testing. Given oracle access to a function $f$, the goal of property testing is to determine by querying the oracle if the function $f$ lies in some class $\mcF$ of functions of interest, or is far from it, i.e., differs from any function in $\mcF$ at a large number of points. Partial function extension is a natural step in property testing, since at any time the query algorithm has a partial function consisting of the points queried and the values at those points. If at any time the partial function thus obtained is not extendible to a function in $\mcF$, the algorithm should reject, and should accept otherwise. Partial function extension is used to give both upper and lower bounds for property testing~\cite{bhaskar2018partial,submodularity}. Partial function extension is thus a basic problem that finds application in a wide variety of different fields.

\paragraph*{Our Contribution} Our input is a partial function $H = \{(T_1,f_1),\dots,(T_n,f_n)\}$ with $T_i \subseteq [m]$ and $f_i \ge 0$, and  the goal is to determine if there exists a coverage function $f:2^{[m]} \rightarrow \mathbb{R}_{\ge 0} $ such that $f(T_i) = f_i$ for all $i \in [n]$. This is the Coverage Extension problem.
Throughout the paper we use $[m]$ for the ground set, $n$ for the number of defined sets in the partial function, and $\mathcal{D}$ for the set of defined sets $\{T_1,\dots,T_n\}$.  We also use $d = \max_{i \in [n]} |T_i|$ to denote the maximum size of sets in  $\mathcal{D}$, and $F := \sum_{i \in [n]} f_i$.

Our first result shows that Coverage Extension is NP-hard.  Interestingly, we show if there exists a coverage function extending the given partial function  then there is an extension by a coverage function for which the size of the universe $|U|$ is at most $n$. This shows that Coverage Extension is in NP. In contrast, it is known that minimal certificates for non-extendibility may be of exponential size~\cite{coverage}. Also, unlike property testing, this shows that Coverage Extension does not become easier when restricted to succinct coverage functions.

\begin{theorem}
	\label{extension-lb}
	Coverage Extension  is NP-complete.
\end{theorem}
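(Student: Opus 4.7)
The plan is to prove the theorem in two parts: membership in NP and NP-hardness.

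For NP membership, I would use the standard correspondence between coverage functions and non-negative weights indexed by non-empty subsets of $[m]$. For each non-empty $V \subseteq [m]$, let $\alpha_V \ge 0$ denote the total weight of universe elements whose index set $\{j : u \in A_j\}$ equals $V$. Then $f(S) = \sum_{V : V \cap S \neq \emptyset} \alpha_V$, and extendibility of $H$ is equivalent to feasibility of the linear program
\[
\alpha \ge 0, \qquad \sum_{V \subseteq [m],\, V \cap T_i \neq \emptyset} \alpha_V \;=\; f_i \quad \text{for all } i \in [n].
\]
This LP has $2^m - 1$ non-negative variables but only $n$ equality constraints, so if it is feasible then a basic feasible solution argument yields a feasible $\alpha$ supported on at most $n$ sets $V$; by standard LP bit-complexity bounds, the non-zero weights have polynomial bit size. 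A non-deterministic verifier therefore guesses the at most $n$ subsets with positive $\alpha_V$ together with their weights, and checks the $n$ linear equations in polynomial time. This simultaneously proves the claim, stated in the preceding paragraph, that the universe size can be bounded by $n$.

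For NP-hardness I would reduce from a combinatorial problem such as Vertex Cover, Set Cover, or a satisfiability variant. The aim is to build a partial function $H$ whose only non-negative decompositions into the ``dual atoms'' $S \mapsto \mathbb{1}[V \cap S \neq \emptyset]$ correspond to solutions of the source problem. Concretely, one would choose defined pairs $(T_i, f_i)$ so that certain equations pin specific $\alpha_V$ to fixed values (acting as Boolean gadget variables) while others encode a global ``cover''- or ``clause''-type constraint; together with non-negativity, the equalities must force the support of $\alpha$ to correspond to an integer combinatorial solution of the source instance.

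The main obstacle is ruling out fractional non-negative solutions: coverage extensions naturally admit continuous flexibility in the $\alpha_V$, so the reduction must be engineered so that the column vectors $(\mathbb{1}[V \cap T_i \neq \emptyset])_i$ indexed by the admissible $V$ form a matrix in which feasibility forces essentially $0/1$ values on the support. Structural evidence that such a reduction is possible comes from the dual LP, whose separation problem asks to choose $V$ maximizing $\sum_i y_i \mathbb{1}[V \cap T_i \neq \emptyset]$ for a given $y \in \mathbb{R}^n$; this is coverage maximization with signed weights, itself NP-hard, so the dual cannot be attacked via a polynomial separation oracle. A secondary concern is keeping the parameters of the reduction (the values $f_i$ and the number $n$ of defined sets) polynomially bounded, which I would handle by using small integer gadget values.
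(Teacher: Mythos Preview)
Your NP-membership argument is correct and essentially identical to the paper's: the same LP (called Extension-P there) has $2^m-1$ non-negative variables and $n$ equalities, so a vertex has at most $n$ non-zeros, giving a polynomial certificate.

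The NP-hardness part, however, is not a proof but a plan, and the plan does not resolve the very obstacle you flag. You propose to reduce from Vertex Cover, Set Cover, or SAT and to ``engineer'' gadgets so that non-negativity forces essentially $0/1$ values on the support of $\alpha$. But you give no construction, and your own observation that the dual separation problem is NP-hard is evidence of difficulty, not of a reduction: hardness of separation for the dual says nothing about the primal feasibility question you need. The real work is exactly in defeating fractional solutions, and nothing in the proposal does that.

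The paper sidesteps this obstacle by reducing from a problem that is \emph{already fractional}: deciding whether the fractional chromatic number $\chi^*(G)$ is at most $k$. Given $G=(V,E)$ with $V=[n']$, the partial function sets $h(\{i\})=1$ for each vertex, $h(\{i,j\})=2$ for each edge, and $h([n'])=k$. The first two families of equalities force $\sum_{S\ni i,j} w(S)=0$ for every edge $\{i,j\}$, so any $S$ with $w(S)>0$ must be an independent set; conversely, weights on independent sets with $\sum_{I\ni v} x_I=1$ and $\sum_I x_I=k$ come directly from a fractional coloring. Thus extendibility is equivalent to $\chi^*(G)\le k$, with no need to rule out fractional $\alpha$. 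This is the missing idea in your proposal.
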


For the hardness, we show a reduction from fractional graph colouring, a problem studied in fractional graph theory. Our hardness for extension also shows the following result for proper learning of succinct coverage functions.

\begin{theorem}
\label{pac-learning}
Unless RP = NP, the class of succinct coverage functions cannot be properly PAC-learned (i.e., cannot be PMAC-learned with approximation factor $\alpha = 1$).
\end{theorem}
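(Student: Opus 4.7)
}

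The plan is a standard Pitt--Valiant style reduction, using Theorem \ref{extension-lb} as the source of hardness. Suppose for contradiction that the class $\mcF$ of succinct coverage functions on $2^{[m]}$ is properly PAC-learnable, i.e.\ PMAC-learnable with $\alpha=1$: there is a randomized algorithm $\mathcal{A}$ that, given parameters $\epsilon,\delta>0$ and access to labeled samples $(S,f^*(S))$ with $S\sim\mu$ for any unknown $f^*\in\mcF$ and any distribution $\mu$ on $2^{[m]}$, runs in time polynomial in $m,1/\epsilon,1/\delta$ and, with probability at least $1-\delta$, returns some $f\in\mcF$ such that $\Pr_{S\sim\mu}[f(S)=f^*(S)]\ge 1-\epsilon$. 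I will use $\mathcal{A}$ to build an RP algorithm for Coverage Extension, which by Theorem \ref{extension-lb} will force $\mathrm{NP}=\mathrm{RP}$.

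Given an instance $H=\{(T_1,f_1),\dots,(T_n,f_n)\}$ of Coverage Extension, I would take $\mu$ to be the uniform distribution on $\mcD=\{T_1,\dots,T_n\}$, and draw samples $(T_i,f_i)$ according to $\mu$ (which we can do since $\mcD$ and the target values are explicit). Set $\epsilon=1/(2n)$ and $\delta=1/2$, and run $\mathcal{A}$ with these parameters to obtain a hypothesis $f\in\mcF$. Finally, check in polynomial time whether $f(T_i)=f_i$ for every $i\in[n]$; accept iff all checks pass.

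Correctness in the yes case uses the key succinctness observation from Theorem \ref{extension-lb}: whenever $H$ is extendible to some coverage function, it is extendible to one with universe size at most $n$, hence to some $f^*\in\mcF$. The samples $(T_i,f_i)$ are then genuinely consistent with $f^*$, so $\mathcal{A}$'s guarantee applies and with probability $\ge 1/2$ produces $f\in\mcF$ with $\Pr_{S\sim\mu}[f(S)\ne f^*(S)]<1/(2n)$. Since $\mu$ puts mass $1/n$ on each $T_i$, strictly less than $1/(2n)$ error forces $f(T_i)=f^*(T_i)=f_i$ for every $i$, so the verification accepts. In the no case, no function in $\mcF$ (indeed no coverage function at all) can match $H$, so the verification always rejects regardless of what $\mathcal{A}$ returns. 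This is precisely an RP algorithm for Coverage Extension, contradicting its NP-completeness unless $\mathrm{RP}=\mathrm{NP}$.

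The main subtlety, and the only place where anything could go wrong, is ensuring that we are genuinely in the PAC learning setup: we need the target function to lie in $\mcF$ whenever $H$ is extendible, not merely that \emph{some} coverage function extends $H$. This is exactly what the $|U|\le n$ bound in Theorem \ref{extension-lb} buys us, making the reduction go through for the subclass of \emph{succinct} coverage functions (and showing that restricting to the succinct subclass does not make learning easier). The remaining pieces---the choice $\epsilon<1/n$ to convert an average-case guarantee into exact agreement on the support, and the polynomial dependence on $n$ and $m$---are routine.
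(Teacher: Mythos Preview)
Your proposal is correct and follows the same Pitt--Valiant reduction as the paper. Two small points where the paper is a bit more explicit are worth noting. First, the paper works directly with the partial functions produced by the fractional-chromatic-number reduction, for which $n=|V|+|E|+1=O(m^2)$; this is what ensures that the $|U|\le n$ extension from Proposition~\ref{succinct} actually lies in the \emph{succinct} class $\mcF$, and your argument implicitly needs the same observation (for an arbitrary Coverage Extension instance $n$ is not a priori bounded by a fixed polynomial in $m$). Second, rather than simply evaluating the learner's output on $\mcD$, the paper runs the Chakrabarty--Huang reconstruction algorithm on it and checks that the recovered $W$-coefficients are nonnegative, have support $\le k$, and agree with $h$ on $\mcD$; this guards against the possibility that on a no-instance---where the samples are inconsistent with every $f^*\in\mcF$---the learner's output might not be a bona fide member of $\mcF$. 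Under the standard convention that a proper learner always outputs a syntactic member of $\mcF$, your simpler verification step is fine.
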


These are the first hardness results for learning coverage functions based on standard complexity assumptions. Earlier results showed a reduction from learning disjoint DNF formulas to learning coverage functions~\cite{feldman2014learning}, however as far as we are aware, there are no known lower bounds for learning disjoint DNF formulas. The following theorem is shown in the appendix.

Given the hardness result for Coverage Extension,  we study approximation algorithms for two natural optimization versions of the extension problem. In both of these problems, the  goal is to determine the distance between the given partial function and the class of coverage functions. %This distance is zero iff the partial function is extendible. 
Based on the notion of the distance, we study the following two problems.

%This distance measures the changes needed in the values (of the partial function)  to guarantee a coverage extension. Therefore the distance is zero if the partial function is extendible. 

%In the first problem called \textsf{Coverage Norm Extension}, the distance is the 
% Both problems ask for determinimg the distance between the given partial function and coverage functions.  
%In the first problem called \textsf{Coverage Pointwise Extension}, distance is defined as  $\max_{i \in [n]} g(T_i)/f_i$ over all coverage function $g$
%$g$ that minimizes its distance with the given partial function.   In the first problem called \textsf{Coverage Pointwise Extension}, the distance is defined as $\max_{i \in [n]} g(T_i)/f_i$

%We consider both pointwise approximation as well as total distance approximation.
In  \emph{Coverage Approximate Extension}, the goal is to determine minimum value of $\alpha \ge 1$ such that there exists a coverage function $f:2^{[m]} \rightarrow \mathbb{R}_{\ge 0} $ satisfying $f_i \le f(T_i) \le \alpha f_i$ for all $i \in [n]$.
%\footnote{technically the condition $f_i/\alpha \le f(T_i) \le \alpha f_i$ should make more sense in our definition. But the problem under this condition  reduces to our current problem.}.  

In  \emph{Coverage Norm Extension}, the goal is to determine the minimum $L_1$ distance from a coverage function, i.e.,  minimize $\sum_{i \in [n]} |\epsilon_i|$  
%the $L_p$ norm, for given $p$, of the error vector $(\epsilon_1,\dots,\epsilon_n)$ of a
where $\epsilon_i = f(T_i) - f_i$ for all $i \in [n]$ for some coverage function $f$.  
%Thus Coverage Pointwise Extension considers pointwise maximum value 
%We call this problem as .

The two notions of approximation we study thus roughly correspond to the two prevalent notions of learning real-valued functions. Coverage Approximate Extension corresponds to PMAC learning, where we look for point-wise multiplicative approximations. Coverage Norm Extension corresponds to minimizing the $L_1$ distance in PAC learning.

Throughout this paper, the minimum value of $\alpha$ in Coverage  Approximate Extension will be denoted by $\alpha^*$ and minimum value of $\sum_{i \in [n]} |\epsilon_i|$ in Norm Extension will be denoted by $OPT$. As both of these problems are generalisations of Coverage Extension, they are NP-hard. We give upper and lower bounds for approximation for both of these problem.
% and by $\alpha^*$ if the context of $H$ is understood.

%As we can see, some of the problems above is easy for succinct coverage functions. However, we show that if there exists an extension there always exists an succinct extension and hence complexity of our problem is same even when we consider succinct coverage functions.

%Recall that $d = \max_{i \in [n]} |T_i|$ to denote maximum size of sets in  $\mathcal{D}$.
\begin{theorem}
	\label{extension-ub}
	There is a $\left( \min \{d,m^{2/3}\} \log d\right)$-approximation algorithm for Coverage Approximate Extension. If $d$ is a constant then there is a $d$-approximation algorithm. 
\end{theorem}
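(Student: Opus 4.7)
The plan is to cast Coverage Approximate Extension as a linear program and look for polynomial-size relaxations whose optima approximate $\alpha^*$. Every coverage function admits a representation $f(S) = \sum_{W \ne \emptyset} c_W \, \mathbf{1}[W \cap S \ne \emptyset]$ with $c_W \ge 0$ (one variable per profile $W \subseteq [m]$), so $\alpha^*$ is the optimum of
\[
\min\alpha \quad \text{s.t.} \quad f_i \le \sum_{W} c_W \, \mathbf{1}[W\cap T_i \ne \emptyset] \le \alpha f_i \;\; \forall i,\quad c_W \ge 0.
\]
This exact LP has $2^m-1$ variables, so the task reduces to identifying tractable sub-families of profiles whose restricted LPs remain close to $\alpha^*$.

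For the $d$-approximation---which in particular yields the $d$-approximation claimed when $d$ is constant---I would restrict the LP to singleton profiles, obtaining a polynomial-size LP over modular functions $g(S) = \sum_{j \in S} c_j$ with $c_j \ge 0$. Given any optimal coverage extension $f^*$ attaining $\alpha^*$, setting $c_j := f^*(\{j\})$ is feasible for this LP with ratio $d\alpha^*$: subadditivity of $f^*$ (which follows from the coverage representation) gives $\sum_{j \in T_i} c_j \ge f^*(T_i) \ge f_i$, while monotonicity gives $f^*(\{j\}) \le f^*(T_i)$ for $j \in T_i$, hence $\sum_{j \in T_i} c_j \le |T_i| f^*(T_i) \le d\alpha^* f_i$. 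Every nonnegative modular function is coverage (place one universe element per coordinate in its singleton set), so the modular LP's solution certifies a valid approximate extension.

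For the $m^{2/3}\log d$ bound I would enlarge the LP to allow profiles of size up to a threshold $k$, giving $O(m^k)$ variables, and combine this with an LP-rounding step that loses an $O(\log d)$ factor (for instance a set-cover-style randomised rounding of the lower-bound constraint $\sum_W c_W \mathbf{1}[W \cap T_i \ne \emptyset] \ge f_i$, exploiting that each $T_i$ has at most $d$ elements). The central structural ingredient is a sparsification lemma showing that any coverage extension can be replaced, on the defined sets, by one supported on profiles of size at most $k$ with multiplicative loss roughly $d/k$; balancing $k$ so that $m^k$ remains polynomial while $d/k$ matches $m^{2/3}$ produces the bound, and the algorithm outputs the better of the modular and bounded-profile solutions.

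The main obstacle is precisely this sparsification: the elementary singleton or block-partition reductions cannot beat the $d$ barrier, so in the regime $d \gg m^{2/3}$ one must leverage additional structure---intuitively, that $|T_i| = d$ forces significant overlap between profiles and defined sets---to convert a general coverage representation into one on controlled-size profiles with only an $O(m^{2/3})$ loss. Ensuring that the rounding stays within coverage functions (nonnegative coefficients on profiles) while attaining the $O(\log d)$ overhead is the second technical piece.
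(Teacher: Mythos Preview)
Your modular-LP argument for the $d$-approximation is correct and is genuinely different from the paper. Restricting to singleton profiles and solving $\min\alpha$ subject to $f_i\le\sum_{j\in T_i}c_j\le\alpha f_i$, $c_j\ge 0$ is a polynomial LP; feeding in $c_j=f^*(\{j\})$ from an optimal coverage extension $f^*$ shows $\alpha_{\mathrm{mod}}\le d\alpha^*$ via subadditivity and monotonicity, and any feasible $c$ is itself a coverage function. In fact this gives a $d$-approximation for \emph{all} $d$, not only constant $d$, so on this piece your argument is both simpler and slightly stronger than the paper's (which only gets $d\log d$ for general $d$, and $d$ only when $d$ is constant). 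The paper instead works on the Farkas dual: it introduces a combinatorial \emph{replacement ratio} $\kappa$ (the minimum over $i$ of the cheapest way to cover $T_i$ by other $T_j$'s, normalised by $f_i$), proves $1/\kappa\le\alpha^*\le d/\kappa$ via a Hall-matching argument, and recovers $\kappa$ up to $\log d$ by set-cover approximation---that is where the $\log d$ lives in their framework, not in any LP rounding.

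The $m^{2/3}$ half of your proposal, however, has a real gap. Your plan is to enlarge the LP to profiles of size at most $k$ and invoke a sparsification lemma with loss $d/k$, then ``balance $k$ so that $m^k$ remains polynomial while $d/k$ matches $m^{2/3}$.'' These two requirements are incompatible: polynomiality forces $k=O(1)$, so $d/k=\Theta(d)$, and you never beat the modular bound in the interesting regime $d\gg m^{2/3}$. No choice of $k$ makes the arithmetic work, and you yourself flag the sparsification lemma as the unproven ``main obstacle.'' There is also no reason, in your primal framework, for a $\log d$ rounding loss---the bounded-profile LP already outputs a coverage function---so that factor is floating without a home.

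The paper's route to $m^{2/3}$ is entirely different and does not go through restricted primal LPs at all. It stays in the dual picture: given integral dual multipliers $l_i$ satisfying $\sum_{i:\,S\cap T_i\ne\emptyset}l_i\le 0$ for all $S$, it builds a bipartite graph on $\mathcal{D}\cup[m]$, separates the positive and negative parts $\mathcal{P},\mathcal{N}$, and shows that one can cover each $v\in\mathcal{P}$ by a replacement set in $\mathcal{N}$ while using each negative vertex at most $m^{2/3}$ times. The key structural step is a ``diamond'' argument: when the largest common neighbourhood between a positive and a negative vertex has size at most $m^{1/3}$ a Cauchy--Schwarz bound gives $m^{2/3}|N^-(S)|\ge\sum_{j\in S}|N^+(j)|$ and Hall's theorem finishes; otherwise one iteratively peels off large diamonds (each vertex participates in at most $m^{2/3}$ of them) and applies the bound to the residual graph. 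This yields $\alpha^*\le m^{2/3}/\kappa$, and again the $\log d$ enters only through approximating $\kappa$.
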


In Coverage Norm Extension,  $OPT = 0$ iff the partial function is extendible and  hence no multiplicative approximation is possible for $OPT$ unless P = NP (because of Theorem \ref{extension-lb}).  We  hence consider additive approximations for Coverage Norm Extension.   An algorithm for Coverage Norm Extension is called an $\alpha$-approximation algorithm if for all instances (partial functions), the value $\beta$ returned by the algorithm  satisfies $OPT \le \beta \le OPT + \alpha$.
We show nearly tight upper and lower bounds on the hardness of approximation. 
As defined before $F = \sum_{i \in [n]} f_i$. Note that an $F$-approximation algorithm is trivial, since the function $f=0$ is coverage and satisfies $\sum_{i \in [n]} |f(T_i) - f_i| \le F$. 
%$|T_i| = 2$ for all $i \in [n]$.
%\footnote{we believe Theorem \ref{extension-lb} is also true under this restriction}.
\begin{theorem}
	\label{norm-ub}
	There is a $(1- 1/d) F$-approximation algorithm for Coverage Norm Extension. Moreover,  a coverage function $f$ can be efficiently computed such that $\sum_{i \in [n]} |f(T_i)-f_i| \le OPT + (1- 1/d) F$.
\end{theorem}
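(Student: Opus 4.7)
The plan is to restrict attention to modular coverage functions, i.e.\ those of the form $f(S) = \sum_{j \in S} x_j$ with $x_j \ge 0$. Every such function is coverage: take universe $[m]$ with weight $x_j$ at element $j$ and $A_j = \{j\}$. I would set up and solve the linear program
\begin{align*}
\min \quad & \sum_{i \in [n]} \epsilon_i \\
\text{s.t.} \quad & -\epsilon_i \le \sum_{j \in T_i} x_j - f_i \le \epsilon_i, \quad i \in [n],\\
& x_j \ge 0,\; \epsilon_i \ge 0,
\end{align*}
which has $O(m+n)$ variables and constraints and is solvable in polynomial time. The output coverage function is the modular $f$ induced by the optimal $x^*$.

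Denote the LP optimum by $M$. To bound $M$, I construct an explicit feasible solution from an optimal coverage function $f^*$ (achieving $OPT$). Set $x_j := f^*(\{j\})/d$ and let $g$ be the resulting modular function. By subadditivity of $f^*$, $\sum_{j \in T_i} f^*(\{j\}) \ge f^*(T_i)$, so $g(T_i) \ge f^*(T_i)/d$; by monotonicity and $|T_i| \le d$, $\sum_{j \in T_i} f^*(\{j\}) \le |T_i|\, f^*(T_i) \le d\, f^*(T_i)$, so $g(T_i) \le f^*(T_i)$. Thus $g(T_i) \in [f^*(T_i)/d,\; f^*(T_i)]$.

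Writing $e_i := |f^*(T_i) - f_i|$ (so $\sum_i e_i = OPT$), I would then bound $|g(T_i) - f_i|$ via a three-case analysis on the position of $f_i$ relative to the interval $[f^*(T_i)/d, f^*(T_i)]$. If $f_i \le f^*(T_i)/d$ then $|g(T_i)-f_i| \le f^*(T_i)-f_i = e_i$; if $f_i \ge f^*(T_i)$ then $|g(T_i)-f_i| \le f_i - f^*(T_i)/d = e_i + (1-1/d)f^*(T_i) \le e_i + (1-1/d)f_i$; and if $f_i$ lies inside the interval, the maximum distance is at most $f^*(T_i) - f^*(T_i)/d = (1-1/d)f^*(T_i) = (1-1/d)(f_i + e_i) \le (1-1/d)f_i + e_i$. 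In every case $|g(T_i) - f_i| \le (1-1/d)f_i + e_i$; summing over $i$ yields $\sum_i |g(T_i) - f_i| \le (1-1/d)F + OPT$, hence $M \le OPT + (1-1/d)F$.

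The only real obstacle is showing that modular functions, a strict subclass of coverage, already suffice for the claimed additive guarantee; the rest is mechanical. Crucially, although the feasible $g$ above is defined in terms of the unknown $f^*$, it is only used as an analytic witness bounding the LP value, while the algorithm itself merely solves the modular LP.
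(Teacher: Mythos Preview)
Your proof is correct, and you end up solving exactly the same linear program as the paper: restricting the coverage weights $w(S)$ to singletons is precisely your modular LP, so your $M$ equals the paper's $OPT^R$. The lower bound $M \ge OPT$ is immediate since modular functions are coverage, and your case analysis establishing $M \le OPT + (1-1/d)F$ is sound (in particular, subadditivity and monotonicity of $f^*$ give the sandwich $f^*(T_i)/d \le g(T_i) \le f^*(T_i)$ you need).

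Where you diverge from the paper is in the \emph{analysis}, not the algorithm. The paper proves the same inequality via the dual: it takes the optimal dual solution $y^R$ of the singleton-restricted program, scales the positive coordinates down by $1/d$, and checks that the resulting $y^A$ is feasible for the full dual Norm-D, so $\sum_i f_i y^A_i \le OPT$; then $OPT^R = \sum_i f_i y_i^R \le \sum_i f_i y_i^A + (1-1/d)F$. You instead work on the primal side: from an optimal coverage $f^*$ you manufacture an explicit modular witness $g$ with $x_j = f^*(\{j\})/d$ and bound its $L_1$ error directly. Your route avoids LP duality entirely and is arguably more elementary; the paper's dual-scaling argument is shorter and sidesteps the three-way case split. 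It is worth noting that the two arguments are morally dual to one another: the paper's factor-$1/d$ scaling of positive dual variables mirrors your factor-$1/d$ scaling of the singleton values $f^*(\{j\})$.
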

\begin{theorem}
	\label{norm-lb}
	It is NP-hard  to approximate Coverage Norm Extension by a  factor $\alpha = 2^{poly(n,m)} F^\delta$ for any fixed $0 \le \delta <1$. This holds  even when $d = 2$.
	%$|T_i| = 2$ for all $i \in [n]$.
\end{theorem}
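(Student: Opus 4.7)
The plan is to amplify the NP-hardness of the decision problem (Theorem~\ref{extension-lb}) by uniformly rescaling all values in a hard instance. Since $\delta < 1$, multiplying every $f_i$ by a factor $\lambda$ grows $\mathrm{OPT}$ linearly while $F^\delta$ grows only sublinearly, so a poly-bit choice of $\lambda$ suffices to push any positive $\mathrm{OPT}$ past $\alpha = 2^{\mathrm{poly}(n,m)} F^\delta$.

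The first ingredient needed is NP-hardness of Coverage Extension already at $d = 2$. The fractional graph colouring reduction underlying Theorem~\ref{extension-lb} encodes graph edges as defined pairs, so it is natural to expect that the reduction itself produces $d = 2$ instances; if it does not, a direct variant works by giving each vertex $i$ a singleton value $f(\{i\})$ and each edge $\{i,j\}$ the value $f(\{i\}) + f(\{j\})$, forcing the atoms $A_i$ and $A_j$ to be disjoint in any coverage extension. The second ingredient is a granularity lemma: in NO instances of such a reduction, $\mathrm{OPT} \ge 1/2^{q(n,m)}$ for some polynomial $q$. By the universe-size bound packaged with Theorem~\ref{extension-lb}, $\mathrm{OPT}$ is attained by a coverage function with $|U| \le n$; once the (at most $n$) atoms in $U$ are fixed, Coverage Norm Extension reduces to a linear program with $O(n)$ variables and constraints whose coefficients have polynomial bit-length. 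Standard basic-feasible-solution bounds then show that any nonzero LP optimum is at least $1/2^{\mathrm{poly}(n,m)}$, and because this bit-length bound is uniform over supports, the minimum over all atom choices is also $\ge 1/2^{\mathrm{poly}(n,m)}$.

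Given the target factor $\alpha = 2^{r(n,m)} F^\delta$, scale every value in the hard instance by $\lambda = 2^{p(n,m)}$, choosing $p$ so that $\lambda^{1-\delta} > 2^{q(n,m)+r(n,m)} F_0^\delta$. This is achievable because $\delta < 1$ is fixed and $q$, $r$, $\log F_0$ are polynomial in $n, m$. After scaling, $\mathrm{OPT} = 0$ in YES instances and $\mathrm{OPT} \ge \lambda / 2^{q(n,m)} > \alpha$ in NO instances, so an additive $\alpha$-approximation algorithm would distinguish the two cases and hence solve an NP-hard decision problem. The main obstacle is the granularity lemma: one must justify polynomial LP size and coefficient bit-length \emph{uniformly} across all possible supports, not merely for a fixed one. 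Here the restriction $d = 2$ is what makes this tractable, since only the singleton values $f(\{i\})$ and pairwise-intersection weights $f(\{i\}) + f(\{j\}) - f(\{i,j\})$ enter the objective, collapsing the relevant parameterisation of coverage functions to polynomially many quantities.
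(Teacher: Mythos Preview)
Your gap-amplification via scaling is sound and is essentially the content of the paper's Lemma~\ref{norm-wv}; the granularity bound on a nonzero $\mathrm{OPT}$ also holds by the basic-feasible-solution argument you sketch, since for the hard instances of Theorem~\ref{extension-lb} the $f_i$ are small integers. The gap is entirely in the $d=2$ claim. The fractional-colouring reduction behind Theorem~\ref{extension-lb} does \emph{not} produce $d=2$ instances: the defined set $[n']$ with value $k$ is essential (it is precisely what encodes the bound on $\chi^*(G)$), and it has size $n'$. If you drop it, the remaining partial function with $f(\{i\})=1$ on vertices and $f(\{i,j\})=2$ on edges is always extendible---take each $A_i$ to be a distinct singleton of weight~$1$. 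Your ``direct variant'' fails for the same reason: setting $f(\{i,j\})=f(\{i\})+f(\{j\})$ merely forces $A_i$ and $A_j$ to be disjoint on edges, which is always satisfiable by disjoint singletons. So you have no NP-hard source of $d=2$ decision instances, and the scaling argument has nothing to amplify.

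The paper obtains $d=2$ by a route that bypasses the Coverage Extension decision problem entirely. It shows that \emph{Weak Membership} in the dual polytope $K$ is NP-hard already at $d=2$, via the chain Densest-Cut $\to$ Cut Weak Membership $\to$ Span Weak Membership $\equiv$ Coverage Weak Membership (Lemma~\ref{densest-cut-span}); it then invokes the Gr\"otschel--Lov\'asz--Schrijver reduction from Weak Membership to Weak Validity (Theorem~\ref{4.4.4}), and finally shows that an $\alpha$-approximation for Coverage Norm Extension solves Weak Validity (Lemma~\ref{norm-wv}). The idea you are missing is that at $d=2$ the hardness lives on the \emph{membership} side of the dual polytope, and transporting it to the validity/optimization side genuinely requires the ellipsoid-method equivalence rather than a direct gap argument.
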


Our lower bound is roughly based on the equivalence of \emph{validity} and \emph{membership}, where given a convex, compact set $K$, the validity problem is to determine the optimal value of $c^Tx$ given a vector $c$ over all $x \in K$, while the membership problem seeks to determine if a given point $x$ is in $K$ or not. The equivalence of optimization and separation is a widely used tool. The reduction from optimization to separation is particularly useful for, e.g., solving linear programs with exponential constraints. Our work is unusual in both the use of validity and membership rather than optimization and separation, and because of the direction --- we use the equivalence to show hardness of the validity problem. We hope that our techniques may be useful in future work as well.

\paragraph*{Related Work} We focus here on work related to partial function extension and coverage functions. In a separate paper, we study partial function extension to submodular, subadditive, and convex functions, showing results on the complexity as well as applications to learning and property testing~\cite{bhaskar2018partial}.
 Previously, Seshadri and Vondrak~\cite{submodularity}  introduce the problem of extending partial functions  to a submodular function, and note its usefulness in analyzing  property testing algorithms. For submodular functions, partial function extension is also useful in optimization \cite{topkis1978minimizing}. The problem of extending a partial function to a convex function is also studied in convex analysis \cite{Yan12, DragomirescuI92}. As mentioned earlier, both characterizing extendible partial functions, and the complexity of partial function extension has been studied for large classes of Boolean functions~\cite{BorosIM98,PittV88}.
%the problem of characterizing when a partial function can be extended to satisfy a given property. They note that such a characterizations would be helpful, and are often used in the analysis of property testing algorithms. 

Chakrabarty and Huang study   property testing for coverage functions~ \cite{coverage}. Here, the goal is to determine  whether the input function (given by an oracle) is coverage or far from coverage by  querying  an oracle, where distance is measured by the number of points at which the function must be changed for it to be coverage.   They show that succinct coverage functions can be reconstructed with a polynomial number of queries and hence can be efficiently tested. However, they conjecture that testing general coverage functions  requires  $2^{\Omega(m)}$ queries, and prove this lower bound under a different notion of distance. They present a particular characterization of coverage functions in terms of the $W$-transform that we use as well.
%To the best of our knowledge, no non trivial upper and lower bounds are known for general coverage functions under usual hamming distance. 

There has also been interest in sketching and learning coverage functions. Badanidiyuru et al. \cite{badanidiyuru2012sketching} showed that coverage functions admit a  $(1+ \epsilon)$-sketch, i.e.,  given any coverage function, there exists a succinct coverage function (of size polynomial  in $m$ and $1/\epsilon$) that approximates the original function within $(1+\epsilon)$ factor with high probability. %It  is an open problem if such an approximation exists in deterministic sense. 
 Feldman and Kothari \cite{feldman2014learning} gave a fully polynomial time algorithm for learning succinct coverage functions in the PMAC model if the distribution is uniform. However, if the distribution is unknown, they show learning coverage functions is as hard as learning polynomial size DNF formulas for which no efficient algorithm is known.
  
  % Many other variants of the learning and approximation problems on coverage functions  have also been studied, see `optimization from samples' [].
 Balkanski et al \cite{balkanski2017limitations} study whether coverage functions can be optimized from samples. They consider a scenario where random  samples  $\{(S_i,f(S_i))\}$ of an unknown coverage function $f$ are provided  and ask if it is possible to optimize $f$ under a cardinality constraint, i.e., solve $\max_{S: |S| \le k|}f(S)$. They prove a negative result: no algorithm can achieve approximation ratio better than $2^{\Omega(\sqrt{\log m})}$ with a polynomial number of sampled points.
 
% Recently Bhaskar and Kumar consider the extension problem for submodular and subadditive functions. In this paper, 
 
 %Like PMAC learning, some sampled points (and values) of a function from a family of functions (here coverage)

		\section{Preliminaries}

As earlier, for $m \in \mathbb{Z}_+$, define $[m] := \{1, 2, \ldots, m\}$. A set function $f$ over a ground set $[m]$ is a coverage function if there exists a universe $U$ of elements with non-negative weights  and $m$ sets $A_1,...,A_m \subseteq U$, such that for all $S \subseteq [m], f(S)$ 
 %= w(\cup_{j \in [m]} A_j)$.  Here, for a set $A \subseteq U$, $w(A)$ 
 is the total weight of elements in $\cup_{j \in S} A_j$. A coverage function is \emph{succinct} if $|U|$ is at most a fixed polynomial in $m$. 

  Chakrabarty and Huang \cite{coverage} characterize coverage functions in terms of their $W$-transform, which we use as well. For a set function $f:2^{[m]} \rightarrow \mathbb{R}_{\ge 0} $, the $W$-transform $w:2^{[m]} \setminus \emptyset \rightarrow \mathbb{R}$ is defined as 
  \begin{equation}
  \forall S \in  2^{[m]} \setminus \emptyset, \quad w(S) = \displaystyle\sum_{T: S \cup T = [m]} (-1)^{|S \cap T| +1} f(T)  \, .
  \label{eqn:wcoeff1}
  \end{equation} 	
  The set  $\{w(S)| S \in \Mplus \}$ is called the set of $W$-coefficients of $f$. We can also recover the function $f$ from its $W$-coefficients.
  \begin{equation}
  \forall T \subseteq  [m], \quad f(T) = \displaystyle\sum_{S \subseteq [m]: S \cap T \neq \emptyset} w(S)  \, .
  \label{eqn:wcoeff2}
  \end{equation}

  If $f$ is a coverage function induced by the universe $U$ and sets $A_1,\dots,A_m$, then the $W$-transform $w(S)$ is precisely the weight of the set $\{(\cap_{i \in S} A_i) \setminus \cup_{j \not \in  S} A_j\}$, and is hence non-negative. The converse is also true. The set $\{S| w(S) > 0\}$ is the called the \emph{support} of the coverage function, and the elements are exactly the elements of the universe $U$. %The following theorem states that the non-negativity of $W$-coefficients is a characterization of coverage functions. 

  \begin{theorem}\cite{coverage}
  	\label{mobius-inversion}
  	A set function $f:2^{[m]} \rightarrow \mathbb{R}_{\ge 0} $ is a coverage function iff all of its $W$-coefficients are non-negative. 
  \end{theorem}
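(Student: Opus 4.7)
The plan is to prove the two directions of the equivalence, using the representation of a coverage function in terms of ``atoms'' of the universe, together with an inclusion--exclusion identity that is essentially Möbius inversion on the lattice $2^{[m]}$.

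First I would verify that the two formulas~(\ref{eqn:wcoeff1}) and~(\ref{eqn:wcoeff2}) actually invert each other, i.e.\ that if $w$ is defined from $f$ by~(\ref{eqn:wcoeff1}), then $f$ can be recovered from $w$ via~(\ref{eqn:wcoeff2}). Substituting~(\ref{eqn:wcoeff1}) into the right-hand side of~(\ref{eqn:wcoeff2}) and swapping sums, this reduces to showing that for every $T^\prime$,
\[
\sum_{S\in \Mplus:\ S\cap T\neq\emptyset,\ S\cup T^\prime=[m]} (-1)^{|S\cap T^\prime|+1} \;=\; \mathbf{1}[T^\prime=T].
\]
Writing $S = ([m]\setminus T^\prime)\cup R$ with $R\subseteq T^\prime$ turns the inner sum into $\sum_{R}(-1)^{|R|+1}$ over certain subsets of $T^\prime$; standard alternating-sum cancellation shows the sum vanishes unless $T^\prime=T$, in which case a single surviving term contributes $1$.

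For the forward direction, suppose $f$ is coverage with universe $U$, weights $w_U$, and sets $A_1,\dots,A_m$. For $u\in U$ let $S(u)=\{j:u\in A_j\}$; then $f(T)=\sum_{u:S(u)\cap T\neq\emptyset} w_U(u)$. Plugging into~(\ref{eqn:wcoeff1}) and swapping sums, the coefficient of $w_U(u)$ becomes
\[
\sum_{T:\ S\cup T=[m],\ S(u)\cap T\neq\emptyset} (-1)^{|S\cap T|+1}.
\]
Parametrizing $T = ([m]\setminus S)\cup R$ with $R\subseteq S$ and splitting into the cases $S(u)\not\subseteq S$, $S(u)\subsetneq S$, and $S(u)=S$, an inclusion--exclusion computation (identical in flavor to the one above) shows this coefficient is $1$ if $S(u)=S$ and $0$ otherwise. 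Hence $w(S)=\sum_{u:S(u)=S} w_U(u) \ge 0$, which also identifies $w(S)$ with the weight of the ``atom'' $(\bigcap_{i\in S}A_i)\setminus(\bigcup_{j\notin S}A_j)$ as stated in the paper.

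For the backward direction, assume $w(S)\ge 0$ for all $S\in\Mplus$. Build a universe $U=\{u_S:w(S)>0\}$, assign weight $w_U(u_S)=w(S)$, and put $u_S\in A_j$ iff $j\in S$. The coverage function $g$ induced by this construction satisfies $g(T)=\sum_{S:S\cap T\neq\emptyset} w(S)$, which equals $f(T)$ by the inversion formula~(\ref{eqn:wcoeff2}) established in the first step. Thus $f=g$ is coverage. The main technical content is really the Möbius-type identity used in the first two steps; once that is in hand, both directions are immediate, so I expect the cancellation computation (and in particular isolating the three cases based on the relationship between $S(u)$ and $S$) to be the only delicate part.
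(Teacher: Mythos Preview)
The paper does not actually prove this theorem: it is quoted from Chakrabarty and Huang~\cite{coverage}, with only the one-sentence remark preceding the statement that for a coverage function the $W$-coefficient $w(S)$ equals the weight of the atom $(\bigcap_{i\in S}A_i)\setminus\bigcup_{j\notin S}A_j$, and that ``the converse is also true.'' There is nothing further to compare against.

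Your sketch is the standard argument and is essentially correct. The forward direction is exactly the identification of $w(S)$ with the weight of an atom that the paper mentions; your inclusion--exclusion computation to justify it is the right one. The backward direction is the natural construction (take the support of $w$ as the universe), and it relies only on the inversion identity~(\ref{eqn:wcoeff2}), which you verify first. One small point to be careful about when you write it up: in the inversion check, the parametrization $S=([m]\setminus T')\cup R$ with $R\subseteq T'$ requires you to track the side conditions $S\neq\emptyset$ and $S\cap T\neq\emptyset$ simultaneously, and the boundary cases $T'=[m]$ and $T'=\emptyset$ (and the value $f(\emptyset)$) deserve a line each; none of them causes trouble, but they are easy to overlook in the alternating-sum cancellation.
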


 From Theorem \ref{mobius-inversion}, given a partial function $H$, there exists a coverage function $f$ satisfying $f(T_i) = f_i$ for all $ i \in [n]$ iff  the following linear program   is feasible, where the variables are the $W$-coefficients $w(S)$ for all $S \in  2^{[m]} \setminus \emptyset$. \\
 %$\hspace*{4ex}$ 

 \[  \text{Extension-P: } \qquad \displaystyle\sum_{S : S \cap T_i \neq \emptyset } w(S) = f_i \quad \forall i \in [n]\, , \qquad  w(S) \ge 0 \quad  \forall S \in  2^{[m]}\setminus \emptyset. \]
 
  All missing proofs are in the appendix.
  
 \section{Coverage Extension and PAC-Learning}
% We now show Theorem \ref{extension-lb},\ref{extension-ub} and \ref{extension-tight}.
Our first observation is that there is a polynomial-sized certificate of extendibility to a coverage function. This is obtained by observing that at a vertex of the feasible set in Extension-P, at most $n$ of the variables are non-zero. It is interesting to compare this with Chakrabarty and Huang~\cite{coverage}, who give an example to show that minimal certificates of nonextendibility may be of exponential size.

 \begin{proposition}
 	\label{succinct}
 	If a partial function is extendible to a coverage function, then it is also extendible to a coverage function with support size $\le n$. Hence, Coverage Extension is in NP.
 \end{proposition}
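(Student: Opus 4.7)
The plan is to reduce the statement to a standard fact about basic feasible solutions of linear programs, applied to Extension-P. By Theorem~\ref{mobius-inversion}, the partial function is extendible to a coverage function if and only if the linear program Extension-P is feasible, and the universe of the resulting coverage function corresponds exactly to the support $\{S : w(S) > 0\}$. So it suffices to show that whenever Extension-P is feasible, it admits a feasible solution with at most $n$ positive coordinates.

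To produce such a solution I would start from any feasible $w$ and move to a basic feasible solution (a vertex) $w^*$ of the polyhedron defined by Extension-P. Although Extension-P has exponentially many variables (one $w(S)$ per non-empty $S \subseteq [m]$), it has only $n$ equality constraints plus non-negativity, so standard LP theory guarantees that any vertex has at most $n$ non-zero coordinates. Thus the support $\mathcal{S} := \{S : w^*(S) > 0\}$ has $|\mathcal{S}| \le n$, and by~\eqref{eqn:wcoeff2} together with Theorem~\ref{mobius-inversion} this yields a coverage extension whose universe has size $|\mathcal{S}| \le n$, proving the first sentence of the proposition.

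For NP membership, I would take as a certificate the (at most $n$) subsets in $\mathcal{S}$ together with the values $\{w^*(S)\}_{S \in \mathcal{S}}$. The verifier checks $w^*(S) \ge 0$ for each $S \in \mathcal{S}$ and that $\sum_{S \in \mathcal{S}:\, S \cap T_i \neq \emptyset} w^*(S) = f_i$ for every $i \in [n]$, which takes polynomial time using only the information in the certificate and the input partial function.

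The one subtlety, which I expect to be the main point to address carefully, is that since Extension-P has exponentially many variables, I need the basic feasible solution itself to have a polynomially bounded bit-length so that the certificate is of polynomial size. This is handled by the observation that once the support $\mathcal{S}$ is fixed, the values $\{w^*(S)\}_{S \in \mathcal{S}}$ are determined by (a subsystem of) the $n$ equality constraints restricted to these coordinates, i.e.\ a linear system of size at most $n \times n$ with $0/1$ coefficients and right-hand sides $f_1,\dots,f_n$; by Cramer's rule (or the usual Hadamard-type determinant bound), the entries of the unique solution are rationals whose encoding length is polynomial in the input. This yields a polynomial-size certificate and completes the NP membership claim.
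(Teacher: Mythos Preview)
Your proposal is correct and follows essentially the same approach as the paper: both argue that Extension-P, being a nonempty polyhedron with nonnegative variables and only $n$ equality constraints, has a vertex (basic feasible solution) with at most $n$ nonzero coordinates. You additionally spell out the bit-length argument for the NP certificate via Cramer's rule, which the paper leaves implicit; this is a welcome bit of extra care but does not change the underlying strategy.
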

% \begin{proof}	
  % \end{proof}
We show the  NP-hardness of Coverage Extension by reduction from \emph{fractional chromatic number}, defined as follows. Given a graph $G = (V,E)$, a set $I \subseteq V$ is called an \emph{independent set} if no two vertices in $I$ are adjacent. Let $\mathcal{I}$ be the set of all independent sets.  The fractional chromatic number $\chi^*(G)$ of a graph $G$ is the optimal value of the following  linear program.
\[\chi^*(G) := \left \{\min \sum_{I \in \mathcal{I}} x_I : \sum_{I \in \mathcal{I}: v \in I} x_I \ge 1 \quad \forall v \in  V(G), \, 0 \le x_I \le 1 \quad \forall  I \in \mathcal{I} \right \}\]
Note that if $x_I \in \{0,1\}$ then the optimal value is just the chromatic number of the graph.\footnote{The chromatic number of a graph is the  minimum number of colours required to colour the vertices so that no two adjacent vertices get the same colour.}
\begin{theorem}\cite{godsil2013algebraic}
	\label{frac-colouring}
	For graph $G=(V,E)$, there exist nonnegative weights $\{x_I\}_{I \in \mathcal{I}}$ on independent sets such that $\chi^*(G) = \sum_{I \in \mathcal{I}} x_I$ and $\sum_{I \in \mathcal{I}: v \in I} x_I = 1 \quad \forall v \in  V$.
\end{theorem}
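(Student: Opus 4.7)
The plan is to start from any optimal solution $\{x^*_I\}_{I \in \mathcal{I}}$ of the LP defining $\chi^*(G)$ and iteratively tighten each vertex constraint without changing the objective. First, I note that at an optimum the box constraint $x_I \le 1$ can be assumed nonbinding: if $x^*_I > 1$ then lowering it to $1$ strictly decreases the objective while keeping all covering constraints satisfied, so it suffices to work with the relaxation having only $x_I \ge 0$.

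The core move is the following. Suppose $v$ is a vertex with positive slack, i.e., $\sum_{I \ni v} x^*_I = 1 + s_v$ for some $s_v > 0$. Let $S_v = \{I \in \mathcal{I} : v \in I,\ x^*_I > 0\}$; its total weight is $1 + s_v$. Choose $0 \le \delta_I \le x^*_I$ for $I \in S_v$ with $\sum_{I \in S_v} \delta_I = s_v$, and update $x^*_I \gets x^*_I - \delta_I$ and $x^*_{I \setminus \{v\}} \gets x^*_{I \setminus \{v\}} + \delta_I$ for each such $I$. Since $I \setminus \{v\}$ is an independent set (as a subset of an independent set), this update stays within the LP variables, and $\sum_I x^*_I$ is preserved, so optimality is maintained.

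The key verification, which I expect to be the main obstacle, is that this step tightens the constraint at $v$ without disturbing the others. For any $u \ne v$, membership $u \in I$ is equivalent to $u \in I \setminus \{v\}$, so the paired change cancels in $\sum_{J \ni u} x^*_J$; hence every previously tight constraint stays tight, every strict constraint stays strict, and no constraint is newly violated. Meanwhile, $\sum_{J \ni v} x^*_J$ drops by exactly $s_v$ to $1$, so $v$ becomes tight and remains tight in all subsequent iterations (since later moves are applied to other slack vertices and, by the same cancellation argument, leave the sum at $v$ unchanged). Iterating this big step at most $|V|$ times produces an optimal solution in which every vertex constraint is an equality, which is precisely the claim.
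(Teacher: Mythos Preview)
The paper does not give its own proof of this statement; Theorem~\ref{frac-colouring} is quoted from \cite{godsil2013algebraic} and used as a black box (to derive Corollary~\ref{equality-x}). So there is nothing in the paper to compare your argument against.

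That said, your argument is correct and self-contained. The weight-shifting step $x^*_I \mapsto x^*_I - \delta_I$, $x^*_{I\setminus\{v\}} \mapsto x^*_{I\setminus\{v\}} + \delta_I$ does exactly what you claim: it preserves $\sum_I x_I$, preserves $\sum_{J \ni u} x_J$ for every $u\neq v$, and reduces $\sum_{J \ni v} x_J$ by $s_v$. Since $\sum_{I\in S_v} x^*_I = 1+s_v \ge s_v$, the required $\delta_I$'s always exist. One cosmetic point worth making explicit: if $I=\{v\}$ then $I\setminus\{v\}=\emptyset$, so you are tacitly allowing $\emptyset\in\mathcal{I}$ (equivalently, carrying a dummy variable $x_\emptyset$). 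This is harmless for the theorem as stated, since the claim only asks for nonnegative weights summing to $\chi^*(G)$ with all vertex sums equal to $1$; and in fact once every vertex sum equals $1$, any nonempty $I$ automatically has $x_I\le 1$, so the box constraint is recovered for free. Your observation that later iterations cannot disturb already-tightened vertices is exactly the cancellation argument, so termination after at most $|V|$ big steps is clear.
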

 \begin{corollary}
 	\label{equality-x}
 For graph $G=(V,E)$ and for any value of $t$ such that $\chi^*(G) \le t \le |V|$, there exist nonnegative weights $\{z_I\}_{I \in \mathcal{I}}$ on independent sets such that  $ \sum_{I \in \mathcal{I}} z_I = t$ and $\sum_{I \in \mathcal{I}: v \in I} z_I = 1 \quad \forall v \in  V$.
 \end{corollary}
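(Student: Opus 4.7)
The plan is to exhibit two canonical feasible weight assignments on independent sets, one realizing the value $\chi^*(G)$ and the other realizing the value $|V|$, and then interpolate between them by a convex combination to hit any intermediate target $t$.

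First, I would invoke Theorem~\ref{frac-colouring} to obtain nonnegative weights $\{x_I\}_{I \in \mathcal{I}}$ with $\sum_{I \in \mathcal{I}} x_I = \chi^*(G)$ and $\sum_{I \in \mathcal{I}: v \in I} x_I = 1$ for every $v \in V$. Next, I would define a second assignment $\{y_I\}_{I \in \mathcal{I}}$ supported on the singleton independent sets: set $y_{\{v\}} = 1$ for each $v \in V$ and $y_I = 0$ for every non-singleton $I$. (Every singleton is trivially independent, so this is well defined.) Then $\sum_{I \in \mathcal{I}} y_I = |V|$ and $\sum_{I \in \mathcal{I}: v \in I} y_I = 1$ for every $v$.

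Now I would take the convex combination: given $t$ with $\chi^*(G) \le t \le |V|$, choose $\lambda \in [0,1]$ so that $t = \lambda \chi^*(G) + (1-\lambda)|V|$, and set $z_I := \lambda x_I + (1-\lambda) y_I$ for each $I \in \mathcal{I}$. Nonnegativity of $z_I$ follows immediately. Summing gives $\sum_{I} z_I = \lambda \chi^*(G) + (1-\lambda)|V| = t$, and for each $v \in V$,
\[
\sum_{I \in \mathcal{I}: v \in I} z_I \;=\; \lambda \sum_{I \in \mathcal{I}: v \in I} x_I + (1-\lambda) \sum_{I \in \mathcal{I}: v \in I} y_I \;=\; \lambda \cdot 1 + (1-\lambda) \cdot 1 \;=\; 1,
\]
as required. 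There is no real obstacle here; the only thing to be careful about is that $|V|$ is indeed an upper endpoint achievable by a feasible assignment, which is exactly what the singleton construction guarantees, and that the constraint set defining these assignments is convex so that the interpolation preserves feasibility.
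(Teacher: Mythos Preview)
Your proposal is correct and is essentially identical to the paper's own proof: the paper also takes the optimal fractional colouring $x$ from Theorem~\ref{frac-colouring}, the singleton assignment $y$, and the convex combination $z = \lambda x + (1-\lambda)y$ with $\lambda = \frac{|V|-t}{|V|-\chi^*(G)}$, which is exactly the $\lambda$ your equation $t = \lambda\chi^*(G) + (1-\lambda)|V|$ determines.
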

 %\begin{proof}	
% By the Theorem \ref{frac-colouring}, there exists $\{x_I\}_{I \in \mathcal{I}}$  ($\{x_I\} \ge 0$ for all $I$), such that $\chi^*(G) = \sum_{I \in \mathcal{I}} x_I$ and $\sum_{I \in \mathcal{I}: v \in I} x_I = 1 \quad \forall v \in  V(G)$.
% \end{proof}
 \begin{theorem}\cite{khot2001improved}
 	\label{colouringhard}
 	Given graph $G=(V,E)$ and $1 \le k \le |V|$, it is NP-hard to determine if $\chi^*(G) \le k$.
 \end{theorem}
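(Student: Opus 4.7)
The plan is to reduce from a known NP-hard coloring problem, but the fundamental difficulty is that $\chi^*(G)$ can be substantially smaller than $\chi(G)$ --- Kneser graphs being the canonical witnesses --- so a naive reduction from $k$-colorability will not work. What is needed is a graph family on which the integrality gap of the fractional coloring LP is tightly controlled, so that a gap in $\chi$ between YES and NO instances of a hard problem is preserved at the level of $\chi^*$.

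The first route I would try is to exploit the identity $\chi^*(G) = |V(G)|/\alpha(G)$ for vertex-transitive graphs, where $\alpha(G)$ is the independence number. This would reduce computing $\chi^*$ to computing $\alpha$ on vertex-transitive graphs. Since maximum independent set is NP-hard, one would try to ``symmetrize'' an arbitrary graph into a vertex-transitive one whose independence number is a known function of the original --- for example via a lexicographic product with a highly transitive graph, or a Cayley-like construction whose independent sets are determined by those of the input. LP-duality can also be used on the way, since $\chi^*(G)$ equals the fractional clique number $\omega^*(G)$, and the separation oracle for that LP is precisely maximum weight independent set.

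The more robust route, which I would actually pursue for the sharper statement matching Khot's reference, uses PCP-based machinery. Starting from a hard Label Cover instance, apply a Long-Code / FGLSS-style vertex replacement to build a graph $H$. In YES instances, a small fractional coloring arises naturally from the PCP verifier's accepting configurations; in NO instances one must upper-bound $\alpha(H)$ via a Fourier-analytic soundness analysis and then invoke $\chi^*(H) \ge |V(H)|/\alpha(H)$. Setting the threshold $k$ between the resulting YES and NO values gives NP-hardness of deciding $\chi^*(G) \le k$. The main obstacle is exactly this last soundness step: because the fractional chromatic LP has one variable per independent set, ordinary PCP soundness (which controls only integer colorings) is not enough, and one must rule out arbitrary fractional weightings on independent sets, typically via averaging or influence arguments on the Long Code.
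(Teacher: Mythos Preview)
The paper does not prove this theorem at all: it is stated with a citation to Khot~\cite{khot2001improved} and used as a black box in the reduction establishing Theorem~\ref{extension-lb}. There is nothing to compare your proposal against in the paper itself.

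As for the proposal on its own merits, it is a plan rather than a proof. Your second route (PCP / Label Cover with Long-Code gadgets, then bounding $\alpha(H)$ to lower-bound $\chi^*(H)\ge |V(H)|/\alpha(H)$) is indeed the shape of Khot's argument, and you have correctly identified the crux: soundness must control \emph{all} independent sets, not merely proper colorings, so that the bound transfers to the fractional relaxation. Your first route, however, has a real gap. The identity $\chi^*(G)=|V(G)|/\alpha(G)$ for vertex-transitive $G$ is fine, but you have not given any construction that takes an arbitrary graph $G$ to a vertex-transitive $G'$ with $\alpha(G')$ a computable function of $\alpha(G)$; lexicographic products and standard Cayley-type constructions do not in general yield vertex-transitive outputs from non-transitive inputs while preserving enough information about $\alpha$. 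Absent such a gadget, that route does not close. If you want a self-contained hardness statement sufficient for the paper's purposes, the cleanest path is exactly the one you sketch second: cite or reproduce a PCP-based inapproximability result for independent set strong enough that $|V|/\alpha$ separates YES from NO instances, and observe that $\chi^*\ge |V|/\alpha$ always holds.
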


We now show the NP-hardness of Coverage Extension.

 \begin{proof}[Proof of Theorem~\ref{extension-lb}.]
 Since membership in NP was shown earlier, we give the reduction from  fractional chromatic number.  The input is a graph $G=(V,E)$ and  $1 \le k \le |V|$. 
 	
 	We identify $[n']$ with the set of vertices $V$, and therefore $E(G) \subseteq \{\{i,j\}| i,j \in [n']\}$, and any set $S \subseteq [n']$ can be viewed as a set of vertices. The partial function construction is as follows. The ground set is $[n']$ and therefore $m = n'$. The partial function is defined at all vertices, all edges, and the set consisting of all vertices. Hence $\mathcal{D} = \{\{i\}| i \in [n']\} \cup E(G) \cup \{[n']\}$ and $|\mcD| = n' + |E(G)| + 1$. 
 %	Let $\mcD_1 = \{\{i\}|i \in [n']\}, \mcD_2 = \{\{i,j\}|\{i,j\} \in E(G)\}$ and $\mcD_3 = [m]$.
 	 The value of the partial function $h$ at these defined sets is given by 
 		 \[   
 		 h(S) = 
 		 \begin{cases}
 		 1 &\quad\text{if} \quad S  = \{i\}, \, i \in [n']\, ,\\
 		 2 &\quad\text{if} \quad S \in  E(G) \, ,\\
 		 k &\quad\text{if } \quad S = \{[n']\}\, .
 		 \end{cases}
 		 \]
 We claim that $\chi^*(G) \le k $ iff the above partial function is extendible. Suppose $\chi^*(G) \le k $. Therefore by Corollary \ref{equality-x}, there exist nonnegative weights $\{x_I\}_{I \in \mathcal{I}}$ such that  $ \sum_{I \in \mathcal{I}} x_I = k$ and $\sum_{I \in \mathcal{I}: v \in I} x_I = 1 \quad \forall v \in  V(G)$. For all $S \in 2^{[m]} \setminus \emptyset$, define the function $w(S)$ as $x_S$ if $S \in \mathcal{I}$ and $0$ otherwise. Since $w(S) \ge 0$, this defines the $W$-transform for a coverage function $g$. We have, for any $i \in [n']$, 
 
 \[
 g(\{i\}) = \sum_{S : S \cap \{i\} \neq \emptyset } w(S) = \sum_{I \in \mathcal{I}: i \in I} x_I = 1 \, ,
 \]
 
 \noindent for any $\{i,j\} \in E(G)$, 
 
 \[
 g(\{i,j\}) = \sum_{S : S \cap \{i,j\} \neq \emptyset } w(S) = \sum_{I \in \mathcal{I}: i \in I} x_I +  \sum_{I \in \mathcal{I}: j \in I} x_I = 2
 \]
 
 \noindent as no independent set $I$ can  contain both $i$ and $j$; and finally $g(\{[n']\}) = \sum_{S : S \cap \{[n']\} \neq \emptyset } w(S) = \sum_{I \in \mathcal{I}} x_I = k$. Therefore $g$ is an extension of the above partial function $h$.
 
 Now suppose there is an extension, i.e., there exists $w(S) \ge 0$ for all $S \in \Mplus$ such that for any $i \in [n']$, $\sum_{S : S \cap \{i\} \neq \emptyset } w(S)  = 1$; for any $\{i,j\} \in E(G),  \sum_{S : S \cap \{i,j\} \neq \emptyset } w(S)  = 2$; and finally $ \sum_{S : S \cap \{[n']\} \neq \emptyset } w(S) = k$. For any $\{i,j\} \in E(G)$, we have 
 
 \[
 \sum_{S : S \cap \{i,j\} \neq \emptyset } w(S)  = \sum_{S : S \cap \{i\} \neq \emptyset } w(S)  + \sum_{S : S \cap \{j\} \neq \emptyset } w(S) - \sum_{S : S \supseteq \{i,j\} } w(S) \, .
 \]
 
 \noindent Therefore,$\sum_{S : S \supseteq \{i,j\} } w(S) = 0$, i.e.,  if $w(S) > 0$ then $S$ must be an independent set.  It now follows that $\chi^*(G) \le \sum_{S : S \cap \{[n']\} \neq \emptyset } w(S) = k$.
 \end{proof}

  \subsection*{Proper PAC-learning of Coverage functions}

We now prove Theorem~\ref{pac-learning}. We first recall the definition of PAC-learning.

  \begin{definition}[\cite{BalcanH11}]
  	An algorithm $\mathcal{A}$ properly PAC-learns a family of functions $\mathcal{F} $, if for \emph{any} distribution $\mu$ (on $2^{[m]}$) and \emph{any} target function $f^* \in \mathcal{F}$,  and  for \emph{any} sufficiently small $\epsilon,\delta >0$:
  	\begin{enumerate}
  		\item $\mathcal{A}$ takes the sequence  $\{(S_i,f^*(S_i))\}_{1 \le i \le l}$  as input where  $l$ is $poly(m,1/\delta,1/\epsilon)$ and the sequence $\{S_i\}_{1 \le i \le l}$ is drawn i.i.d. from the distribution $\mu$, % The value of $l$ must be $poly(m,1/\delta,1/\epsilon)$ %number of samples drawn from distribution $D^*$ (called training examples) as input.
  		\item $\mathcal{A}$  runs in $poly(m,1/\delta,1/\epsilon)$ time, and
  		\item $\mathcal{A}$  returns a function $f:2^{[m]} \rightarrow \mathbb{R} \in \mcF$ such that 
  		\[
  		Pr_{S_1,\dots,S_l \sim \mu} \big[Pr_{S \sim \mu} [  f(S) = f^*(S)] \ge 1 - \epsilon \big] \ge 1 - \delta
  		\]
  	\end{enumerate}
  \end{definition}

We use the reconstruction algorithm for coverage functions given by Chakrabarty and Huang~\cite{coverage} in our proof. Given a coverage function $f$ as an input, this reconstruction algorithm terminates in $O(m \, s)$ steps where $s$ is the support size of $f$, i.e., the number of non-zero $W$-coefficients of $f$, and returns these non-zero $W$-coefficients. 

  Recall the reduction from fractional chromatic number  to Coverage Extension (Theorem \ref{extension-lb}). 
  Given an instance of fractional chromatic number (graph $G=(V,E)$ and rational $k'$ with $|V|=n'$),  the instance of Coverage Extension is a set of defined points $\mcD = \{\{i\} | i \in [n']\} \cup E(G)  \cup \{[n']\}$ and a function $h$ on $\mcD$. Let $k = |\mcD| = |V| + |E| +1$. From Theorem \ref{extension-lb} and Proposition~\ref{succinct},  $\chi^*(G) \le k'$ iff $h$ is extendible to a coverage function with support size at most $k$.

   Let $\mathcal{F} $ be a family of coverage functions with support size at most $k$. Let $\epsilon = 1/k^3$ (and hence $\epsilon < 1/|\mcD|$) and $\mu$ be a uniform distribution  over   $\{(S,h(S))| S \in \mcD\}$. Now suppose a (randomized) algorithm $A$ properly PAC-learns $\mathcal{F}$. We will show that in this case, we can determine efficiently if the partial function is extendible to a coverage function, and hence RP = NP.
  
  Suppose the  algorithm $A$ returns a function $g$.
  If the partial function  is extendible then  there exists a function in $\mathcal{F}$ that has the same value on samples seen by $A$. Therefore, if the partial function is extendible  then  $g(S)$ must be equal to $h(S)$ for all $S \in \mcD$ (since $\epsilon < 1/|\mcD|$ and $A$ must satisfy$ Pr_{S \sim D^*} [  f(S) = f^*(S)] \ge 1 - \epsilon$).
  %So if there exists $S \in \mcD$ such that $g(S) \neq h(S)$ then we know that partial function is not extendible. However the converse is not true, i.e.,  it is still possible that $g(S) = h(S)$ for all $S \in \mcD$ and the partial function is not extendible.
   We run the reconstruction algorithm on input $g$. If the partial function is extendible then $g$ must be in $\mathcal{F}$ and hence the reconstruction algorithm must terminate in $O(mk)$ steps. Further, if  $\{w(S)\}_{S \in \mathcal{S}}$ is the output of the algorithm then
  (i) $w(S) > 0 $ for all $S \in \mathcal{S}$, (ii) $|\mathcal{S}| \le k$  (iii) the coverage function $f'$ given by the $W$-coefficients $w'(S) = w(S)$ if $S \in \mathcal{S}$ and $0$ otherwise is an extension of the partial function $h$. Condition (iii) should hold because $f'$ must be the same as $g$ which we have shown earlier is an extension of $h$. 
  
  The converse is also true --- if the reconstruction algorithm terminates and (i), (ii), (iii) hold then clearly $h$ is extendible (by $f'$). Since all the steps require polynomial time to check, we can efficiently determine if the partial function is extendible.

 \section{Coverage Approximate Extension} 
 
 We now build the framework for Theorem~\ref{extension-ub}.
 %For any $i \in [n]$,   $\{i: S \cap T_i \neq \emptyset\}$ is $\{i: a_i \in N(S)\}$ in terms of bipartite graph representation.
 % The set $\{i: S \cap T_i \neq \emptyset\}$ is $\{i: i \in N(S)}
 We start with the following lemma.
 \begin{lemma}
 	\label{coverage-claim}
 	Given a partial function $H$ and $\alpha \ge 1$, there is no coverage function $f$ satisfying $f_i \le f(T_i) \le \alpha f_i$ for all $i \in [n]$  iff the following program, with variables $l_i$ for all $i \in [n]$ is feasible: 
 		\begin{equation}
 	\label{inequality-cov-2}
 	-\alpha \sum_{i : l_i < 0 } f_i l_i < \sum_{i : l_i > 0 } f_i l_i 
 	\end{equation}
 	\begin{equation}
 	\label{inequality-cov-1}
 	\sum_{i : S \cap T_i \neq \emptyset} l_i \le 0 \quad  \forall S \subseteq [m]
 	\end{equation}
 \end{lemma}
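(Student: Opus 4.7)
The plan is to apply Farkas' lemma (equivalently, LP duality) to the linear program that encodes existence of an $\alpha$-approximate coverage extension, and then reparameterise the dual variables.

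By Theorem \ref{mobius-inversion}, a coverage function $f$ satisfying $f_i \le f(T_i) \le \alpha f_i$ for all $i \in [n]$ exists if and only if the following system in the variables $\{w(S)\}_{S \in \Mplus}$ is feasible:
\begin{align*}
\sum_{S : S \cap T_i \neq \emptyset} w(S) &\ge f_i \qquad \forall i \in [n], \\
-\sum_{S : S \cap T_i \neq \emptyset} w(S) &\ge -\alpha f_i \qquad \forall i \in [n], \\
w(S) &\ge 0 \qquad \forall S \in \Mplus.
\end{align*}
By Farkas' lemma, this system is infeasible iff there exist multipliers $l_i^+, l_i^- \ge 0$ (one for each of the two families of constraints indexed by $i$) such that the non-negative combination of the left-hand sides is coordinate-wise $\le 0$ in every $w(S)$, while the corresponding combination of the right-hand sides is strictly positive. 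Concretely, this yields
\[
\sum_{i : S \cap T_i \neq \emptyset} (l_i^+ - l_i^-) \le 0 \quad \forall S \in \Mplus, \qquad \sum_{i \in [n]} l_i^+ f_i - \alpha \sum_{i \in [n]} l_i^- f_i > 0.
\]

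The second step is to collapse $(l_i^+,l_i^-)$ into a single variable $l_i := l_i^+ - l_i^-$. We may assume without loss of generality that $l_i^+$ and $l_i^-$ are not both positive (subtract the minimum from both; this does not change $l_i$ and preserves both inequalities). With this normalisation, $l_i > 0$ corresponds to $l_i^+ = l_i, l_i^- = 0$, and $l_i < 0$ corresponds to $l_i^+ = 0, l_i^- = -l_i$. Substituting into the two displayed inequalities gives exactly (\ref{inequality-cov-1}) (extending trivially to $S = \emptyset$, where the sum is empty) and (\ref{inequality-cov-2}). Reversing the substitution shows that any feasible $\{l_i\}$ yields a valid Farkas certificate, completing the equivalence.

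The only mild subtlety is the $l_i^+,l_i^-$ to $l_i$ reduction and checking that the strict inequality on the objective side translates verbatim to (\ref{inequality-cov-2}); both are straightforward bookkeeping, so I do not expect any real obstacle beyond stating Farkas cleanly for a system with both $\ge$ and $\le$ constraints.
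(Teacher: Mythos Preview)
Your proof is correct and follows essentially the same route as the paper: apply Farkas' lemma to the primal system $f_i \le \sum_{S:S\cap T_i\neq\emptyset} w(S) \le \alpha f_i$, $w(S)\ge 0$, obtain a pair of nonnegative dual multipliers per $i$, then collapse them into a single signed variable $l_i$ after normalising so that $\min(l_i^+,l_i^-)=0$. The one point worth making explicit is that the normalisation step (subtracting $\min(l_i^+,l_i^-)$ from both) relies on $\alpha\ge 1$ to preserve the strict inequality $\sum_i l_i^+ f_i - \alpha\sum_i l_i^- f_i > 0$: the objective changes by $+(\alpha-1)\sum_i \min(l_i^+,l_i^-) f_i \ge 0$, so it can only increase. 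The paper notes this dependence on $\alpha\ge 1$ explicitly; otherwise your argument matches it step for step.
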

 Thus the optimal approximation ratio $\alpha^*$ is the minimum value of $\alpha$ for which~\eqref{inequality-cov-2} and ~\eqref{inequality-cov-1} are not feasible together.
 
A natural representation of the partial function $H = \{(T_1,f_1),\dots,(T_n,f_n)\}$ is as a weighted bipartite graph  $H = (A \cup [m], E)$ with  $|A| = n$,  and an edge between $a_i \in A$ and $j \in [m]$ if the set $T_i$ contains element $j \in [m]$. Each vertex $a_i \in A$ also has weight $f_i$. Then $d = \max_{i} |T_i|$ is the maximum degree of any vertex in $A$. For the remainder of this section, we will use this representation of partial functions.
 
We use the following notation given a bipartite graph $H = (A \cup [m],E)$.   For any $S \subseteq [m]$,  let $N(S) = \{v \in A: (v,j) \in E \quad \text{for some} \quad  j \in  S\}$ be the set of neighbours of set $S$. Similarly for set $R \subseteq A$, $N(R) = \{j \in [m]: (v,j) \in E \quad \text{for some} \quad  v \in  R\}$ be the set of neighbours of set $R$. For any vertex $v$ in $H$, we use $N(v)$ for $N(\{v\})$.  In this  bipartite graph representation, the inequality (\ref{inequality-cov-1}) is equivalent to $\sum_{i \in N(S) } l_i \le 0$ for all $ S \subseteq [m]$. 
 %Given any $S \subseteq A \cup B$, let $N(S) = \{v \in A \cup B: (a_i,b_j) \in E \quad \text{for some} \quad  b_j \in  B\}$ be the set of neighbours of sets $S$. 

 We now define a parameter $\kappa$ called the \emph{replacement ratio} for a partial function $H$.
 %, and show the following result. 
 %Recall that $d = \max_{i \in [n]} |T_i|$. We will use $c$ instead of $c_H$ if the context of $H$ is understood. 
 
  %To define $c$,
   %we view the partial function $H = \{(T_1,f_1),\dots,(T_n,f_n)\}$ as a weighted bipartite graph  $H = (A \cup B, E)$ with $A = [n]$, $B = [m]$,  and an edge between $i \in A$ and $j \in B$ if set $T_i$ contains element $j \in [m]$. Then $d = \max_{i} |T_i|$ is the maximum degree of vertices in $A$.
   %Each vertex $i \in A$ also has weight $f_i$. 
   \begin{definition}
   Let $H = (A \cup [m],E)$ be a bipartite graph with weights $f_v$ on each $v \in A$. For $v \in A$, let $\mathcal{F}_v = \{R \subseteq A \setminus \{v\} \, | \, N(R) \supseteq N(v)\}$ be the set of all subsets of $A \setminus \{v\}$ that cover all the neighbours of $v$. We call each $R \in \mathcal{F}_v$ a \emph{replacement} for $v$. The \emph{replacement ratio} $\kappa$ is then the minimum of $\frac{\sum_{w \in R} f_w}{f_v}$ over all vertices $v \in A$ and replacements $R \in \mathcal{F}_v$.
   \end{definition}
   % For each vertex $i \in A$, define $\mathcal{F}_i = \{S \subseteq A \setminus \{i\} | N(S) \supseteq N(i)\}$ be the set of all subsets of $A$ that do not contain $i$, and cover all the neighbours of $i$. The quantity $c$ is thn the minimum of $\frac{\sum_{k \in S} f_k}{f_i}$ over all vertices $i \in A$ and $S \in \mathcal{F}_i$. 
  The proof of the upper bound in Theorem~\ref{extension-ub} will follow from the bounds on $\alpha^*$ shown in Lemma \ref{lower-bound-alpha*}, \ref{upper-bound-alpha*-d} and \ref{upper-bound-alpha*-m}.
  %For the proof of Theorem \ref{bound-alpha*}, the partial function $H$ will be given by a weighted bipartite graph $G = (A \cup B, E)$.  
  \begin{lemma}
  	\label{lower-bound-alpha*}
  	For any partial function $H$, 
  	$ \alpha^* \ge \frac{1}{\kappa} $.
  \end{lemma}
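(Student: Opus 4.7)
The plan is to show, via Lemma~\ref{coverage-claim}, that for every $\alpha < 1/\kappa$ the system \eqref{inequality-cov-2}--\eqref{inequality-cov-1} admits a feasible solution, whence no coverage extension achieves multiplicative ratio $\alpha$, so $\alpha^*\ge 1/\kappa$. The key idea is to exploit the vertex $v\in A$ and replacement $R\in \mathcal{F}_v$ that witness $\kappa$, using them to build an explicit $\{0,\pm 1\}$-valued $l$.

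Concretely, let $v\in A$ and $R\subseteq A\setminus\{v\}$ with $N(R)\supseteq N(v)$ attain $\sum_{w\in R}f_w = \kappa f_v$. Define
\[
l_v = 1,\qquad l_w = -1 \text{ for every }w\in R, \qquad l_u = 0\text{ otherwise.}
\]
First I would verify constraint \eqref{inequality-cov-1}. For any $S\subseteq [m]$,
\[
\sum_{i\in N(S)} l_i \;=\; \mathbf{1}[v\in N(S)] \;-\; |R\cap N(S)|.
\]
If $v\notin N(S)$ the sum is $\le 0$ trivially. If $v\in N(S)$, then $S$ meets some element $j\in N(v)$; because $N(v)\subseteq N(R)$, some $w\in R$ has $j\in N(w)$, so $w\in N(S)$ and $|R\cap N(S)|\ge 1$. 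Hence $\sum_{i\in N(S)}l_i\le 0$, which is exactly inequality~\eqref{inequality-cov-1}.

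For \eqref{inequality-cov-2}, a direct computation gives $\sum_{i:\,l_i>0} f_i l_i = f_v$ and $\sum_{i:\,l_i<0} f_i l_i = -\sum_{w\in R} f_w = -\kappa f_v$. Hence \eqref{inequality-cov-2} reduces to $\alpha\kappa f_v < f_v$, which holds whenever $\alpha < 1/\kappa$ (the edge case $f_v=0$ can be excluded, or handled separately, since then $v$ contributes trivially and we may pick a different $v$ in the infimum defining $\kappa$). Thus by Lemma~\ref{coverage-claim}, for every $\alpha<1/\kappa$ no coverage function $f$ satisfies $f_i\le f(T_i)\le \alpha f_i$ for all $i$, and therefore $\alpha^*\ge 1/\kappa$.

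The only mildly subtle step is the neighborhood-covering argument showing that $v\in N(S)$ forces $R\cap N(S)\ne \emptyset$; once this is in place, the rest is essentially a substitution. I do not expect any significant obstacle beyond being careful about the degenerate case $N(v)=\emptyset$ (equivalently $T_v=\emptyset$, where $R=\emptyset$ is a valid replacement and $\kappa$ would be $0$), which should be ruled out by assuming, without loss of generality, that every defined set in $\mathcal{D}$ is nonempty.
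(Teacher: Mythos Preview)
Your proof is correct and follows essentially the same approach as the paper: both pick the vertex $v$ and replacement $R$ attaining $\kappa$, set $l_v=1$, $l_w=-1$ for $w\in R$, and $l_u=0$ otherwise, then invoke Lemma~\ref{coverage-claim}. Your write-up is somewhat more detailed in verifying inequality~\eqref{inequality-cov-1} via the neighborhood-covering argument and in flagging the degenerate cases, but the underlying idea is identical.
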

  \begin{proof}
   	 By definition of $\kappa$, there exists a vertex $v \in A$ and a replacement $R$ for $v$ such that $\sum_{w \in R} f_w = c f_v$. Note that setting $l_w = -1 \quad \forall w \in R$, $l_v = 1$ and all other $l_w$'s to be zero results in feasibility of the inequalities $  \sum_{w \in N(S) } l_w \le 0 $ for all $S \in \Mplus$. From the definition of $ \alpha^*$ and Lemma~\ref{coverage-claim}, $\alpha^* \sum_{w\in R}f_w \ge  f_v$, and hence $ \alpha^* \ge 1/\kappa$.
  \end{proof}
  
  	Let $\beta = \frac{\min \{d,m^{2/3}\}}{\kappa}$. Given values $\{l_v\}_{v \in A}$ on the vertices in $A$ such that $\sum_{ v \in N(S)} l_v \le 0$ for all $S \subseteq [m]$, we will show that $\beta \sum_{v : l_v < 0 } f_v l_v \ge \sum_{v : l_v > 0 } f_v l_v$ and hence $\alpha^* \le \beta$. If $l_v = 0$ for any vertex, we simply ignore such a vertex, since it does not affect either~\eqref{inequality-cov-1} or~\eqref{inequality-cov-2}.

  	 By scaling, we can assume that $l_v \in \mathbb{Z}$ for all $v \in A$. At some point, we will use Hall's theorem to show a  perfect matching. To simplify exposition, we  replace each $v \in A$ with $|l_v|$ identical copies, each of which is adjacent to the same vertices as $v$. Each such new vertex $v'$ has $l_{v'} = 1$ if $l_v > 0$ and $l_{v'} = -1 $ if $l_v < 0$, and $f_{v'} = f_v$. Let the new bipartite graph be $H' = (A' \cup [m],E')$. It is easy to check that in the new bipartite graph, the degree of vertices in $A'$ and the values $\kappa$, $\sum_{v \in A' : l_v > 0 } f_v l_v$, $\sum_{v \in A' : l_v < 0 } f_v l_v $ and $\sum_{v \in N(S)} l_v$ remain unchanged for all $S \subseteq [m]$. 
  	%It will be convenient to replace each vertex $T_i \in A$ with its $|l_i|$ multiple
 % 	We prove first $d |N^-(S)| \ge \sum_{ j \in S} d^+_j$ and $\sqrt{m} |N^-(S)| \ge \sum_{ j \in S} d^+_j$. 

  	%ithout loss of generality, we can assume either $l_i < 0$ or $l_i > 0$.
  	Let $\mathcal{N} = \{v \in A' | l_v = -1\}$ and $\mathcal{P} = \{v \in A' | l_v =1\}$, and let $E^-$ be the set of edges with one end-point in $\mcN$, while $E^+$ are the edges with one end-point in $\mcP$. 
  	For any $S \subseteq [m]$, let $N^-(S) = N(S) \cap \mathcal{N}$ and $N^+(S) = N(S) \cap \mathcal{P}$ (so $N(S) = N^+(S) \cup N^-(S)$). Finally, define $E^+(S)$ $(E^-(S))$ as the set of edges with one end-point in $S$ and the other end-point in $\mcP$ $(\mcP)$. If $S = \{j\}$, we abuse notation slightly and use $N^-(j)$, $N^+(j)$, $E^-(j)$ and $E^+(j)$. Note that $|N^-(S)| \ge |N^+(S)|$ for all $S \subseteq [m]$  in $H'$, since in $H$, $\sum_{ v \in N(S)} l_v \le 0$ for all $S \subseteq [m]$. Our goal is to show  $\beta \sum_{v \in \mcN } f_v  \ge \sum_{v \in \mcP } f_v$.
  	
  	%For $j \in [m]$, let $d^+_j = |N^+(j)|$. We have $|N^-(S)| \ge |N^+(S)|$ for all $S \in \Mplus$  in the graph $H'$ (since in the original graph, we have $\sum_{ v \in N(S)} l_v \le 0$ for all $S \in \Mplus$). Our goal is to show  $\beta \sum_{v \in \mcN } f_v  \ge \sum_{v \in \mcP } f_v$.
  	\begin{lemma}
  	\label{hall-theorem}
  	 Suppose for some $\beta' \ge 1$, $\beta' |N^-(S)| \ge  \sum_{ j \in S} |N^+(j)|$ for all $S \subseteq [m]$. Then for each vertex $v \in \mcP$, there exists a replacement $F_v \subseteq \mcN$ such that each vertex in $\mcN$ is contained in $F_v$ for at most $\beta'$ vertices $v \in \mcP$. Hence, $\beta' \sum_{v \in \mcN} f_v \ge \kappa \sum_{v \in \mcP} f_v$ and so $\alpha^* \le \frac{\beta'}{\kappa}$.
  	\end{lemma}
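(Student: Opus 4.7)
The strategy is to reduce the replacement construction to a matching problem in an auxiliary bipartite graph and apply Hall's theorem, as foreshadowed just before the lemma.

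I would build an auxiliary bipartite graph $G$ whose left side is the edge set $E^+$, viewed as the collection of pairs $L = \{(v,j) : v \in \mcP,\, j \in N(v)\}$, and whose right side $R$ consists of $\beta'$ labelled copies of each vertex of $\mcN$ (treating $\beta'$ as an integer; otherwise take $\lceil \beta' \rceil$, which only strengthens Hall's condition). In $G$ I would join $(v,j) \in L$ to every copy of each $w \in N^-(j)$. The aim is a matching that saturates $L$; the intended replacement is then $F_v := \{w \in \mcN : (v,j)\text{ is matched to some copy of }w\text{ for some } j \in N(v)\}$.

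Next I would verify Hall's condition. For $T \subseteq L$, let $S \subseteq [m]$ be the projection of $T$ onto the second coordinate. Then $N_G(T)$ is exactly the union, over all $w \in N^-(S)$, of the $\beta'$ copies of $w$, so $|N_G(T)| = \beta'|N^-(S)|$; and $|T| \le \sum_{j \in S}|N^+(j)|$, because for each $j \in S$ there are at most $|N^+(j)|$ pairs of the form $(v,j)$ in $L$. The hypothesis $\beta'|N^-(S)| \ge \sum_{j \in S}|N^+(j)|$ is therefore precisely Hall's inequality, so a matching saturating $L$ exists. By construction, for every $j \in N(v)$ the partner of $(v,j)$ is a copy of some $w \in N^-(j)$, so $F_v$ covers $N(v)$ and is a genuine replacement for $v$. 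Since $w$ has only $\beta'$ copies on the right, $w$ lies in $F_v$ for at most $\beta'$ distinct vertices $v \in \mcP$.

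The weighted conclusion is then a double-counting step. Since each $F_v$ is a replacement for $v$, the definition of $\kappa$ yields $\sum_{w \in F_v} f_w \ge \kappa f_v$; summing over $v \in \mcP$ and swapping the order of summation,
\[
\kappa \sum_{v \in \mcP} f_v \;\le\; \sum_{w \in \mcN} f_w \cdot \bigl|\{v \in \mcP : w \in F_v\}\bigr| \;\le\; \beta' \sum_{w \in \mcN} f_w.
\]
Rewriting $\sum_{v \in \mcP} f_v = \sum_{v : l_v > 0} f_v l_v$ and $\sum_{w \in \mcN} f_w = -\sum_{v : l_v < 0} f_v l_v$ (which is exactly what the earlier reduction to $\pm 1$ copies was set up to allow) and invoking Lemma \ref{coverage-claim} with $\alpha = \beta'/\kappa$ rules out joint feasibility of \eqref{inequality-cov-2} and \eqref{inequality-cov-1}, giving $\alpha^* \le \beta'/\kappa$. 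The only real subtlety I anticipate is matching the Hall-neighbourhood count $\beta'|N^-(S)|$ cleanly to the hypothesis; once that dictionary between $E^+$-pairs and copies of $\mcN$-vertices is fixed, the rest is routine bookkeeping.
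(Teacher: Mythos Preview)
Your proposal is correct and follows essentially the same route as the paper. The paper phrases the Hall step as finding a degree-constrained edge set $M \subseteq E^-$ (each $j \in [m]$ has degree $\ge |N^+(j)|$, each $w \in \mcN$ has degree $\le \beta'$) and then chooses injections $h_j : E^+(j) \to E^-(j)\cap M$, whereas you build the equivalent auxiliary bipartite graph with $\beta'$ copies of each $\mcN$-vertex and match $E^+$ directly; the resulting $F_v$ and the double-counting step are identical.
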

  	\begin{proof}
  	  By Hall's theorem, there exists a set of edges $M \subseteq E^-$ such that (i)  the degree in $M$ of each vertex $j \in [m]$ is at least $|N^+(j)|$, and (ii) the degree in $M$ of each vertex $v \in \mcN$ is at most $\beta'$. Because of (i), for each $j \in [m]$ there is an injection $h_j$ from edges in $E^+(j)$ to edges in $E^-(j) \cap M$, i.e., each edge in $E^+(j)$ maps to a distinct edge in $E^-(j) \cap M$. Now for a vertex $v \in \mcP$, consider a neighbouring vertex $j \in N(v)$. Each such edge $(v,j)$ is in $E^+(j)$, and is hence mapped by $h_j$ to an edge in $E^-(j) \cap M$. Let $F_v$ be the end-points in $\mcN$ of these mapped edges. That is, $w \in F_v$ iff there exists $j \in N(v)$ such that $(w,j) = h_j(v,j)$. Then $F_v$ is a replacement for $v$, and hence, $\sum_{w \in F_v} f_w \ge \kappa f_v$. Further, because of (ii), and since each $h_j$ is an injection, each vertex in $\mcN$ appears in $F_v$ for at most $\beta'$ vertices $v \in \mcP$. Then summing the inequality $\sum_{w \in F_v} f_w \ge \kappa f_v$ over all $v \in \mcP$, we get that $\beta' \sum_{v \in \mcN} f_v \ge \kappa \sum_{v \in \mcP} f_v$ as required.
  	\end{proof}
  	 \begin{lemma}
  	\label{upper-bound-alpha*-d}
  	For any partial function $H$, 
  	$\alpha^* \le \frac{d}{\kappa}$.
  \end{lemma}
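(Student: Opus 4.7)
The plan is to apply Lemma \ref{hall-theorem} with $\beta' = d$. Since the lemma yields $\alpha^* \le \beta'/\kappa$, establishing the Hall-type hypothesis $d\,|N^-(S)| \ge \sum_{j \in S} |N^+(j)|$ for every $S \subseteq [m]$ will immediately give the desired bound $\alpha^* \le d/\kappa$.

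First, I would rewrite the right-hand side by double-counting. For each $j \in S$, the quantity $|N^+(j)|$ is the number of $\mcP$-neighbours of $j$, i.e., the number of edges of $E^+$ incident to $j$. Summing over $j \in S$ thus gives the total number of edges of $E^+$ with one end-point in $S$, namely
\[
\sum_{j \in S} |N^+(j)| = |E^+(S)|.
\]

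Second, I would bound $|E^+(S)|$ in terms of $|N^+(S)|$ using the degree constraint. Every vertex $v \in A'$ has the same neighbourhood as its original copy in $A$, and hence degree at most $d$. Each edge counted by $E^+(S)$ has its $\mcP$-end-point in $N^+(S)$, and each such end-point contributes at most $d$ edges in total. Therefore
\[
|E^+(S)| \le d \cdot |N^+(S)|.
\]

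Third, I would invoke the constraint carried over from the original $\{l_v\}$ via the copy construction: the hypothesis $\sum_{v \in N(S)} l_v \le 0$ for all $S \subseteq [m]$ in $H$ translates in $H'$ to $|N^-(S)| \ge |N^+(S)|$ for all $S \subseteq [m]$. Chaining the three estimates gives
\[
\sum_{j \in S} |N^+(j)| = |E^+(S)| \le d\,|N^+(S)| \le d\,|N^-(S)|,
\]
which is exactly the hypothesis of Lemma \ref{hall-theorem} with $\beta' = d$. Applying the lemma completes the proof. There is no real obstacle here; the entire argument is a short chain of inequalities, and the only thing to be careful about is correctly interpreting $\sum_j |N^+(j)|$ as an edge count so that the max-degree bound $d$ can be used.
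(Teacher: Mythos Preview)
Your proof is correct and follows essentially the same approach as the paper: verify the Hall-type hypothesis of Lemma~\ref{hall-theorem} with $\beta'=d$ by a short chain of inequalities using the degree bound and the relation $|N^-(S)|\ge |N^+(S)|$. The only cosmetic difference is the order: the paper applies the singleton constraint $|N^-(j)|\ge |N^+(j)|$ first and then the degree bound on the $\mcN$-side to get $d\,|N^-(S)|\ge \sum_{j\in S}|N^-(j)|\ge \sum_{j\in S}|N^+(j)|$, whereas you apply the degree bound on the $\mcP$-side first and then the constraint for the whole set $S$; both routes are equally valid.
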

  \begin{proof}
   Fix $S \subseteq [m]$. Since $|N^-(j)| \ge |N^+(j)|$ for all $j \in [m]$, $\sum_{j \in S} |N^-(j)| \ge \sum_{j \in S} |N^+(j)|$, and since $d$ is the maximum degree of any vertex in $A'$, $d |N^-(S)| \ge \sum_{j \in S} |N^-(j)|$. The proof follows from Lemma~\ref{hall-theorem}.
  \end{proof}
%  Since $d \le m$, we get an $m$-approximation algorithm. 
  %To Do: explain by example, $m^{3/4} |N^-(S)| \ge  \sum_{ j \in S} d^+_j$ may not be true because of...  
   If we can show $m^{2/3} |N^-(S)| \ge  \sum_{ j \in S} |N^+(j)|$ for all $S \subseteq [m]$ then by Lemma~\ref{hall-theorem}, 	$\alpha^* \le \frac{m^{2/3}}{\kappa}$. Unfortunately this may not be true. Let $\mcN = \{v_1\}, \mcP = \{v_2\}, E^- = \{(v_1,j) | j \in [m]\}$ and $E^+ = \{(v_2,j)|j \in [m]\}$. Note that $ \sum_{ j \in [m]} |N^+(j)| = m$ whereas $ |N^-([m])| =  1$. Notice that in this bad example, the bipartite graph contains a $4$-cycle $v_1,j_1,v_2,j_2,v_1$ where $v_1 \in \mcN$ and $v_2 \in \mcP$. We now define a subgraph called a $\textit{diamond}$ which generalises such a $4$-cycle. A diamond $(v_p,v_n,J)$ of size $k$ is a subgraph of $H'$ where $v_p \in \mcP, v_n \in \mcN, J \subseteq [m]$ ($|J| = k)$ such that for all $j \in J$, both $(v_p,j)$ and $(v_n,j)$ are contained in  $E'$. Note that a $4$-cycle is a diamond of size two (and the bad example considered above is a diamond of size $m$). %A diamond $(v_p,v_n,J)$ is called maximal if for all $j \not \in J$, either $(v_p,j) \not \in E'$ or $(v_n,j) \not \in E'$. We will always consider maximal diamonds[TODO: Perhaps we can remove definition of maximal as we are not using. Check].  
   Let $k_{max} = m^a$ ($0 \le a \le 1$) be the maximum size of any diamond in $H'$. %We will show that $m^{\frac{1+a}{2}}  |N^-(S)| \ge  \sum_{ j \in S} d^+_j$ holds for all $S \in \Mplus$.
  %We first show that if the bipartite graph does not have any $4$-cycle with one vertex in $\mathcal{P}$ and one in $\mathcal{N}$ then  $m^{1/2} |N^-(S)| \ge  \sum_{ j \in S} d^+_j$ for all $S \in \Mplus$. Then we handle the $4$-cycle case and in the process we lose the approximation factor from $m^{1/2}$ to $m^{3/4}$. 
  \begin{lemma}
  \label{no-4-cycle}
  For all $S \subseteq [m]$,  $m^{\frac{1+a}{2}} |N^-(S)| \ge \sum_{ j \in S} |N^+(j)|$, where $m^a$ is the size of the largest diamond in $H'$.
  \end{lemma}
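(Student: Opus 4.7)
The plan is to bound $\sum_{j \in S} |N^+(j)| = |E^+(S)|$ by a double-counting argument on triples combined with Cauchy--Schwarz. Consider the set of triples $(v_p,j,v_n)$ with $v_p \in \mcP$, $v_n \in \mcN$, $j \in S$, and both edges $(v_p,j),(v_n,j) \in E'$. Let $N$ denote the number of such triples. I will bound $N$ from above using the diamond hypothesis and from below using the Cauchy--Schwarz inequality, and then solve for $|E^+(S)|$.

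For the upper bound, fix a pair $(v_p, v_n) \in \mcP \times \mcN$. By definition, $(v_p, v_n, N(v_p) \cap N(v_n))$ is a diamond in $H'$, so the diamond hypothesis gives $|N(v_p) \cap N(v_n)| \le m^a$. The number of triples containing this fixed pair is at most $|N(v_p) \cap N(v_n) \cap S| \le m^a$. Moreover, any triple must have $v_p \in N^+(S)$ and $v_n \in N^-(S)$, so summing over such pairs yields $N \le m^a |N^+(S)| \cdot |N^-(S)|$. Applying the hypothesis $|N^+(S)| \le |N^-(S)|$ (which is inherited by $H'$ from $H$), we obtain $N \le m^a |N^-(S)|^2$.

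For the lower bound, count the triples by the middle coordinate $j \in S$: $N = \sum_{j \in S} |N^+(j)| \cdot |N^-(j)|$. Applying the same hypothesis with singleton $\{j\}$ in place of $S$ gives $|N^-(j)| \ge |N^+(j)|$, so $N \ge \sum_{j \in S} |N^+(j)|^2$. Cauchy--Schwarz then gives
\begin{equation*}
\Bigl(\sum_{j \in S} |N^+(j)|\Bigr)^2 \;\le\; |S| \sum_{j \in S} |N^+(j)|^2 \;\le\; |S| \cdot N \;\le\; m \cdot m^a \cdot |N^-(S)|^2.
\end{equation*}
Taking square roots and using $|S| \le m$ yields $\sum_{j \in S} |N^+(j)| \le m^{(1+a)/2} |N^-(S)|$, as desired.

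The argument is short, and I do not foresee a genuine obstacle; the main conceptual step is noticing that the diamond constraint is precisely a bound on pairwise co-neighborhoods of opposite-sign vertices, which is exactly the object that appears when one counts cherries (the triples) by their $\mcP$--$\mcN$ endpoints. The factor $m^a$ from diamonds and the factor $\sqrt{|S|} \le m^{1/2}$ from Cauchy--Schwarz multiply to give the stated exponent $(1+a)/2$, and the hypothesis $|N^-(S)| \ge |N^+(S)|$ is used exactly once to convert $|N^+(S)| \cdot |N^-(S)|$ into $|N^-(S)|^2$.
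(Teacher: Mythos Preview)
Your proof is correct and is in fact cleaner than the paper's argument. Both proofs ultimately combine the diamond bound, the inequality $|N^-(S)| \ge |N^+(S)|$, and Cauchy--Schwarz, but the route is genuinely different. The paper fixes a vertex $v \in \mcP$, proves the local bound $m^a |N^-(S_v)| \ge \sum_{j \in S_v} |N^+(j)|$ for $S_v = N(v) \cap S$, and then does a case split on whether some $v$ captures a large fraction of $\sum_{j \in S} |N^+(j)|$; the two cases are handled separately (the first by the local bound alone, the second by reversing a double sum and then applying Cauchy--Schwarz). Your argument avoids the case analysis entirely by counting cherries $(v_p,j,v_n)$ in two ways: the diamond hypothesis bounds the count from the pair side by $m^a |N^+(S)||N^-(S)| \le m^a |N^-(S)|^2$, and $|N^-(j)| \ge |N^+(j)|$ bounds it from the middle-vertex side by $\sum_{j \in S} |N^+(j)|^2$, after which a single application of Cauchy--Schwarz finishes. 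This is the standard K\H{o}v\'ari--S\'os--Tur\'an style double count and is more transparent; the paper's approach, on the other hand, makes the role of individual vertices $v \in \mcP$ more explicit, which is what feeds into the subsequent diamond-removal argument in Lemma~\ref{upper-bound-alpha*-m}.
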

  \begin{proof}
   Recall that for all $j \in [m]$, $|N^+(j)| \le |N^-(j)|$, hence there is an injection $h_j$ from $N^+(j)$ to $N^-(j)$, i.e, $h_j$ maps each vertex in $N^+(j)$ to a unique vertex in $N^-(j)$. Fix $S \subseteq [m]$ and vertex $v \in \mcP$, and let $S_v := N(v) \cap S$ be the neighbourhood of $v$ in $S$. We will consider $N^+(S_v)$ and $N^-(S_v)$, the negative and positive neighbourhoods of $S_v$. Note that since all vertices in $S_v$ are adjacent to $v \in \mcP$, a vertex in $N^-(S_v)$ is adjacent to at most $m^a$ vertices in $S_v$, by definition of $a$. Thus for a vertex $v' \in N^-(S_v)$, there are at most $m^a$ different vertices $j \in S_v$ for which $h_j$ maps a vertex in $N^+(j)$ to $v'$, and hence $m^a |N^-(S_v)| \ge \sum_{j \in S_v} |N^+(j)|$.
   
   %For $v \in N^+(S)$, let $N(v) \cap S = \{j_1, \ldots, j_t\}$ be the set of neighbours of $v$ in $S$, and define $D_v = \sum_{j \in N(v) \cap S} d_j^+$. Note that for all $k \in [t]$, $|N^-(j_k)| \ge d_{j_k}^+$, since $|N^-(S)| \ge |N^+(S)|$ for all $S \in \Mplus$. Further, by double counting the edges in $\bigcup_{k \in [t]} \{(j_k,v')| v' \in N^-(j_k)\}$, we get $|N^-(N(v) \cap S))| \ge \frac{D_v}{m^a}$.
   % Further, for $k$, $l \in [t]$, the sets $N^-(j_k)$ and $N^-(j_l)$ must be disjoint, otherwise there would be a 4-cycle, contradicting the assumption. Hence, $N^-(N(v) \cap S))$ is at least as large as $D_v = \sum_{j \in N(v) \cap S} d_j^+$. 
	
Now  if there is a vertex  $v  \in \mcP$ such that $m^{\frac{1-a}{2}}\sum_{j \in S_v} |N^+(j)|$ $\ge \sum_{ j \in S} |N^+(j)| $ then we are done, since
\[ 
|N^-(S)| \ge |N^-(S_v)| \ge \frac{\sum_{j \in S_v} |N^+(j)|}{m^a} \ge \frac{\sum_{ j \in S} d^+_j}{m^{\frac{1+a}{2}}} \,.
\]

\noindent So assume that for all $v \in  \mcP$, $\sum_{j \in S_v} |N^+(j)| \le \frac{\sum_{ j \in S} |N^+(j)|}{m^{\frac{1-a}{2}}}$. In this case, note that by reversing the order of summation,
\[
\sum_{ j \in S} |N^+(j)|^2 = \sum_{j \in S} \sum_{v \in N^+(j)} |N^+(j)| = \sum_{v \in N^+(S)} \sum_{j \in S_v} |N^+(j)| \le |N^+(S)|\,\frac{\sum_{ j \in S} |N^+(j)|}{m^{\frac{1-a}{2}}} \, .
\]

\noindent Therefore, using the above inequality for $|N^+(S)|$, 
\[
|N^-(S)| \ge |N^+(S)| \ge m^{\frac{1-a}{2}} \frac{\sum_{ j \in S} |N^+(j)|^2 }{\sum_{ j \in S} |N^+(j)|} \ge \frac{m^{\frac{1-a}{2}}}{|S|} \frac{\left(\sum_{ j \in S} |N^+(j)\right)^2 }{\sum_{ j \in S} |N^+(j)|} \ge \frac{\sum_{ j \in S} |N^+(j)| }{m^{\frac{1+a}{2}}} \,
\]

\noindent as required by the lemma. The third inequality follows from Cauchy-Schwarz.
\end{proof}
  
From Lemmas~\ref{hall-theorem} and~\ref{no-4-cycle},  if $a \le 1/3$ then $\alpha^* \le \frac{m^{2/3}}{\kappa}$. Next we show this is true in general. 

 \begin{lemma}
  	\label{upper-bound-alpha*-m}
  	For any partial function $H$, 
  	$ \alpha^* \le \frac{m^{2/3}}{\kappa}$.
  \end{lemma}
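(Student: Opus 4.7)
The plan is to split on the parameter $a$, where $m^a$ is the size of the largest diamond in $H'$, and to use Lemmas~\ref{no-4-cycle} and~\ref{hall-theorem} in both cases.

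First I would handle the case $a \le 1/3$, which the text already points out: Lemma~\ref{no-4-cycle} gives $m^{(1+a)/2}|N^-(S)| \ge \sum_{j \in S}|N^+(j)|$ for every $S \subseteq [m]$, and since $(1+a)/2 \le 2/3$, applying Lemma~\ref{hall-theorem} with $\beta' = m^{2/3}$ yields $\alpha^* \le m^{2/3}/\kappa$.

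The substantive case is $a > 1/3$. The plan is to reduce to the small-diamond regime by iteratively trimming large diamonds in $H'$. Concretely, while some diamond $(v_p, v_n, J)$ in the current graph has $|J| > m^{1/3}$, I would delete the matched edge pairs $\{(v_p, j), (v_n, j)\}$ for $|J| - \lceil m^{1/3} \rceil$ of the vertices $j \in J$. Each such deletion removes one vertex from both $N^+(j)$ and $N^-(j)$, so the invariant $|N^-(S)| \ge |N^+(S)|$ continues to hold for every $S$. After finitely many trimmings the resulting graph $\tilde H'$ has all diamonds of size at most $m^{1/3}$, and applying the first case to $\tilde H'$ yields $m^{2/3}|N^-(S)| \ge \sum_{j\in S}|N^+(j)|$ throughout $\tilde H'$.

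The hard part is transferring this back to $H'$. Edge deletion can shrink neighborhoods, so replacements in $\tilde H'$ need not be replacements in $H'$, and the replacement ratio $\kappa$ could differ between the two graphs. The key observation I would use is that every deleted edge $(v_p, j)$ is precisely matched with a deleted edge $(v_n, j)$, so the vertex $v_n$ surviving on the negative side covers exactly what was removed from $v_p$'s neighborhood. This should allow the Hall-based cover constructed in $\tilde H'$ (as in the proof of Lemma~\ref{hall-theorem}) to be lifted to a cover in $H'$ by adjoining the appropriate $v_n$'s, without inflating the total weight charged to $\mcN$. Making this lifting precise is the main technical step; once it is in place, the desired bound $\alpha^* \le m^{2/3}/\kappa$ follows in full generality.
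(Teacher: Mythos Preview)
Your overall architecture matches the paper: split on the size of the largest diamond, trim large diamonds until none remain, apply Lemmas~\ref{no-4-cycle} and~\ref{hall-theorem} in the trimmed graph, and then lift the Hall-based covers back to $H'$ using the diamond partners $v_n$. The place where your plan diverges from the paper, and where it currently has a genuine gap, is the trimming rule and the counting that supports the lifting.

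You propose to delete only $|J|-\lceil m^{1/3}\rceil$ matched pairs from a large diamond $(v_p,v_n,J)$, leaving a diamond of size $\lceil m^{1/3}\rceil$. The paper instead removes \emph{all} edges of the diamond. This is not a cosmetic difference: it is exactly what makes the lifting quantitative. In the paper's scheme every diamond removal involving a fixed $v_n$ deletes more than $m^{1/3}$ edges incident to $v_n$; since $\deg(v_n)\le m$, at most $m^{2/3}$ removed diamonds can involve $v_n$. Hence when one augments each $F_{v_p}$ by the set $\{v_n:(v_p,v_n,\cdot)\text{ was removed}\}$ and sums $\sum_{w\in F_{v_p}} f_w + \sum_i f_{v_n^i}\ge \kappa f_{v_p}$ over all $v_p\in\mcP$, every $v_n\in\mcN$ is charged at most $m^{2/3}$ times from the diamond side and at most $m^{2/3}$ times from the Hall side, giving $m^{2/3}\sum_{\mcN}f_v\ge \kappa\sum_{\mcP}f_v$.

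Under your rule a trimming may remove as little as one edge from $v_n$ (when $|J|=\lceil m^{1/3}\rceil+1$), so a fixed $v_n$ could be involved in up to $\Theta(m)$ trimmings with distinct $v_p$'s. Your lifted covers would then charge $v_n$ up to $m$ times, yielding only $\alpha^*\le m/\kappa$. The phrase ``without inflating the total weight charged to $\mcN$'' hides precisely this counting, which is the crux of the proof. The fix is simple: adopt the paper's rule and delete the entire diamond; then the degree argument gives the $m^{2/3}$ multiplicity bound you need.
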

  \begin{proof}
   If $k_{max} \le m^{1/3}$ then by Lemma \ref{no-4-cycle} and \ref{hall-theorem}, $\alpha^* \le \frac{m^{2/3}}{\kappa}$. So we assume $k_{max} > m^{1/3}$.
        In this case, we pick a diamond $(v_p,v_n,J)$ of  size $> m^{1/3}$. We remove, for all $j \in J$, the edges $(v_p,j)$ and $(v_n,j)$. We repeat the above procedure (in the new graph) until we are left with a bipartite graph where all diamonds are of size at most $m^{1/3}$. Note that if a diamond $(v_p,v_n,J)$ of size $k$ is removed then the degree of $v_n$ decreases by $k$. Hence, for a fixed vertex $v_n$, number of removed diamonds is at most $m^{2/3}$ (as at any step we remove diamonds of size at least $m^{1/3}$). It is easy to see that after every step, $|N^-(S)| \ge |N^+(S)|$ (for all $S \in \Mplus$) still holds in the bipartite graph. Let $H^*$ be the bipartite graph at the end (all diamonds of size at most $m^{1/3}$). 
  
Note that we do not remove any vertex in the above procedure. Fix vertex $v \in \mcP$. By Lemmas~\ref{no-4-cycle} and~\ref{hall-theorem} with $a=1/3$, there exists $F_v \subseteq \mcN$ such that $F_v$ covers all neighbours of $v$ in $H^*$ and   each vertex in $\mcN$ appears in $F_v$ for at most $m^{2/3}$ vertices $v \in \mcP$. Since we have removed edges, $F_v$ may not cover all the neighbours of $v$ in $H'$. Let $v^1,\dots,v^s \in \mcN$ be the set of all vertices such that for each $i \in [s]$, a diamond $(v,v^i,J^i)$ was removed in a removal step. Clearly $\{v^1,\dots,v^s\} \cup F_v $ cover all the neighbour of $v$ in $H'$. Therefore, we have $\sum_{i =1}^{s} f_{v^i} + \sum_{w \in F_v} f_w \ge \kappa f_v$. Since any $v^i$ ($1 \le i \le s$) is a part of at most $m^{2/3}$ removed diamonds and each vertex in $\mcN$ appears in $F_v$ for at most $m^{2/3}$ vertices $v \in \mcP$ , summing the above inequality for each $v \in \mcP$, we get $m^{2/3}\sum_{v \in \mcN} f_v \ge \kappa \sum_{v \in \mcP} f_v$ as required. 
  \end{proof}

It follows from Lemmas~\ref{lower-bound-alpha*},~\ref{upper-bound-alpha*-d} and~\ref{upper-bound-alpha*-m} that an algorithm that returns $\frac{\min \{d,m^{2/3}\}}{\kappa}$ is a $\min \{d,m^{2/3}\}$-approximation algorithm. However, computing $\kappa$ corresponds to solving a general set cover instance, and is \nphard. This connection however allows us to show the following result.

  \begin{lemma}
  		\label{computing-c}
  	Given a partial function, the replacement ratio $\kappa$  can be efficiently approximated by $\kappa'$ such that $\kappa \le \kappa' \le \kappa \log d$. If $d$ is a constant, the replacement ratio  $\kappa$ can be determined efficiently. 
  \end{lemma}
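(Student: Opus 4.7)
The plan is to reduce the computation of $\kappa$ to $n$ instances of weighted set cover, one for each $v \in A$, each on a universe of size at most $d$. Define $\kappa_v := \min_{R \in \mathcal{F}_v} \sum_{w \in R} f_w / f_v$, so that $\kappa = \min_{v \in A} \kappa_v$. Computing $\kappa_v$ amounts to finding a minimum-weight cover of the universe $N(v) \subseteq [m]$ using the candidate sets $\{N(w) \cap N(v) : w \in A \setminus \{v\}\}$ with weights $f_w$, and a subset $R \subseteq A \setminus \{v\}$ is in $\mathcal{F}_v$ iff the corresponding collection covers $N(v)$. The key observation is that by definition of $d$, the universe $N(v)$ has size at most $d$, independent of $m$ and $n$.

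For the $\log d$ approximation I would run the standard greedy weighted set cover algorithm on each of the $n$ such instances. By Chv\'atal's analysis, greedy returns a cover of cost at most $H_k$ times the optimum, where $k$ bounds the size of the sets; here $k \le |N(v)| \le d$, so the approximate value $\kappa_v'$ satisfies $\kappa_v \le \kappa_v' \le (\ln d + 1) \kappa_v$. Returning $\kappa' := \min_{v \in A} \kappa_v'$ preserves both inequalities (since $\min$ is monotone) and so $\kappa \le \kappa' \le \kappa \log d$ after absorbing the constant.

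For the case of constant $d$ I would compute each $\kappa_v$ exactly by brute-force enumeration. Any inclusion-minimal cover of $N(v)$ uses at most $|N(v)| \le d$ sets, since each set in a minimal cover must uniquely cover some element of $N(v)$. Thus it suffices to enumerate every subset of $A \setminus \{v\}$ of size at most $d$; there are $\sum_{i=0}^{d} \binom{n-1}{i} = O(n^d)$ such subsets, which is polynomial when $d$ is fixed. For each candidate subset we can check the covering condition and sum the weights in polynomial time, keep the minimum-weight one, and then take the minimum of $\kappa_v$ over $v$ to recover $\kappa$ exactly.

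The only non-routine point is the identification of the correct universe: because the set cover instance associated with $v$ has universe $N(v)$ rather than $[m]$, the greedy guarantee depends on $d$ and not on $m$, which is precisely what yields the $\log d$ factor rather than $\log m$. Everything else follows from standard tools for weighted set cover and from direct enumeration in the fixed-$d$ regime.
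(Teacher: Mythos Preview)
Your proposal is correct and follows essentially the same approach as the paper: reduce the computation of $\kappa$ to $n$ weighted set-cover instances (one per $v\in A$), use the greedy algorithm for the $\log d$ approximation, and enumerate all subsets of size at most $d$ in $O(n^d)$ time when $d$ is constant. You are in fact more explicit than the paper about \emph{why} the approximation factor is $\log d$ rather than $\log m$ (the universe $N(v)$ has size at most $d$), which is a useful clarification.
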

This completes the proof of the upper bound in Theorem~\ref{extension-ub}. We now show that the analysis of our algorithm cannot be substantially improved.

%For the lower bound, we show that the inequality $\frac{1}{c} \le \alpha^* \le \frac{\min \{d,\sqrt{m}\}}{c}$ cannot be improved by more than a $\log m$ factor.
  \begin{theorem}
  	\label{tight-logm-factor}
  	There is a partial function with $\alpha^* = 1/\kappa$ and a partial function with $d = \sqrt{m}$ and $\alpha^* = \Omega(\frac{\sqrt{m}}{\kappa \log m})$.
  \end{theorem}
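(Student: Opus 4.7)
The theorem consists of two independent existence statements, which I would handle with separate constructions.

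For the first part, I would give an explicit small example on ground set $[2]$. Take three defined sets $T_1 = \{1\}$, $T_2 = \{2\}$, $T_3 = \{1,2\}$ with values $f_1 = f_2 = 1$ and $f_3 = 2/c$ for an arbitrary target $c \in (0,1]$. The only nontrivial replacement is $\{T_1, T_2\}$ of $T_3$, giving ratio $(f_1 + f_2)/f_3 = c$; the other candidate replacements (e.g., $\{T_3\}$ for $T_1$ or for $T_2$) give ratio $2/c \ge c$, so the replacement ratio of this partial function is exactly $c$. Lemma~\ref{lower-bound-alpha*} gives $\alpha^* \ge 1/c$. For the matching upper bound, I would exhibit the coverage function whose $W$-transform is $w(\{1\}) = w(\{2\}) = 1/c$ and $w(\{1,2\}) = 0$: using equation~\eqref{eqn:wcoeff2}, this gives $f(T_1) = f(T_2) = 1/c$ and $f(T_3) = 2/c$, a valid $(1/c)$-approximate extension of the partial function. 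Hence $\alpha^* = 1/c = 1/\kappa$.

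For the second part, the natural approach is to build a partial function from an integrality-gap instance of set cover. Let $k = \sqrt{m}$ and fix a classical set cover on universe $[k]$ with sets $S_1,\ldots,S_N$ of size at most $k$, fractional optimum $F$, and integer optimum $I$ satisfying $I/F = \Omega(\log m)$. The partial function has ground set $[m] \supseteq [k]$ along with $N$ ``private'' elements $e_1,\ldots,e_N$, and defined sets $T_0 = [k]$ together with $T_i = S_i \cup \{e_i\}$ for each $i \in [N]$. With the weights calibrated so that $f_0$ equals the integer cover cost and $f_i = 1$, every $T_i$ has a private element $e_i$ appearing in no other defined set, so no $T_i$ is replaceable; hence $\kappa$ equals $I/f_0 = 1$ (determined solely by $T_0$). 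Taking the $l$-witness in Lemma~\ref{coverage-claim} defined by $l_0 = 1$ and $l_i = -y_i$ where $(y_i)$ is the optimal fractional cover of $[k]$ certifies $\alpha^* \ge f_0/F = I/F$, so $\alpha^*\kappa = \Omega(\log m)$.

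The main obstacle is boosting the gap from $\log m$ (a single set-cover instance) to the claimed $\sqrt{m}/\log m$. Standard set cover admits integrality gap only $O(\log m)$ on a universe of size $m$, so the amplification cannot come from a single set cover alone: I expect it to require either combining many gap instances inside one partial function so their LP values aggregate while $\kappa$ stays bounded, or a more structured combinatorial design (e.g., a bipartite graph with carefully tuned Hall-type slackness of $\Omega(\sqrt{m}/\log m)$ as in Lemma~\ref{hall-theorem}) for which both the $l$-witness achieving the claimed ratio and the replacement ratio $\kappa$ can be analysed directly. Verifying that the chosen $l$ satisfies the exponentially many constraints~\eqref{inequality-cov-1} across subsets $S$ spanning multiple components is the most delicate step.
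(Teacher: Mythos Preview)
Your first construction is correct and essentially the paper's (the paper puts the large value on the singletons and value $1$ on $\{1,2\}$, but the idea is the same three-set monotonicity example).

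For the second part, however, you have identified the obstacle but not overcome it, and the set-cover integrality-gap route you start down is a dead end. On a universe of size $k=\sqrt{m}$ the LP/IP gap of set cover is at most $O(\log k)=O(\log m)$, so any single instance gives $\alpha^*\kappa=O(\log m)$, exponentially short of the target $\Theta(\sqrt{m}/\log m)$. Taking disjoint copies does not help: both $\sum_{l_i>0}f_il_i$ and $-\sum_{l_i<0}f_il_i$ scale by the number of copies, so the ratio is unchanged. Your final paragraph gestures at ``a more structured combinatorial design'' but supplies no construction, no candidate $l$-witness, and no argument that $\kappa$ stays bounded; this is precisely where the content of the proof lies.

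The paper's construction is not an integrality-gap instance at all. It takes $\mathcal{P}$ to be the $\sqrt{m}$ blocks of a partition of $[m]$ into intervals of length $\sqrt{m}$, each with value $\sqrt{m}$, and $\mathcal{N}$ to be $\Theta(\sqrt{m}\log m)$ \emph{random transversals} (one uniformly random element from each block), each with value $1$. The witness is simply $l_v=+1$ on $\mathcal{P}$ and $l_v=-1$ on $\mathcal{N}$; the ratio in~\eqref{inequality-cov-2} is then $\frac{\sqrt{m}\cdot\sqrt{m}}{\Theta(\sqrt{m}\log m)}=\Theta(\sqrt{m}/\log m)$. The constraints~\eqref{inequality-cov-1} reduce to $|N(S)\cap\mathcal{N}|\ge |N(S)\cap\mathcal{P}|$ for all $S$, which holds with positive probability by a union bound over the $\sqrt{m}^{\sqrt{m}}$ relevant $S$ (it suffices to check $S$ meeting each block at most once). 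Finally $\kappa=1$: each block needs $\sqrt{m}$ transversals (one per element) to be covered, giving ratio $\sqrt{m}/\sqrt{m}=1$, while replacing a transversal costs at least $1$. The point is that the gap comes from the \emph{weights} (few heavy positive sets versus many light negative sets) and the randomness is used only to certify feasibility of the $l$-witness, not to create an integrality gap.
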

  \begin{proof}
  	Fix $r \ge 1$, and consider the partial function $H = \{(\{1\},r),(\{2\},r),(\{1,2\},1)\}$ over ground set $\{1,2\}$. Clearly $\kappa = 1/r$, and since any coverage function must be monotone~\eqref{eqn:wcoeff2}, $\alpha^* = r$. 
  	
  	Now we show existence of a partial function with $d = \sqrt{m}, \kappa = 1$ and $\alpha^* \ge \Omega(\frac{\sqrt{m}}{ \log m})$. Let the ground set be $[m]$ and defined sets be $\mathcal{D} = \mathcal{P} \cup \mathcal{N}$. %We will use bipartite graph representation  for describing the partial function.
  	 Let $\mathcal{P} = \{\{1,\dots,\sqrt{m}\},\{\sqrt{m}+1,\dots,2\sqrt{m}\},\dots,\{m-\sqrt{m}+1,m\}\}$. Thus $|\mathcal{P}| = \sqrt{m}$ and each set in $\mathcal{P}$ has size $\sqrt{m}$.  We set $|\mathcal{N}| = k \sqrt{m} \log m$ for some large constant $k$. Each set in $\mathcal{N}$ is constructed by randomly picking exactly one element from the sets $ \{1,\dots,\sqrt{m}\},\{\sqrt{m}+1,\dots,2\sqrt{m}\},\dots,\{m-\sqrt{m}+1,m\}$. 
  	 %The defined sets $\mathcal{N}$ will be constructed probabilistically such that $|\mathcal{N}| = k \sqrt{m} \log m$ for large constant $k$. 
  	 Thus the size of each set in $\mathcal{N}$ is also $\sqrt{m}$. The value of the partial function at each set in $\mathcal{N}$ is set to $1$, and at each set in $\mcP$ is $\sqrt{m}$.   
  	 %We will show that $\alpha^* \ge \frac{\sqrt{m}}{ k \log m}$. 
  	 
  	 Recall the bipartite graph description of partial functions. In the appendix, we show that 
  	 \begin{align} 
  	 \label{prob}
  	 Pr\left[ |N(S) \cap  \mathcal{N}| \ge |N(S) \cap  \mathcal{P}| \quad  \forall S \subseteq [m]\right] > 0 \, .
  	 \end{align}
  	  Thus $|N^-(S)| \ge |N^+(S)|$ for all $S \subseteq [m]$.
  	  It can be seen that $\kappa = 1$, since for any $v \in \mathcal{P}$ every vertex in $N(v)$ has a set in $\mcN$ containing it, while for $v \in \mcN$, all sets in $\mcP$ are required to cover it. Thus, the minimum value of $\frac{\sum_{w \in R} f_w}{f_v}$ over all  $R \in \mathcal{F}_v$ is $1$ for $v \in \mcP$, while for any $v \in \mathcal{N}$, the minimum value is $m$. Now we show  $\alpha^* \ge \frac{\sqrt{m}}{ k \log m}$.
  	   Let $l_v = 1$ for all $v \in \mathcal{P}$ and $l_v = -1$ for all $v \in \mathcal{N}$. So   $\sum_{v \in N(S)} l_v \le 0$ for all $S$. Since $\sum_{l_v > 0} f_v l_v = m$ and $\sum_{l_v < 0} f_v l_v = -k \sqrt{m} \log m$, we have $\alpha^* \ge \frac{\sqrt{m}}{ k \log m}$. 
\end{proof}  
\begin{remark}
In the above lemma, the bipartite graph shown to exist  does not have $4$-cycle. For such partial functions, we have $\alpha^* \le \sqrt{m}/\kappa$ (Lemma \ref{hall-theorem} and \ref{no-4-cycle}). Therefore, Theorem \ref{tight-logm-factor} shows that the inequality $ \alpha^* \le \frac{\min \{d,\sqrt{m}\}}{\kappa}$ cannot be improved by more than a $\log m$ factor (for such partial functions). It is an interesting open problem to close the gap between our upper bound of $m^{2/3}/\kappa$ and lower bound of $\sqrt{m}/\kappa \log m$ for general partial functions.
\end{remark}
%  	 Thus it suffices to show (\ref{prob}). This we show in Appendix.
  	    % \sqrt{m}2^{\sqrt{m}}\sqrt{m}^{\sqrt{m}}2^{(1+o(1))\sqrt{m}\log (k \log m)}$.
  	   %We have $Pr[ |N^-(S)| \ge |N^+(S)| \quad \forall S \subseteq [m]] = 1 -  Pr[\exists S \in \mathcal{S} \quad |N^-(S)| \ge |N^+(S)|]$
  	  
  	 %Let for all $S$, $N(S) \cap 
  	 %Each set in $\mathcal{N}$ is constructed by randomly picking exactly one element from the sets $ \{1,\dots,\sqrt{m}\},\{\sqrt{m}+1,\dots,2\sqrt{m}\},\dots,\{m-\sqrt{m}+1,m\}$.

 \section{Coverage Norm Extension}
 
 From Theorem \ref{mobius-inversion}, the Norm Extension problem can be stated as the convex program Norm-P. %By introducing  new variables, 
 %($\epsilon_i = p_i - n_i$)
  It can be equivalently transformed to a linear program whose dual is Norm-D. 
 
 % $\sum_{i \in [n]} |\epsilon_i|$  
 %the $L_p$ norm, for given $p$, of the error vector $(\epsilon_1,\dots,\epsilon_n)$ of a
 %where $\epsilon_i = g(T_i) - f_i$ for all $i \in [n]$ for some coverage function $g$.
 
 % there exists a coverage function $f$ satisfying $f(T_i) = f_i$ for all $ i \in [n]$ iff  the following linear program   is feasible, where the variables are the $W$-coefficients $w(S)$ for all $S \in  2^{[m]} \setminus \emptyset$ 
  
 \begin{minipage}[t]{0.4\textwidth}
 	Norm-P: \qquad \qquad $\min \sum_{i =1}^{n} |\epsilon_i|$
 	\[ \displaystyle\sum_{S : S \cap T_i \neq \emptyset } w(S)  = f_i + \epsilon_i  \quad \forall i \in [n]\] 
 	\[ w(S) \ge 0 \quad  \forall S \in  2^{[m]}\setminus \emptyset\]
 \end{minipage}
 \hfill \vline \hfill
 \begin{minipage}[t]{0.5\textwidth}
 	Norm-D: \qquad $ \max \quad \displaystyle\sum_{i =1}^{n}f_i y_i $ 
 	\begin{align}
 	\sum_{i : S \cap T_i \neq \emptyset }y_i & \le 0 \quad  \forall S \in  2^{[m]}\setminus \emptyset  \label{giced1} \\
 	-1 \le y_i& \le 1  \quad \forall i \in [n]\label{giced2} 
 	%	y_i,z_i & \ge 0 \label{giced3}
 	\end{align}
 \end{minipage}

Both Norm-P and Norm-D are clearly feasible. We use $OPT$ for the optimal value of Norm-P (and Norm-D). 
%Obviously $OPT = 0$ iff the partial function is extendible. 
 As stated earlier, no multiplicative approximation is possible for $OPT$ unless P = NP. Therefore, we consider  additive approximations for Norm Extension.   

An algorithm for Norm Extension is called an $\alpha$-approximation algorithm if for all instances (partial functions), the value $\beta$ returned by the algorithm  satisfies $OPT \le \beta \le OPT + \alpha$.
First we prove our upper bound in Theorem~\ref{norm-ub}.  Recall that $d = \max_{i \in [n]} |T_i|$ and $F = \sum_{i \in [n]} f_i$. As noted earlier, the function $f(\cdot)=0$ is trivially an $F$-approximation algorithm for Norm Extension, since $\sum_{i \in [n]} |f(T_i) - f_i| = F$.

\begin{proof}[Proof of Theorem~\ref{norm-ub}.]
 Consider the linear programs obtained by restricting Norm-P to variables $w(S)$ for $S \in [m]$, and similarly restricting the constraints~\eqref{giced1} in Norm-D to sets $S \in [m]$ only. They are clearly the primal and dual of each other. The optimal values of these modified problems (say $OPT^R$, $w^R$ and $y^R$) can be computed in polynomial time. We will show that $OPT \le OPT^R \le OPT + (1-1/d)F$ for the proof of the theorem. The first inequality is obvious, since $OPT^R$ is the optimal solution to a relaxed (dual) linear program.
 
For the second inequality, define $y^A = (y^A_1,\dots,y^A_n)$ as the vector such that for all $i \in [n]$, $y^A_i = y^R_i$ if  $y^R_i \le 0$ and $y^R_i/d$ otherwise. Then note that 

\begin{equation}
OPT^R ~=~ \sum_{i \in [n]} f_i y_i^R ~=~ \sum_{i \in [n]} f_i y_i^A + (1-1/d) \sum_{i: y_i^R \ge 0} f_i y_i^R \le \sum_{i \in [n]} f_i y_i^A + (1-1/d) F \, ,
\label{eqn:fiyi}
\end{equation}

\noindent where the last inequality is because each $y_i^R \le 1$. We now show that $y^A$ is a feasible solution for Norm-D, and hence $\sum_{i \in [n]} f_i y_i^A \le OPT$. Together with~\eqref{eqn:fiyi} this completes the proof.

Clearly $y^A$ satisfies the constraints (\ref{giced2}). We will show that $y^A$ also satisfies the constraints (\ref{giced1}) for all $S \in 2^{[m]}\setminus \emptyset$. Consider any $S \in 2^{[m]}\setminus \emptyset$. Let $P = \{i \in [n]| S \cap T_i \neq \emptyset \quad \text{and} \quad y^R_i > 0\}$ and $N = \{i \in [n]| S \cap T_i \neq \emptyset \quad \text{and} \quad y^R_i \le 0\}$. Thus $P \cup N$ are all sets in $\mcD$ that have nonempty intersection with $S$.  We have for any $j \in S$ that $  	\sum_{i : j \in T_i} \, y^R_i \le 0$. Summing these inequalities over $j \in S$, we obtain $\sum_{i \in P \cup N} |T_i \cap S| y^R_i \le 0$. Thus $\sum_{i \in P }  y^R_i + d \sum_{i \in N}  y^R_i \le 0$. From the definition of $y^A_i$,  we get $\sum_{i:S \cap T_i \neq \emptyset }y^A_i  \le 0$, as required.
\end{proof}

%From the proof of above Theorem, it is clear that we can efficiently compute  a coverage function $f$ that has polynomial support size and $OPT \le f(S) \le OPT + (1-1/d) F OPT$. We formally prove.
%\begin{theorem}
%Given a partial function, the $W$ coefficients of a coverage function $f$ can be efficiently constructed such that $OPT \le f(S) \le OPT + (1-1/d) F OPT$ for all $S \subseteq [m]$.
%\end{theorem} 
%\begin{proof}
%Consider a linear program Relaxed Norm-P which is dual of Relaxed Norm-D. 	Since  Relaxed Norm-D is feasible, $OPT^R$ is the optimal value of both linear programs. In Relaxed Norm-P, variables are $w(S)$  for $S \in [m]$ only. Hence the optimal solution of Relaxed Norm-P gives a coverage function with support size at most $m$.
%\end{proof}

We now prove the lower bound in Theorem~\ref{norm-ub}. We start with an outline of the proof. In a nutshell, the proof shows the following reductions (for brevity, WM stands for Weak Membership and WV for Weak Validity):

\[
\text{Densest-Cut} \le_p \text{Cut WM} \le_p \text{Span WM} \equiv \text{Coverage WM} \le_p \text{Coverage WV} \le_p \text{Norm Extension} \, .
\]

Given a graph $G=(V,E)$ and a positive rational $M$, the \emph{Densest-Cut} problem asks if there is a cut $S \subset V$ such that $\frac{|\delta(S)|}{|S|\,|V \setminus S|} > M$. The Densest-Cut problem is known to be NP-hard~\cite{densest-cut}, and ultimately we reduce the Densest-Cut problem to the problem of  approximating the optimal value for Norm-P. We formally define the other problems later. However, to show this reduction, we need to utilize the equivalence of optimization (or validity) over a polytope and membership in the polytope. Typically optimization algorithms use the equivalence of optimization and separation to show upper bounds, e.g., that a linear program with an exponential number of constraints can be optimized. Our work is unique in that we use the less-utilized equivalence of validity and membership; and secondly, we use it to show hardness. In fact, since we are looking for hardness of approximation algorithms, our work is complicated further by the need to use \emph{weak} versions of this equivalence.

Given a convex and compact set $K$ and a vector $c$, the \emph{Strong Validity} problem, given a vector $c$, is to find the maximum value of $c^T x$ such that $x \in K$ (the $x$ which obtains this maximum is not required). In the \emph{Strong Membership} problem, the goal is to determine if a  given vector $y$ is  in $K$ or not. The \emph{Weak Validity} and \emph{Weak Membership} problems are weaker versions of the Strong Validity and Strong Membership problems respectively, formally defined later. Then Theorem 4.4.4 in~\cite{ellipsoid} says that for a convex and compact body $K$, there is an oracle polynomial time reduction from the Weak Membership problem for $K$ to the Weak Validity problem for $K$.

%We now build framework for our lower bound. We describe some algorithmic problems  in connection with a convex and compact set $K$ as defined in ~\cite{ellipsoid}. In our case $K = \{y \in \mathbb{R}^n : \sum_{i : S \cap T_i \neq \emptyset } y_i \le 0 \quad  \forall S \in  2^{[m]}\setminus \emptyset ,\quad ||y||_\infty \le 1  \}$ is the polytope of Norm-D.
%The dual program Norm-D is bounded. 
%(in our case, $K = \{y \in \mathbb{R}^n : \sum_{i : S \cap T_i \neq \emptyset } y_i \le 0 \quad  \forall S \in  2^{[m]}\setminus \emptyset ,\quad ||y||_* \le 1  \}$ )
%if we can solve the Weak Validity problem in polynomial time, then we can also solve the Weak Membership problem   in polynomial time. 

%For our case, $K = \{y \in \mathbb{R}^n : \sum_{i : S \cap T_i \neq \emptyset } y_i \le 0 \quad  \forall S \in  2^{[m]}\setminus \emptyset ,\quad ||y||_\infty \le 1  \}$ is the polytope for the dual Norm-D. First we show that an $\alpha = 2^{poly(n,m)} F^\delta$ approximation algorithm for Coverage Norm Extension implies an efficient algorithm for the Weak Validity problem for $K$. This in turn by~\cite[Theorem 4.4]{ellipsoid}, implies an efficient algorithm for Weak Membership problem for $K$ which we call \textsf{Coverage Weak Membership}. Finally we show hardness of Coverage Weak Membership by reduction to the Densest-Cut problem. 
To formally state Theorem 4.4.4 from~\cite{ellipsoid}, which will form the basis of our reduction, we need the following notations and definitions.
%We call the weak membership problem corresponding to $K$ as \textit{Coverage-Weak-Membership}. We show the hardness of Coverage-Weak-Membership thereby implying the hardness of approximation of Norm-Extension.  To formally state Theorem 4.4.4 from~\cite{ellipsoid}, which will form the basis of our reduction, we need some notations and definitions which we give below.
 
 We use $||.||$ for the Euclidean norm. Let $K \subseteq \mathbb{R}^{n'}$ be a convex and compact set. A ball of radius $\epsilon > 0$ around $K$ is defined as 
  \[S(K,\epsilon) := \{x \in \mathbb{R}^{n'} | \thinspace  ||x -y|| \le \epsilon \thinspace \thinspace \text{for some $y$ in $K$}\} \, .\] 
  Thus, for $x \in \mathbb{R}^{n'}$, $S(x,\epsilon)$ is the ball of radius $\epsilon$ around $x$. The interior $\epsilon$-ball of $K$ is defined as 
  \[S(K,-\epsilon) := \{x \in K | S(x,\epsilon) \subseteq K\}\]
  Thus $S(K,-\epsilon)$ can be seen as points deep inside $K$.
  \begin{definition}[\cite{ellipsoid}]
  	Given a vector $c \in \mathbb{Q}^{n'}$, a rational number $\gamma$ and a rational number $\epsilon > 0$, the Weak Validity problem is to assert either (1) $c^T x \le \gamma + \epsilon$ for all $x \in S(K,-\epsilon)$, or (2) $c^T x \ge \gamma - \epsilon$ for some $x \in S(K,\epsilon)$. Note that the vector $x$ satisfying the second inequality is not required.
  \end{definition}
%  The strong validity problem on the other hand (given $c,\gamma,\epsilon$) is to assert either $c^T x \le \gamma$ for all $x \in K$ or for some $x \not \in K$, $c^T x > \gamma$, i.e., equivalent to finding $\max c^T x$ such that $x \in K$.
  \begin{definition}[\cite{ellipsoid}]
  	Given a vector $y \in \mathbb{R}^{n'}$ and $\delta > 0$, the Weak Membership problem is to assert either (1) $y \in S(K,\delta),$ or (2) $y \not \in S(K,-\delta)$.
  \end{definition} 
  Intuitively, in the Weak Membership problem,  it is required to distinguish between the cases when the given point $y$ is far from the polyhedron $K$ (in which case, the algorithm should return $y \not \in S(K,-\delta)$) and   $y$ is deep inside $K$ (which case the algorithm should return $y \in S(K,\delta)$).  If $y$ is near the boundary of $K$, then either output can be returned. Our reduction crucially uses the following result. The notation $<K>$, $<\delta>$ denotes the number of bits used to represent $K$ and $\delta$. %This is a weak version of  membership problem where the goal is to determine if $y \in K$ or $y \not \in K$. 

  \begin{theorem} [Theorem 4.4.4 of \cite{ellipsoid}]
  	\label{4.4.4}
  	Given a weak validity oracle for  $K \subseteq \mathbb{R}^{n'}$  that runs in time polynomial in  $(n',\langle K\rangle,\langle c \rangle)$ and a positive $R$ such that $K \subseteq S(0,R)$ then  the Weak Membership problem for the polyhedron $K$  can be solved in  time polynomial in input size $(n',\langle K\rangle,\langle \delta \rangle)$.
  \end{theorem}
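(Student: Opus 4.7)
The plan is to reformulate weak membership as a concave maximization problem that the ellipsoid method can attack using only the approximate support-function values that the weak validity oracle provides. Define $g(c) := c^T y - h_K(c)$, where $h_K(c) := \max_{x \in K} c^T x$ is the support function of $K$. A direct calculation using $K \subseteq S(0,R)$ shows that $g$ is concave and $R$-Lipschitz, and that $y \in S(K,\delta)$ iff $\sup_{\|c\| \le 1} g(c) \le \delta$ while $y \notin S(K,-\delta)$ iff $\sup_{\|c\| \le 1} g(c) > -\delta$. Thus an approximate evaluation of this supremum (to accuracy better than $\delta$) suffices to answer Weak Membership. A binary search on the threshold $\gamma$ in the weak validity oracle converts it into an approximate evaluator for $h_K(c)$, hence for $g(c)$, to any polynomial accuracy.

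First I would run the shallow-cut ellipsoid method to maximize $g$ over the unit ball $S(0,1)$. Start with $E_0 = S(0,1)$; at iteration $k$, use the validity oracle to approximate $g(c_k)$ at the center $c_k$ of the current ellipsoid $E_k$. If this estimate exceeds a threshold $\tau$ slightly below $0$, halt and assert $y \notin S(K,-\delta)$. Otherwise, concavity of $g$ combined with the Lipschitz bound $R$ yields an explicit half-space through $c_k$ that contains every $c$ with $g(c) > \tau$; feeding this half-space into the ellipsoid update shrinks $\mathrm{vol}(E_k)$ by at least the standard factor $e^{-1/(5n')}$. After $\mathrm{poly}(n', \log R, \log 1/\delta)$ iterations, either the halting test has triggered, or $E_k$ has become too small to witness $g(c) > \tau$, at which point we assert $y \in S(K,\delta)$. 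Each iteration makes polynomially many calls to the validity oracle on inputs of polynomial bit-length, so the total running time is polynomial in $n'$, $\langle K\rangle$ and $\langle \delta\rangle$ as required.

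The main obstacle is the quantitative control of the approximation error. The validity oracle returns $h_K(c)$ only to within additive $\epsilon$, and this error enters every shallow cut; the cumulative error over $\mathrm{poly}(n',\log R,\log 1/\delta)$ iterations must remain well below the target tolerance $\delta$. This forces $\epsilon$ to be chosen as a sufficiently small polynomial function of $\delta/(n' R \langle K\rangle)$, while still keeping $\langle \epsilon\rangle$ polynomial so that the validity oracle is called on polynomial-size inputs. A second technical point is that the validity oracle returns only the value $h_K(c)$ and not a maximizer $x \in K$, so the standard central-cut ellipsoid (which would require a subgradient of $g$) is not directly available; this is precisely the reason one must use the shallow-cut variant, which derives its separation from function values and the Lipschitz constant alone. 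Standard rounding of ellipsoid centers to polynomial bit-length (as in Chapter~3 of~\cite{ellipsoid}) then carries the construction through.
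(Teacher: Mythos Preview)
The paper does not prove this theorem. It is quoted verbatim from Gr\"otschel--Lov\'asz--Schrijver~\cite{ellipsoid} and used as a black box in the chain of reductions leading to Theorem~\ref{norm-lb}. There is therefore no ``paper's own proof'' to compare your proposal against.

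For what it is worth, your overall strategy---reduce Weak Membership to approximately maximizing the concave function $g(c)=c^{T}y-h_K(c)$ over the unit ball, using the validity oracle (via binary search on $\gamma$) as an approximate evaluator of $h_K$---is indeed the approach taken in~\cite{ellipsoid}. The one point I would flag is your description of how the shallow-cut ellipsoid step produces a cut from ``function values and the Lipschitz constant alone.'' Concavity plus a value estimate at the center $c_k$ does not by itself yield a half-space containing the superlevel set $\{g>\tau\}$; you still need directional information. The actual argument in~\cite{ellipsoid} either passes through a weak separation oracle for an auxiliary body (e.g., a level set or the polar) or evaluates $g$ at several nearby points to extract an approximate subgradient. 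Your sketch elides this, and as written the cut-generation step is not justified. If you intend to supply a self-contained proof rather than cite the result, that is the place requiring more detail.
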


For our problem $K$ is the polytope of linear program Norm-D.
   \begin{equation}
       K := \left \{y \in \mathbb{R}^n : \sum_{i : S \cap T_i \neq \emptyset } y_i \le 0 \quad  \forall S \subseteq [m] ,\quad ||y||_\infty \le 1 \right \}\, . \label{eqn:polytopeK}
   \end{equation}

\paragraph*{Coverage WM $\le_p$ Coverage WV $\le_p$ Coverage Norm Extension.} Coverage Weak Membership is the Weak Membership problem for polytope $K$~\eqref{eqn:polytopeK}. Given a set $\mcD = \{T_1,\dots,T_n\}$ (where $T_i \subseteq [m]$) with weights  $\hat{y}_i$ ($ \hat{y}_i  \in \mathbb{R}$) associated with $T_i$ for all $i \in [n]$  and a $\delta > 0$, the goal in this problem is to assert either $(\hat{y}_1,\dots,\hat{y}_n) \in S(K,\delta)$ or $(\hat{y}_1,\dots,\hat{y}_n) \not \in S(K,-\delta)$.  

Note that Coverage Norm Extension is  the Strong Validity problem for $K$ with $c_i = f_i$. We  show the following lemma (Coverage WV $\le_p$ Coverage Norm Extension).
 
% Recall that
 %\[K := \left \{y \in \mathbb{R}^n : \sum_{i : S \cap T_i \neq \emptyset } y_i \le 0 \quad  \forall S \in  2^{[m]}\setminus \emptyset ,\quad ||y||_* \le 1 \right \}\]
% We first start by showing that an approximation algorithm for Norm-Extension implies an algorithm for weak validity problem for $K$. 
 \begin{lemma}
 	\label{norm-wv}
 	If there is an $\alpha = 2^{poly(n,m)} F^\delta$ efficient approximation algorithm (for any fixed $0 \le \delta < 1$) for Coverage Norm Extension then there is an efficient algorithm for Weak Validity problem for $K$.
 \end{lemma}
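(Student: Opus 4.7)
The plan is to reduce Weak Validity for $K$ to Coverage Norm Extension by combining two ingredients: a structural reduction from arbitrary rational objectives $c$ to non-negative ones, and a scaling argument that turns the $F^\delta \cdot 2^{poly(n,m)}$ additive error into an arbitrarily small additive error on the true objective. The starting point is that by LP duality between Norm-P and Norm-D, Coverage Norm Extension on input $\{(T_i,f_i)\}_{i=1}^n$ with $f \ge 0$ returns an additive approximation to $\max_{y \in K} f^T y$, i.e.\ the Strong Validity value of $K$ in the non-negative case.

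The key structural step is the identity
\[
\max_{y \in K} c^T y \;=\; \|c^-\|_1 \;+\; \max_{y \in K} (c^+)^T y,
\]
where $c = c^+ - c^-$ decomposes $c$ into its positive and negative parts (both non-negative, disjoint support). The reason is that $K$ is closed under coordinate-wise decrease: replacing $y_i$ by any $y_i' \in [-1, y_i]$ keeps every constraint $\sum_{j : T_j \cap S \ne \emptyset} y_j \le 0$ satisfied and keeps $y$ in the box $[-1,1]^n$. Since $(c^+)^T y$ does not depend on coordinates in $N_{-} := \{i : c_i < 0\}$, one can push any maximizer of $c^+$ to $y_i = -1$ for all $i \in N_{-}$ without changing its $c^+$-value; evaluating $c^T y$ at such a point contributes exactly $\sum_{i \in N_{-}}(-c_i) = \|c^-\|_1$, giving the ``$\ge$'' direction. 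The ``$\le$'' direction is immediate from $y_i \ge -1$, which yields $-(c^-)^T y \le \|c^-\|_1$ for every $y \in K$. Thus it suffices to additively approximate $V^+ := \max_{y \in K}(c^+)^T y$ to within $\epsilon/2$ and then add $\|c^-\|_1$.

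To achieve this accuracy I would scale the objective: feed the Coverage Norm Extension algorithm with $f := N c^+$ for a positive integer $N$ whose bit-length will be polynomial in the input. The approximation returns $\beta$ with $N V^+ \le \beta \le N V^+ + 2^{poly(n,m)}(N \|c^+\|_1)^\delta$, so $\beta/N$ approximates $V^+$ with additive error $2^{poly(n,m)} \|c^+\|_1^\delta / N^{1-\delta}$. Because $\delta < 1$ is a fixed constant, choosing $N$ with $\log N = \frac{1}{1-\delta} \cdot poly(n,m,\langle c\rangle,\langle \epsilon\rangle)$ drives this error below $\epsilon/2$, and $N$ then has polynomial encoding length. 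Setting $V' := \beta/N + \|c^-\|_1$ gives $V \le V' \le V + \epsilon/2$ for $V := \max_{y \in K} c^T y$, and comparing $V'$ to $\gamma + \epsilon/2$ settles Weak Validity: if $V' \le \gamma + \epsilon/2$ then $V \le \gamma + \epsilon$ and case~(1) holds since $S(K,-\epsilon) \subseteq K$; otherwise $V > \gamma$, so any $c^T y$-maximizer in $K \subseteq S(K,\epsilon)$ witnesses case~(2).

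The main obstacle I anticipate is the structural identity for arbitrary $c$: it relies essentially on the fact that the only lower bound on individual coordinates in $K$ is the box constraint $y_i \ge -1$, and that the ``coverage'' inequalities $\sum_{j : T_j \cap S \ne \emptyset} y_j \le 0$ are monotone non-decreasing in each $y_j$, which together yield the closed-form correction term $\|c^-\|_1$. Once that identity is in place, the scaling step is routine, and the polynomial bit-length for $N$ follows from $\delta$ being a fixed constant strictly less than $1$.
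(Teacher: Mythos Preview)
Your proposal is correct and follows essentially the same approach as the paper: reduce from arbitrary $c$ to non-negative $c$ via the downward-closure of $K$ (your identity $\max_{y\in K} c^T y = \|c^-\|_1 + \max_{y\in K}(c^+)^T y$ is precisely the paper's reduction, just stated at the strong-validity level rather than the weak), and then scale the objective so that the $2^{poly(n,m)}F^\delta$ additive error shrinks below $\epsilon$ using $\delta<1$. Your presentation is arguably cleaner since you defer the trivial containments $S(K,-\epsilon)\subseteq K\subseteq S(K,\epsilon)$ to the very end, but the content is the same.
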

   Theorem~\ref{4.4.4} immediately gives Coverage WM $\le_p$ Coverage WV.

\paragraph*{Span WM $\equiv$ Coverage WM.} In fact, we show that Coverage Weak Membership is NP-hard  even for the case when $|T_i| = 2$ for all $i \in [n]$.\footnote{There is a relatively easier proof for unrestricted $d$ by reduction from Set Cover, which we show in the appendix.} The restriction $|T_i|=2$ gives us a graphical representation of the membership problems. We first introduce some notations, which will be used in the remainder. Given a weighted graph  $G = (V,E)$ and a set $S \subseteq V$, the span $E_{G}^+(S)$ and cut $\delta_{G}(S)$ of set $S$ are the set of edges with at least one endpoint and exactly one endpoint in $S$ respectively. 
	%and  the cut $\delta(S)$ is the set of edges with exactly one end point in $S$. We use 
	We use $w(E_{G}^+(S)), w(\delta_{G}(S))$ and $w(E_{G}(S))$ for the sum of weight of edges with at least one endpoint, exactly one endpoint and both endpoints in $S$ respectively. If the set $S$ is a single vertex $v$ then we use $v$ instead of $\{v\}$. If the graph $G$ is understood from the context we drop the subscript $G$.
	
	 %First we show the hardness of the problem for unrestricted instances, when the size of each set $T_i$ in the input is unrestricted. 
	% For any vector $y$, recall that we  use $||y||_{\infty}$ for $\max_{i} |y_i|$ and $||\hat{y} - y||$ for the Euclidean distance between $\hat{y}$ and $y$.
%	We will frequently use the fact that the distance of a point $x_0$ from the hyperplane $w^T  x + b = 0$ is equal to $\frac{|w^T x_0 + b|}{||w||}$.
	
	%	\begin{theorem}
	%		Coverage Weak Membership is NP-Hard.
	%	\end{theorem}
	%	\begin{proof}	
	%	\end{proof}
	%	Now we  will show that  Coverage-Weak-Membership is NP-Hard even for the case when $|T_i| = 2$ for all $i \in [n]$. 
	
		% $w(\delta_{G}(S))$ to denote the sum of weights of the edges with exactly one endpoint in $S$ in graph $G$ , $w(\delta_{G}(v)) = w(\delta_{G}(\{v\}))$ is the sum of weights of edges incident on vertex $v$ in graph $G$,  $w(E_{G}(S))$ is the sum of weights of edges with both endpoints  in $S$ in graph $G$, $w(E_{G}^+(S))$ is the sum of weights of edges with at least one endpoint  in $S$ in graph $G$.
		
		Given a set $\mcD = \{T_1,\dots,T_n\}$ $(T_i \subseteq [m])$ with the property that $|T_i| = 2$ for all $i \in [n]$, we construct a weighted graph $G = (V,E)$ as follows: vertex set $V = [m]$ and $\{i,j\} \in E$ ($i,j \in [m]$) iff there exists a $T_k \in \mcD$ such that $T_k = \{i,j\}$. The weight 
		$\hat{y}_k$ associated with $T_k = \{i,j\}$ is now associated to the edge $\{i,j\}$. 	Now the constraint $ \sum_{i : T_i \cap S \neq \emptyset}  y_i \le 0$  (in the polyhedron $K$) translates to  $ \sum_{ e \in E^{+}(S)}  y_e \le 0$ for all $S \subseteq V$. Thus Coverage-Weak-Membership for $|T_i| = 2$ case is equivalent to following problem, which we call \emph{Span Weak Membership}.
		%, where $E^{+}(S)$ denote set all edges whose at least one endpoint is in $S$. 
		
		%	From a bipartite graph $G=(A \cup B,E)$ (with property that the degree of each $a \in A$ is equal to two), we construct a  non-bipartite graph  $H = (W,F)$ as follows: vertex set $W = B $ and there is an edge between $b_i$ and $b_j$ ($b_i,b_j \in B$) iff there exists $a \in A$ such that $\{a,b_i\},\{a,b_j\} \in E$.  The pair $(\hat{y}_a,\hat{z}_a)$ associated with $a \in A$ is assigned to this edge. It is easy to see that there is one to one correspondence between  bipartite graphs and  constructed non-bipartite graphs.   For any edge $e$ in this graph, we use $(\hat{y}_e,\hat{z}_e)$ to denote the pair associated with the edge. Now the constraint $ \sum_{a \in N(S)} z_a - y_a \le 0$ for $S \subseteq 2^{B}$ in the polyhedron $D$ of the bipartite graph $G$ translates to  $ \sum_{ e \in E^{+}(S)} z_e - y_e \le 0$ for the above non-bipartite graph, where $E^{+}(S)$ denote set all edges whose at least one endpoint is in $S$. 

		 Given a  weighted graph $G = (V,E)$ with  weights $\hat{y}_e$ on the edges  and $\delta > 0$, assert either $\hat{y} = (\hat{y}_e)_{e \in E}$ is in $S(K_s,\delta)$ or $\hat{y}$ is not in $S(K_s,-\delta)$, where 
		 \begin{equation}
		K_s = \left \{\sum_{e \in E^+(S)} y_e \le 0 \quad  \forall S \subseteq V , \,  ||y||_\infty \le 1 \right \} \, .
		\label{eqn:definekprime}
		\end{equation}
		
\paragraph*{Densest-Cut $\le_p$ Cut WM $\le_p$ Span WM.}	We now show that the Span Weak Membership  is NP-Hard thereby showing Coverage Weak Membership is also NP-Hard for the restricted setting with $|T_i| = 2$ for all $i \in [n]$.  We first define Cut Weak Membership.
		
		 Given a weighted graph $G = (V,E)$ with weights $\hat{y}_e$ on the edges  and $\delta > 0$, the goal in Cut Weak Membership is to assert either $\hat{y} = (\hat{y}_e)_{e \in E}$ is in $S(K_c,\delta)$ or $\hat{y}$ is not in $S(K_c,-\delta)$ where 
		 \begin{equation}
		K_c = \left \{\sum_{e \in \delta(S)}  y_e \le 0 \quad  \forall S \in  2^{V}\setminus \emptyset ,  ||y||_\infty \le 1 \right \} \, .
		\label{eqn:definek}
		\end{equation}
		
		Note that in the Cut Weak membership problem, we have  constraints $\sum_{e \in \delta(S)} y_e \le 0 $ instead of $\sum_{e \in E^+(S)} y_e \le 0$ for all $S$. %($\delta(S)$ is the set all edges whose exactly one endpoint is in $S$). \\\\
		%For Span-Weak-membership and Cut-Weak-membership problems, we define weight on edge $e$ as $w_e = \hat{z}_e - \hat{y}_e$.\\ \\
		
		% Note that  $ \frac{|\delta(S)|}{|S| |V \setminus S|}$ called the density of cut $(S,V \setminus S)$  can  take  values only from $\{\frac{r}{s(|V|-s)} | 1 \le r \le |E|, 1 \le s \le |V|-1, r,s \in \mathbb{Z}_+\}$. Thus there are only polynomially many  possible values of cut densities. We will use this fact in our proof. 
		
		%We will reduce Unit-Capacity-Densest-Cut to Span-Weak-Membership.
	%	The proof is via two reductions - one from Cut Weak Membership to Span Weak Membership and other from Densest-Cut to Cut Weak Membership.%	Before we give the reductions, let us give some notations.  
		%We use  $w(\delta_{G}(S))$ to denote the sum of weights of the edges with exactly one endpoint in $S$ in graph $G$ , $w(\delta_{G}(v)) = w(\delta_{G}(\{v\}))$ is the sum of weights of edges incident on vertex $v$ in graph $G$,  $w(E_{G}(S))$ is the sum of weights of edges with both endpoints  in $S$ in graph $G$, $w(E_{G}^+(S))$ is the sum of weights of edges with at least one endpoint  in $S$ in graph $G$.
		%(and thus $ w(\delta_{G}(S)) =\sum_{e \in \delta_{G}(S)} w_e =  \sum_{e \in \delta_{G}(S)} (\hat{z}_e - \hat{y}_e)$)
		\begin{lemma}
			\label{densest-cut-span}
	There is a reduction from Densest-Cut to Cut Weak Membership  and from  Cut Weak Membership to Span Weak Membership. Therefore, Coverage Weak Membership is NP-hard even when $d = 2$.
		\end{lemma}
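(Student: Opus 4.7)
The plan is to establish the two stated reductions; together with the observation that Span Weak Membership is precisely Coverage Weak Membership restricted to $|T_i| = 2$ for all $i$, this yields NP-hardness of Coverage WM with $d=2$.

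For $\text{Densest-Cut} \le_p \text{Cut WM}$, given an instance $(G=(V,E), M)$ with $M \in (0,1)$ rational (WLOG, by scaling), I build the weighted complete graph $K_{|V|}$ on $V$ with $\hat{y}_{\{u,v\}} = 1-M$ for $\{u,v\} \in E(G)$ and $\hat{y}_{\{u,v\}} = -M$ otherwise, so that $\|\hat{y}\|_\infty \le 1$. A direct computation shows that for every nonempty proper $S \subseteq V$,
\[
\sum_{e \in \delta_{K_{|V|}}(S)} \hat{y}_e = (1-M)|\delta_G(S)| - M\bigl(|S|\,|V\setminus S| - |\delta_G(S)|\bigr) = |\delta_G(S)| - M\,|S|\,|V\setminus S|,
\]
which is $\le 0$ iff the cut density $|\delta_G(S)|/(|S|\,|V\setminus S|) \le M$. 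Hence $\hat{y} \in K_c$ iff every cut of $G$ has density at most $M$. Since cut densities are rationals with denominator at most $|V|^2$, the gap between ``density $> M$'' and ``density $\le M$'' is at least $1/|V|^2$; choosing the weak-membership tolerance $\delta$ to be a sufficiently small inverse polynomial in $|V|$ makes a Cut-WM oracle decide Densest-Cut.

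For $\text{Cut WM} \le_p \text{Span WM}$, I transform $(G,\hat{y},\delta)$ into an instance $(G', \hat{y}', \delta')$ for Span WM. The starting identity is
\[
\sum_{e \in E^+(S)} y_e = \sum_{e \in \delta(S)} y_e + \sum_{e \in E(S)} y_e,
\]
so the Span constraint carries an extra internal-edge term that the construction must absorb. My plan is to subdivide each edge $e=\{u,v\}\in E$ by a fresh midpoint $m_e$, retain the original edge $\{u,v\}$, and assign weights $2\hat{y}_e$ to $\{u,v\}$ and $-\hat{y}_e$ to each of $\{u,m_e\}$ and $\{m_e,v\}$. A direct check shows that at $S' = S \subseteq V$ (no midpoint chosen) the Span contribution collapses to exactly $\sum_{e \in \delta_G(S)} \hat{y}_e$, the Cut expression. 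To eliminate the unintended Span constraints that arise when $S'$ contains midpoints (for instance, $S'=\{m_e\}$ alone would impose $\hat y_e \ge 0$), I would add a guard gadget around each $m_e$, e.g.\ a pendant $p_e$ joined to $m_e$ with an edge of weight close to $1$ after the whole construction is rescaled so that $\|\hat{y}'\|_\infty \le 1$; any configuration of $S'$ that includes $m_e$ without $p_e$ (or vice versa) then contributes a large positive Span value, so within the weak-membership slack only $S' \subseteq V$ configurations remain binding.

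The main obstacle is the gadget analysis in the second reduction: one must balance the weight scales so $\|\hat{y}'\|_\infty \le 1$, choose $\delta'$ polynomially smaller than both the guard's forcing margin and the $1/|V|^2$ gap produced in the first reduction, and verify case by case that the Span constraints at sets $S'$ containing midpoint or pendant vertices are either non-binding or absorbed into $\delta'$. Once this is done, a Span-WM oracle on $(G',\hat{y}',\delta')$ resolves the Cut-WM question on $(G,\hat{y},\delta)$, which in turn resolves Densest-Cut. Because Span WM is literally Coverage WM with $|T_i|=2$, this chain of reductions proves that Coverage Weak Membership is NP-hard even when $d=2$.
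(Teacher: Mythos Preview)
Your first reduction is essentially the paper's. Two small quibbles: the ``WLOG $M\in(0,1)$ by scaling'' needs an argument (the paper instead divides all weights by $L=2\max\{M,|1-M|\}$ to force $\|\hat y\|_\infty\le 1$), and your gap bound is misstated --- the cut value $|\delta_G(S)|-M\,|S|\,|V\setminus S|$ is a multiple of $1/q$ when $M=p/q$, so the relevant gap is $1/q$, not $1/|V|^2$. Neither point is fatal.

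The second reduction contains a genuine error. You attach a pendant $p_e$ to each midpoint $m_e$ with weight ``close to $1$'' and claim that any $S'$ containing $m_e$ (or $p_e$) alone then has ``large positive Span value, so \ldots only $S'\subseteq V$ configurations remain binding.'' This reasoning is inverted. The Span constraint is $\sum_{e\in E^+(S')}y'_e\le 0$; a large positive span value \emph{violates} the constraint, it does not make it slack. Concretely, with pendant weight $\approx +1$, the singleton $S'=\{p_e\}$ already gives span $\approx 1>0$, so $\hat y'$ is at distance $\approx 1$ from $K_s$ regardless of whether $\hat y\in K_c$, and the reduction collapses. What you need is the opposite: configurations touching $m_e$ or $p_e$ should have very \emph{negative} span so those constraints are automatically satisfied and only $S'\subseteq V$ can bind. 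That means the pendant weight must be close to $-1$. With that sign flip your subdivision identity (which is correct: for $S\subseteq V$ the span in $G'$ equals $w(\delta_G(S))$) can be completed into a valid reduction, but the deferred case analysis is then real work, not a formality.

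For comparison, the paper's construction is both different and simpler. Instead of subdividing every edge and adding $2|E|$ auxiliary vertices, it adds just two: a hub $s$ joined to each $v\in V$ with weight $-\tfrac{1}{2L}w(\delta_G(v))$, and a pendant $t$ joined to $s$ with a large negative weight. The degree identity $\sum_{v\in S}w(\delta_G(v))=2w(E_G(S))+w(\delta_G(S))$ then gives $w(E^+_{G'}(S))=\tfrac{1}{2L}w(\delta_G(S))$ for every $S\subseteq V$ in one line, and the heavy $\{s,t\}$ edge trivially handles all $S'$ containing $s$ or $t$. No per-edge gadgetry or case split is required.
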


We can now complete the proof of Theorem~\ref{norm-lb}.

			\begin{proof}[Proof of Theorem~\ref{norm-lb}.]
			Suppose there is an efficient $\alpha$-approximation algorithm for  Coverage Norm Extension. Then by Lemma~\ref{norm-wv} there is an efficient algorithm for Weak Validity problem for polytope $K$~\eqref{eqn:polytopeK} and then by Theorem~\ref{4.4.4}  we have an efficient algorithm for Coverage Weak Membership. But by Lemma~\ref{densest-cut-span}, this is not possible unless $P = NP$.
			\end{proof}

		%Recall that the Weak Membership problem is a weaker version of Strong Membership problem (where the goal is to decide if given vector $y$ is in polyhedron $K$). 
		 We here prove  Lemma~\ref{weaker-densest-cut-span}, which is a weaker statement than Lemma~\ref{densest-cut-span} to convey the main ideas.  Recall that in Strong Membership problem,  the goal is to decide if given vector $y$ is in polyhedron $K$. Following our nomenclature, we define the following Strong Membership problems.
		 
		 An instance of  Span Strong Membership and Cut Strong Membership is given by a
		   weighted graph $G = (V,E)$ with  weights $\hat{y}_e$ on the edges, and
		 the goal is to decide if  vector $y = (y_e)_{e \in E}$ is  in $K_s$  and $K_c$ respectively, with $K_s$ and $K_c$ as defined in~\eqref{eqn:definekprime},~\eqref{eqn:definek}. %The Span Strong Membership is equivalent to Coverage Strong Membership when $|T_i| = 2$ for all $i \in [n]$.
		 
		 %for the corresponding Strong Membership problems. We now prove Theorem \ref{densest-cut-span} for the prove a  reduction from Densest-Cut to Span Strong Membership. 
		%The proof is via two reductions - one from Cut Strong Membership to Span Strong Membership and other from Densest-Cut to Cut Strong Membership. 
		 %(which from our nomenclature pattern is equivalent to Coverage Strong Membership when $|T_i| = 2$ for all $i \in [n]$).  
		 %The proof is via two reductions - one from Cut Strong Membership to Span Strong Membership and other from Densest-Cut to Cut Strong Membership. As seen earlier, in Strong Membership proble, goal is to decide if given vector $y \in K$ or $y \not \in K$. 

		  %  or not. In Cut Strong Membership, given a weighted graph $G = (V,E)$ with weights $\hat{y}_e$ on the edges  and $\delta > 0$, the goal in Cut Weak Membership is to assert either $\hat{y} = (\hat{y}_e)_{e \in E}$ is in $S(K',\delta)$ or $\hat{y}$ is not in $S(K',-\delta)$
		 
		 %me as that of Span Weak Membership, however the goal now is to determine   if the given vector  $\hat{y} = (\hat{y}_e)_{e \in E}$ is in $K'$ or not.
		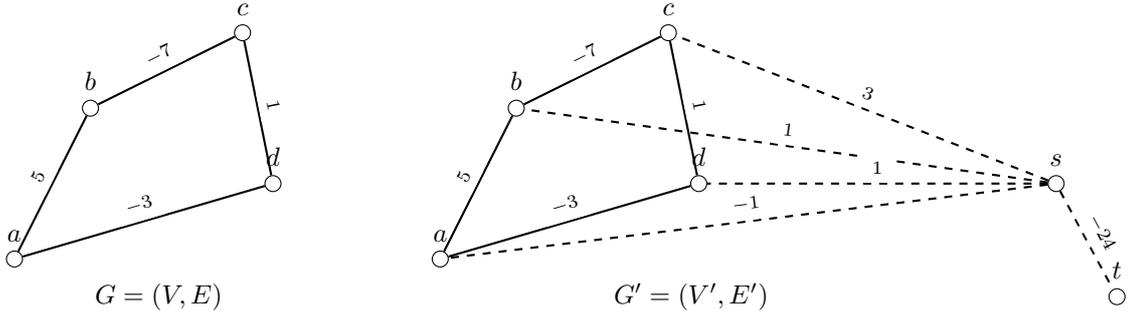
\begin{figure}
			\begin{tikzpicture}
			%\path[draw] (0, 0) node {a} -- (1,2) node{b} -- (3,3.) node {c} -- (4,-2) node{d} -- (0,0)  ;
			%\vertex[label=$p_1$](p1) at (-1,1.5) {};
			\vertex[label=$a$](a) at (0,0) {};
			\vertex[label=$b$](b) at (1,2) {};
			\vertex[label=$c$](c) at (3,3) {};
			\vertex[label=$d$](d) at (3.4,1) {};
			\tikzset{EdgeStyle/.style={font=\scriptsize,above,sloped,midway}}
			\Edge[label = $5$](a)(b)
			\Edge[label = $-7$](b)(c)
			\Edge[label = $1$](c)(d)
			\Edge[label = $-3$](d)(a)
			
			\node[] at (1.9,-0.5) { $G = (V,E)$ };
			
			\vertex[label=$a$](a) at (5.6,0) {};
			\vertex[label=$b$](b) at (6.6,2) {};
			\vertex[label=$c$](c) at (8.6,3) {};
			\vertex[label=$d$](d) at (9,1) {};
			\vertex[label=$s$](s) at (13.7,1) {};
			\vertex[label=$t$](t) at (14.5,-0.5) {};
			\tikzset{EdgeStyle/.style={font=\scriptsize,above,sloped,midway}}
			\Edge[label = $5$](a)(b)
			\Edge[label = $-7$](b)(c)
			\Edge[label = $1$](c)(d)
			\Edge[label = $-3$](d)(a)
			\Edge[label = $-1$, style = {dashed}](s)(a)
			\Edge[label = $1$, style = {dashed}](s)(b)
			\Edge[label = $3$, style = {dashed}](s)(c)
			\Edge[label = $1$, style = {dashed}](s)(d)
			\Edge[label = $-24$, style = {dashed}](s)(t)
			
			\node[] at (8.9,-0.5) { $G' = (V',E')$ };
			%\Edge(p2)(p5);
			\end{tikzpicture}
			\caption{Reduction from Cut Strong Membership to Span Strong Membership. The number shown on the edges in $E$ is the weight $y_e$, while on edges in $E'$ is the product of  $L  = 24$ and weight $y'_e$.} \label{fig:strong}
		\end{figure}
		\begin{lemma}
			\label{weaker-densest-cut-span}
			There is a reduction from Densest-Cut to Cut Strong Membership, and from  Cut Strong Membership to Span Strong Membership. %Therefore, Coverage Strong Membership is NP-hard even when $d = 2$.
		\end{lemma}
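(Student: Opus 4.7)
The plan is to carry out the two reductions in sequence.

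For Densest-Cut $\le_p$ Cut Strong Membership, given an instance $(G=(V,E),M)$ with $M \in [0,1]$, I would build a weighted complete graph $\tilde{G}$ on vertex set $V$, assigning weight $y_e = 1 - M$ to each $e \in E$ and $y_e = -M$ to each non-edge of $G$. A direct count of each edge type in $\delta_{\tilde{G}}(S)$ gives
\[
\sum_{e \in \delta_{\tilde{G}}(S)} y_e \;=\; |\delta_G(S)| - M \, |S| \, |V \setminus S| \, ,
\]
so $y \in K_c$ iff no cut of $G$ has density exceeding $M$; the bound $||y||_\infty \le 1$ holds automatically since $M \in [0,1]$. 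Hence Densest-Cut is a YES instance iff $y \notin K_c$, a polynomial-time reduction sufficient to transfer NP-hardness.

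For Cut Strong Membership $\le_p$ Span Strong Membership, following the construction of Figure~\ref{fig:strong}, I would introduce two new vertices $s,t$, set $V' = V \cup \{s,t\}$ and $E' = E \cup \{\{s,v\} : v \in V\} \cup \{\{s,t\}\}$, and define
\[
y'_e = \frac{y_e}{L} \text{ for } e \in E, \qquad y'_{sv} = -\frac{1}{2L}\, y(\delta_G(v)), \qquad y'_{st} = -1,
\]
where $L$ is a scaling constant fixed below. Combining the identities $y(E^+_G(S')) = y(E_G(S')) + y(\delta_G(S'))$ and $\sum_{v \in S'} y(\delta_G(v)) = 2 y(E_G(S')) + y(\delta_G(S'))$, the $y(E_G(S'))$-terms cancel exactly, yielding
\[
\sum_{e \in E^+_{G'}(S')} y'_e \;=\; \frac{1}{2L}\, y(\delta_G(S'))
\]
for every $S' \subseteq V$. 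So over subsets of $V$, the span constraint in $G'$ is sign-equivalent to the cut constraint in $G$.

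For $S' \subseteq V'$ containing $s$ or $t$, the edge $\{s,t\}$ is in $E^+_{G'}(S')$ and contributes the fixed amount $y'_{st} = -1$. I would choose $L$ to be a polynomial in the input bit-length large enough to guarantee $\sum_{e \in E' \setminus \{\{s,t\}\}} |y'_e| < 1$; then each such ``extra'' span sum is automatically $\le 0$, and the same bound also enforces $||y'||_\infty \le 1$. Thus $y' \in K_s$ iff $y \in K_c$, completing the second reduction. The main obstacle lies precisely here: $y'_{st}$ must be negative enough to trivially satisfy the span constraints involving $\{s,t\}$, yet the membership problem requires $||y'||_\infty \le 1$. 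The resolution is to uniformly shrink the remaining weights by $1/L$, preserving the crucial cancellation identity while driving the ``interference'' from the other edges strictly below $1$ in absolute value.
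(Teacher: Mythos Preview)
Your proposal is correct and follows essentially the same approach as the paper: the same complete-graph construction with edge weights $1-M$ and $-M$ for Densest-Cut $\le_p$ Cut Strong Membership, and the same $s,t$-gadget with the cancellation identity $\sum_{e \in E^+_{G'}(S')} y'_e = \frac{1}{2L} y(\delta_G(S'))$ for Cut $\le_p$ Span. The only cosmetic difference is that the paper scales the Densest-Cut weights by an additional factor $1/L'$ with $L' = 2\max\{M,|1-M|\}$ to force $||y||_\infty < 1$ for arbitrary positive rational $M$, whereas you assume $M \in [0,1]$; your assumption is valid WLOG since cut density never exceeds $1$, but you should state that one-line justification explicitly.
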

		\begin{proof}
For the second reduction, the instance of Cut Strong Membership is weighted graph $G = (V,E)$ with  weights $y_e$ on the edges.  We assume $||y||_\infty \le 1$ as otherwise clearly $y \not \in K_c$.
			
Let $L =   2 |E| +  |V| |E|$.			We construct an instance of Span Strong Membership (see Figure \ref{fig:strong}), i.e., graph $G'= (V',E')$ and weights $y'_e$ as follows:
			\[
			V' = V \cup \{s,t\} \, , ~ E' = E \cup \{s,t\} \cup \{v,s\} \quad \forall v \in V \, , ~ y'_e = 
			\left\{
			\begin{array}{ll}
			\frac{y_e}{L}  & \mbox{if } e \in E(G) \\
			-\frac{1}{2L} w(\delta_{G}(v)) & \mbox{if } e = \{v,s\}, v \neq t \\
			-1 & \mbox{if } e = \{s,t\}. \\
			\end{array}
			\right.
			\]
			%Anything less than or equal to $-w(E_{H}(W)) = - \sum_{e \in F(H)} w_e $ will suffice for the value of $w'_{\{s,t\}}$.
			
\noindent Then $||y'||_\infty \le 1$. 

			Assume $y \not \in K_c$, i.e., there exists  $S \subseteq V$ s.t. $w(\delta_{G}(S)) > 0$. We need to show there exists  $S' \subseteq V'$ s.t. $\sum_{e \in E^{+}(S')} y'_e > 0 $. For $S' = S$, $L \sum_{e \in E^{+}(S')} y'_e = w(E_{G}(S)) + w(\delta_{G}(S)) + \sum_{v \in S} -\frac{1}{2} \cdot w(\delta_{G}(v)) = w(E_{G}(S)) + w(\delta_{G}(S)) - \frac{1}{2} \cdot (2 w(E_{G}(S)) + w(\delta_{G}(S))) = \frac{w(\delta_{G}(S))}{2} > 0$. 
			
			Now assume $y \in K_c$, i.e., $\forall S \subseteq V, w(\delta_{G}(S)) \le 0 $. We need to show $\forall S' \subseteq V', \sum_{e \in E^{+}(S')} y'_e \le 0$. Since $y'_{\{s,t\}} = -1 $ (and $L$ is  sufficiently large), we need to consider only those $S'$ which do not contain either $s$ or $t$. But we have shown that for such $S'$, $\sum_{e \in E^{+}(S')} y'_e = \frac{w(\delta_{G}(S'))}{2L} \le 0$. 
			
			Now we finish the proof by giving a reduction from  Densest-Cut to Cut Strong Membership. Given an  undirected graph $G = (V,E)$  and rational $M$, we want to know if there exists $ S \subset V $ s.t.  $ \frac{\delta_{G}(S)}{|S| |V \setminus S|} > M$.  Consider the complete graph $G' = (V,E')$ where the weight of an edge is $\frac{1 - M}{L}$ if it existed in $E$, and is $-\frac{M}{L}$ otherwise (note that edges may now have positive, negative, or zero weight). Let $L' =  2  \max\{M,|1-M|\}$ be a sufficiently large quantity so that $||\hat{y}||_\infty  < 1$ . It is easy to see that $L \thinspace w(\delta_{G'}(S)) = |\delta_{G}(S)| - M  |S| |V \setminus S|$. Therefore, $\exists S \subset V $ s.t. $w(\delta_{G'}(S)) > 0 \Leftrightarrow \exists S \subset V $ s.t. $\frac{|\delta_{G}(S)|}{|S| |V \setminus S|} > M$. 
			%By solving the Cut-Positive problem in $G'$, we can thus determine whether $\exists S \subset V $ s.t. $w(\delta_{G'}(S)) > 0$, and hence solve Densest-Cut problem.
		\end{proof}

%% Bibliography
%%	\bibliographystyle{plain}
	\bibliography{partialfn}

	\appendix
		\section{Appendix}
\label{sec:coverageappendix}

 \subsection*{Proof of  Proposition~\ref{succinct}} 
 Consider the polyhedron Extension-P. If the partial function is extendible, then Extension-P is nonempty. Since the variables are non-negative, the polyhedron must have a vertex~\cite{bertsimas1997introduction}, and in particular there is a vertex in which at most $n$ variables $w(S)$ are non-zero. This is  because  the dimension of the problem is $2^m$, hence at a vertex at least $2^m$ constraints must be tight. But then at least $2^m - n$ of constraints $w(S) \ge 0$ must be tight.

  % we construct the partial function as in the proof of Theorem~\ref{extension-lb}. Thus, the set of defined points $\mcD = [n'] \cup E(G)  \cup \{[n']\}$. The distribution $\mu$ is the uniform distribution over these points, and $\epsilon = 1/n'^3$ (and hence $\epsilon < 1/|\mcD|$). 
% Suppose $g$ is the function returned by  the learning algorithm $A$  for input given by the distribution $\mu$.  

 \subsection*{Proof of  Corollary \ref{equality-x}} 
 Consider the polytope 	$P = \{\sum_{I \in \mathcal{I}: v \in I} x_I = 1 \quad \forall v \in  V(G), 0 \le x_I \le 1 \quad \forall  I \in \mathcal{I}  \}$. By the Theorem \ref{frac-colouring}, there exists $x = \{x_I\}_{I \in \mathcal{I}}$ in $P$ such that $\chi^*(G) = \sum_{I \in \mathcal{I}} x_I$. Consider $y = \{y_I\}_{I \in \mathcal{I}}$ given by $y_{\{v\}} = 1$ for all $v \in V(G)$ and $0$ otherwise. Therefore,  $y \in P$ and  $|V(G)| = \sum_{I \in \mathcal{I}} y_I$. Consider $z = \lambda x + (1 - \lambda)y$ where $ \lambda = \frac{|V(G)| - t}{|V(G)| - \chi^*(G) }$. Therefore, $z \in P$ and $\sum_{I \in \mathcal{I}} z_I = \lambda \sum_{I \in \mathcal{I}} x_I + (1 - \lambda) \sum_{I \in \mathcal{I}} y_I = t$.
 
  \subsection*{Proof of  Lemma \ref{coverage-claim}} 
  From Theorem \ref{mobius-inversion}, given a partial function $H$ and $\alpha \ge 1$, there exists a coverage function $f$ satisfying $f_i \le f(T_i) \le \alpha f_i$ for all $ i \in [n]$ iff  the following linear program   is feasible, where the variables are the $W$-coefficients $w(S)$ for all $S \in  2^{[m]} \setminus \emptyset$ : 
  %$\hspace*{4ex}$ 
  %COVP($\epsilon$) $\hspace{22ex}$
  \[f_i \le \displaystyle\sum_{S : S \cap T_i \neq \emptyset } w(S) \le \alpha f_i \quad \forall i \in [n] \]
  %\end{equation*}
  \begin{equation*}
  w(S) \ge 0 \quad  \forall S \in  2^{[m]}\setminus \emptyset.
  \end{equation*}
  
  By Farkas' Lemma, it follows that the above linear program is feasible iff the following linear program  is infeasible, with variables $y_i$ and $z_i$ for all $i \in [n]$: \\ 
  %	COVD($\epsilon$)$\hspace{22ex}$
  \begin{equation}
  \label{inequality-covd-1}
  \alpha \sum_{i = 1}^{n} f_i y_i < \sum_{i = 1}^{n} f_i z_i
  \end{equation}
  \begin{equation}
  \label{inequality-covd-2}
  \sum_{i : S \cap T_i \neq \emptyset } y_i \ge \sum_{i : S \cap T_i \neq \emptyset } z_i \quad  \forall S \in  2^{[m]}\setminus \emptyset 
  \end{equation}
  \[y_i,z_i \ge 0\]. 
  
  Now we proceed towards proving the claim. Suppose $l_i$'s satisfy (\ref{inequality-cov-1}) and (\ref{inequality-cov-2}). Set $y_i$ and $z_i$ as follows: If $l_i \le 0$ then let $y_i = -l_i$ and $z_i = 0$. Else if $l_i > 0$ then let $y_i = 0$ and $z_i = l_i$. It is easy to see that $y_i,z_i \ge 0$ and $l_i = z_i - y_i$ and hence (\ref{inequality-covd-2}) is satisfied by $y_i$'s and $z_i$'s. Further, $\alpha \sum_{i = 1}^{n} f_i y_i = \alpha (\sum_{i: l_i \le 0} f_i y_i + \sum_{i: l_i > 0} f_i y_i) =   -\alpha \sum_{i: l_i \le 0} f_i l_i $ and similarly $ \sum_{i = 1}^{n} f_i z_i =  \sum_{i: l_i > 0} f_i l_i $. Thus (\ref{inequality-covd-1}) is also satisfied by $y_i$'s and $z_i$'s. 
  
  For the other direction observe that if the vector $y = (y_1,..,y_n),z = (z_1,...,z_n) \ge 0$ satisfy (\ref{inequality-covd-1}) and (\ref{inequality-covd-2}) then wlog we can assume for any $i$, the minimum of $y_i$ and $z_i$ is 0 (otherwise we can decrease both $y_i$ and $z_i$ by the minimum of $y_i$ and $z_i$, and $\alpha \ge 1$  allows (\ref{inequality-covd-1}) to remain true). Note that $\sum_{i} f_i y_i = \sum_{i: y_i \le z_i} f_i y_i + \sum_{i: y_i > z_i} f_i y_i = \sum_{i : y_i > z_i} f_i y_i$, since $\min\{y_i,z_i\} = 0$ by the previous observation.  Now suppose $y,z \ge 0$ satisfy (\ref{inequality-covd-1}) and (\ref{inequality-covd-2}). We thus have $\alpha \sum_{i = 1}^{n} f_i y_i < \sum_{i = 1}^{n} f_i z_i \Leftrightarrow \alpha \sum_{y_i > z_i} f_i y_i < \sum_{z_i > y_i} f_i z_i $. Now let $l_i = z_i - y_i$. This makes both (\ref{inequality-cov-1}) and (\ref{inequality-cov-2}) true.	
  
    \subsection*{Proof of  Lemma \ref{computing-c}} 
    	Suppose we are given a weighted bipartite graph $G = (A \cup [m],E)$ with weight $f_v$ on each $v \in A$. Recall that $\kappa$ is the minimum of $\frac{\sum_{w\in R}f_w}{f_v}$ over vertices $v \in A$ and $R \in \mathcal{F}_v$ where  $\mathcal{F}_v = \{R \subseteq A \setminus \{v\} | N(R) \supseteq N(v)\}$ is the set of all $R \subseteq A \setminus \{v\}$ that covers all the neighbours of $v$. 
    	
    	We will use $f(R)$ ($R \subseteq A$) to denote the summation $\sum_{v \in R} f_v$. If $d$ is a constant then for each $v \in A$, we can find minimum of $f(R)$ over all $R \subseteq \mathcal{F}_v$ in $O(n^d)$ time where $n = |A|$. Therefore, by taking the minimum of the above minimum value over all vertices $v \in A$, we get the value of $\kappa$. For general $d$, we use an approximation algorithm for \textsf{Set-Cover} to find, for each vertex $v \in A$,  a set $R'_{v} \in \mathcal{F}_v $ such that $f(R'_{v}) \le  f(R_v) \log d$ where $R_v$ is the optimal set. It can be seen that $\kappa' = \min_{v \in A} \frac{f(R'_{v})}{f_v}$ has the property $\kappa' \le \kappa \log d$. 
    	%Thus a algorithm that returns $\frac{\min \{d,\sqrt{m}\}}{c'}$ 
     \subsection*{Remaining proof of Theorem \ref{tight-logm-factor} }
      We will show that
      \begin{equation*} \label{prob-inequality}
      Pr[ |N(S) \cap  \mathcal{N}| \ge |N(S) \cap  \mathcal{P}| \quad  \forall S \subseteq [m]] > 0
      \end{equation*}
      Let $\mathcal{S} = \{S \subseteq [m]| |S \cap P| \le 1 \quad \forall P \in \mathcal{P}\}$. Also let $N^-(S) = N(S) \cap  \mathcal{N}$ and $N^+(S) = N(S) \cap  \mathcal{P}$.  Note that $ |N^-(S)| \ge |N^+(S)|$ for all $S \in \mathcal{S}$ implies $ |N^-(S)| \ge |N^+(S)|$  for all $S  \subseteq [m]$.
      
      Given $S \in \mathcal{S}$ of size $s$ and a set $M \subseteq \mathcal{N}$ of size $s$, the probability\footnote{The analysis from here on is similar to existence proof of a  expander on a bipartite graph.} that $N^-(S) \subseteq M$ is $ (1 - 1/\sqrt{m})^{(|\mathcal{N}|-s)s}$. Therefore, $Pr[ |N^-(S)| \ge |N^+(S)| \quad \forall S \subseteq [m]] = 1 - \sum_{S \in \mathcal{S}} \sum_{M \subseteq \mathcal{N}, |M| = |S|} (1 - 1/\sqrt{m})^{(|\mathcal{N}|-s) s} = 1 - \sum_{s = 1}^{\sqrt{m}}  \binom{\sqrt{m}}{s} \sqrt{m}^s \binom{k \sqrt{m} \log m}{s}  (1 - 1/\sqrt{m})^{(k\sqrt{m}\log m-s)s}\\
      > 1 -
      \sum_{s = 1}^{\sqrt{m}}  \frac{(\frac{\sqrt{m}e}{s})^s \sqrt{m}^s (\frac{ke \sqrt{m} \log m}{s})^s}  {e^{ k s \log m}} > 1 - \sum_{s = 1}^{\sqrt{m}}(O(\frac{ m k \sqrt{m} \log m}{ m^k}))^s$ which is strictly greater than $0$ for sufficiently large $k$.
    \subsection*{Proof of  Lemma \ref{norm-wv} }
    	     The instance of weak validity problem are  vector $c \in \mathbb{Q}^{n}$ and  rational numbers $\gamma$ and  $\epsilon > 0$. 
    	     We show that there is a reduction from general Weak Validity problem to Weak Validity problem with instances satisfying $c_i \ge 0$ for all $i \in [n]$.
    	     
    	     Let $N = \{i \in [n]| c_i \le 0\}$. Consider a vector $c'$ such that $c'_i = 0$ for $i \in N$ and $c_i$ otherwise and $\gamma' = \gamma - \sum_{i\in N} |c_i|$.  
    	     %If $x$ is in $S(K,\epsilon)$ then clearly $\bar{x}$ is also in $S(K,\epsilon)$ where $\bar{x}_i = -1$ if $i \in [N]$ and $x'_i$ otherwise.
    	     If $x$ is in $S(K,\epsilon)$ then clearly $\bar{x}$ defined as $\bar{x}_i = -1$ if $i \in N$ and $x_i$ otherwise, is also in $S(K,\epsilon)$.
    	     If for some $x$ in $S(K,\epsilon)$, we have $(c')^T x \ge \gamma' - \epsilon$ then for  $\bar{x} \in S(K,\epsilon)$,  we have $c^T \bar{x} = \sum_{i\in N} |c_i| + (c')^T x \ge \gamma - \epsilon$.
    	     Also if for all  $x \in S(K,-\epsilon)$,  we have $(c')^T x \le \gamma' + \epsilon$ then $c^T x \le \sum_{i\in N} |c_i| + (c')^T x \le \gamma + \epsilon$. This shows the reduction and hence we assume $c_i \ge 0$ in the instance of Weak Validity problem.

    	     % we have $c^T x \le \sum_{i\in [N]} |c_i| + (c')^T x \le $. 
    	     %If for some $x \in S(K,\epsilon)$, $c^T x \ge \sum_{i\in [n]} |c_i| + (c')^T x$
    	     %where $x_i = -1$ for $i \in [N]$ and $x_i = x'_i$ otherwise.

    	     Let $OPT$ and $OPT'$ be the optimal value of Norm-P for $(f_1,\dots,f_n) = (c_1,\dots,c_n)$ and  $(f_1,\dots,f_n) = (L c_1,\dots,L c_n)$ respectively  ($L$ will be chosen later). Obviously $OPT' = L \cdot OPT$. Let the approximation algorithm for Coverage Norm Extension (and hence Norm-P) return $\beta$ for instance $(f_1,\dots,f_n) = (L c_1,\dots,L c_n)$. Let $C = \sum_{i} c_i$. Therefore, $OPT'$ $\le \beta$ $\le OPT' + 2^{poly(n,m)} (L C)^\delta $ $= L \cdot OPT + 2^{poly(n,m)} (L C)^\delta$ and hence $\beta/L \le OPT + \frac{2^{poly(n,m)} (C)^\delta}{L^{1-\delta}}$. We set $L : = \left(\frac{2^{poly(n,m)} (C)^\delta}{2 \epsilon}\right)^{1/1-\delta}$ so that $ \frac{2^{poly(n,m)} (C)^\delta}{L^{1-\delta}} <2 \epsilon$. Note that the number of bits to specify $L$ is polynomial in $\langle c \rangle,\langle \epsilon \rangle, n,m$. Thus, $OPT \le $   $\beta/L \le OPT + 2 \epsilon$. Now if $\gamma + \epsilon \le \beta/L$ then for the optimal solution $x^* \in K$, $c^T x^* = OPT \ge \frac{\beta}{L} - 2 \epsilon$ $\ge \gamma - \epsilon$. If $\gamma + \epsilon \ge \beta/L$ then for all $x$ in $K$ (and hence $S(K,-\epsilon)$), we have $c^T x \le OPT \le \beta/L$ $\le \gamma+\epsilon$. Since at least one of these two conditions must hold, the conditions of weak validity problem can be correctly asserted.
    	     %The remaining case $\beta/L <  \gamma - \epsilon <  \gamma + \epsilon <  \beta/L + 2 \epsilon$ is not possible. Therefore,  the conditions of weak validity problem can be correctly asserted.

    In the following proofs, for any vector $y$, recall that we  use $||y||_{\infty}$ for $\max_{i} |y_i|$ and $||\hat{y} - y||$ for the Euclidean distance between $\hat{y}$ and $y$.
    We will frequently use the fact that the distance of a point $x_0$ from the hyperplane $w^T  x + b = 0$ is equal to $\frac{|w^T x_0 + b|}{||w||}$.

    \subsection*{Proof of  hardness of Coverage Weak Membership for unrestricted $d$} 
    	% Given an instance of \textsf{Set-Cover}, we construct an instance of Coverage-Membership, i.e., $\mcD = \{T_1,\dots,T_n\}$ and weights $w_i$ as follows. We set $m = m'$, hence each set $T_i \subseteq [m']$. The instance has $n = n' + m' + 1$ sets in $\mcD$. The first $m'$ sets are $T_i = \{i\}$ with weight $-1$. Set $T_{m'+1}$ $= \{1,\dots,m'\}$ and has weight $k - kn' + \frac{1}{2}$. Lastly, for each $u_i \in U'$, there is a set $\{j \in [m'] | u_i \in S_j\}$ with weight $k$. 
    	We give a reduction from Set Cover. An instance of Set Cover  consists of a universe of $n'$ elements $\mcU = \{u_1, \dots, u_{n'}\}$,  a family of sets $\mcF = \{S_1, \dots, S_{m'}\}$ such that $S_i \subseteq \mcU$ for each $i \in [m']$, and a positive integer $k$. We need to determine if there exists $\mcF' \subseteq \mcF$ and $|\mcF'| \le k$ such that $\mcF'$ covers the universe $\mcU$, i.e., $\cup_{S \in \mcF'} S = \mcU$.
    	
    	The instance of Coverage-Weak-Membership requires  a 
    	set $\mcD = \{T_1,\dots,T_n\}$ where each $T_i \subseteq [m]$, real values $\hat{y}_i$  associated with each $T_i$ and $\delta > 0$.  The goal is to assert either the point $\hat{y} = (\hat{y}_i)_{i \in [n]}$ is in $S(K,\delta)$ or is not in $S(K,-\delta)$. Given an instance of Set Cover, we construct an instance of Coverage Weak Membership as follows. We set $m = m'$, the number of sets in $\mcF$, hence each set $T_i$ can be thought of as a subset of $\mcF$. The instance has $n = m' + 1+ m'$ sets in $\mcD$, as follows (the value of $L$ will be chosen later):
    	
    	\begin{enumerate}
    	    \item The first $m'$ sets correspond to sets in $\mcF$: $T_i = \{i\}$ with $\hat{y}_i = -\frac{1}{L}$ 
    	    \item Set $T_{m'+1} = \{1,\dots,m'\}$  with $\hat{y}_{m'+1} = \frac{k -kn' + \frac{1}{2}}{L}$
    	    \item For each of the $n'$ elements in $\mcU$, there is a set in $\mcD$: For each $u_i \in \mcU$,  $\mcD$ contains the set $\{j \in [m] | S_j \text{ contains } u_i \}$ with $\hat{y}$-value $\frac{k}{L}$.  
    	\end{enumerate}. 
    	    
    	    The value of $L$ is set to $2 \max\{-1, k -kn' + \frac{1}{2},k\} = kn' -k - 1/2$ (to ensure $||\hat{y}||_\infty =  \frac{1}{2}$ in view of the constraint $||y||_\infty \le 1$ of $K$). We set $\delta$ to be $\frac{1}{4L\sqrt{n}}$.
    	%$\min \{\frac{1}{4L\sqrt{n}},\frac{1}{4\sqrt{n}}\}$.
    	
    		Suppose that this is a YES instance of the Set Cover problem, i.e., there exists a $\mcF' \subseteq \mcF$ and $|\mcF'| \le k$ such that $\mcF'$ covers the universe $\mcU$. 
    	In our instance,  if we take the set of elements $S = \{j \in [m']| S_j \in \mcF'\}$ then this intersects with at most $k$ sets of the first kind, set $T_{m'+1}$, and all sets of the third kind. Hence, $\sum_{i : T_i \cap S \neq \emptyset}  \hat{y}_i \ge \frac{-k+(k - k n' +\frac{1}{2}) + kn'}{L} = \frac{1}{2L}$. So the point $\hat{y}$ violates the constraint $\sum_{ i : S \cap T_i \neq \emptyset}  y_i \le 0$ and is at least $\frac{1}{2L\sqrt{n}}$ distance  away from the corresponding hyperplane $\sum_{ i : S \cap T_i \neq \emptyset} y_i = 0$. Therefore, $||\hat{y} - y|| \ge \frac{1}{2L\sqrt{n}}$ for all $y \in K$. 
    	%this point must be at least $\frac{1}{2L\sqrt{n}}$ distance away from all points in $D$.
    	Hence $\hat{y} \not \in S(K,\delta)$ as $\delta <\frac{1}{2L\sqrt{n}}$.
    	
    	Now suppose that this is a NO instance of the Set Cover problem.
    	For any $S \subseteq 2^{[m']}\setminus \emptyset$, if $|S| \ge k+1$ then $\sum_{i : T_i \cap S \neq \emptyset}  \hat{y}_i  \le \frac{-k-1 + (k - kn'+\frac{1}{2})+ kn'}{L}  = -\frac{1}{2L}$ and if $|S| \le k$ then $ \sum_{i : T_i \cap S \neq \emptyset}  \hat{y}_i  \le \frac{(n' -1)k - 1 + (k - kn'+\frac{1}{2})}{L}  = -\frac{1}{2L}$, since any set of size strictly less than $k+1$ covers at most $n'-1$ elements of universe $\mcU$.  Thus the point $\hat{y}$ satisfies  the constraints $\sum_{ i : T_i \cap S \neq \emptyset}  y_i \le 0$ for all $S$ and is at least $\frac{1}{2L \sqrt{n}}$  distance away from all the hyperplanes $\sum_{i : T_i \cap S \neq \emptyset} y_i = 0$.  Also, for any $y \in S(\hat{y},\delta)$, we have  $||y||_\infty - ||\hat{y}||_\infty \le||y - \hat{y}||_\infty \le ||y - \hat{y}|| \le \delta$. So $||y||_\infty \le   \delta +   1/2 \le 1$ (as $\delta < \frac{1}{2}$). 
    	%	Since $\delta \le  \frac{1}{4L\sqrt{n}}   < \frac{1}{2L\sqrt{n}}$ so $S(\hat{y},\delta)$, a ball of radius $\delta$ around the point $ \hat{y} $ lies completely inside $K$. 
    	Thus $\hat{y} \in S(K,-\delta)$.
    	
    	\subsection*{Proof of Lemma~\ref{densest-cut-span}} 
    
    %Thus Coverage-Weak-Membership for $|T_i| = 2$ case is equivalent to following problem, which we call \textsf{Span Weak Membership}.
    %, where $E^{+}(S)$ denote set all edges whose at least one endpoint is in $S$. 
    
    %	From a bipartite graph $G=(A \cup B,E)$ (with property that the degree of each $a \in A$ is equal to two), we construct a  non-bipartite graph  $H = (W,F)$ as follows: vertex set $W = B $ and there is an edge between $b_i$ and $b_j$ ($b_i,b_j \in B$) iff there exists $a \in A$ such that $\{a,b_i\},\{a,b_j\} \in E$.  The pair $(\hat{y}_a,\hat{z}_a)$ associated with $a \in A$ is assigned to this edge. It is easy to see that there is one to one correspondence between  bipartite graphs and  constructed non-bipartite graphs.   For any edge $e$ in this graph, we use $(\hat{y}_e,\hat{z}_e)$ to denote the pair associated with the edge. Now the constraint $ \sum_{a \in N(S)} z_a - y_a \le 0$ for $S \subseteq 2^{B}$ in the polyhedron $D$ of the bipartite graph $G$ translates to  $ \sum_{ e \in E^{+}(S)} z_e - y_e \le 0$ for the above non-bipartite graph, where $E^{+}(S)$ denote set all edges whose at least one endpoint is in $S$. 
    
    Recall the definitions of Span Weak Membership, Cut Weak Membership and Densest Cut: 
    
    \begin{enumerate}
    \item Given a  weighted graph $G = (V,E)$ with  weights $\hat{y}_e$ on the edges  and $\delta > 0$,
    \begin{enumerate}
        \item The goal in Span Weak Membership is to assert either $\hat{y} = (\hat{y}_e)_{e \in E}$ is in $S(K_s,\delta)$ or $\hat{y}$ is not in $S(K_s,-\delta)$ where    
    
    \[ K_s = \left \{\sum_{e \in E^+(S)} y_e \le 0 \quad  \forall S \in  2^{V}\setminus \emptyset ,  ||y||_\infty \le 1 \right \}, \]
    
   % We now show that the Span Weak Membership  is NP-Hard thereby showing Coverage Weak Membership is also NP-Hard for the restricted setting with $|T_i| = 2$ for all $i \in [n]$.  Before that, we define two more problems - \textsf{Cut Weak Membership} and \textsf{Densest Cut} as they will be used in the reduction. 
    
    \item The goal in Cut Weak Membership is to assert either $\hat{y} = (\hat{y}_e)_{e \in E}$ is in $S(K_c,\delta)$ or $\hat{y}$ is not in $S(K_c,-\delta)$ where
     \[ K_c = \left \{\sum_{e \in \delta(S)}  y_e \le 0 \quad  \forall S \in  2^{V}\setminus \emptyset ,  ||y||_\infty \le 1 \right \}. \]
\end{enumerate}    
    Note that in the Cut Weak membership, we have  constraints $\sum_{e \in \delta(S)} y_e \le 0 $ instead of $\sum_{e \in E^+(S)} y_e \le 0$ for all $S$. %($\delta(S)$ is the set all edges whose exactly one endpoint is in $S$). \\\\
    %For Span-Weak-membership and Cut-Weak-membership problems, we define weight on edge $e$ as $w_e = \hat{z}_e - \hat{y}_e$.\\ \\
    
    \item In the Densest-Cut problem, given a  graph $G = (V,E)$  and a positive rational $M$, the goal is to decide if there exist a set $S \subset V $ s.t. $ \frac{|\delta(S)|}{|S| |V \setminus S|} \ge M$. 
    
    \end{enumerate}

    The Densest-Cut is known to be NP-Hard \cite{densest-cut}. Note that  $ \frac{|\delta(S)|}{|S| |V \setminus S|}$ called the density of cut $(S,V \setminus S)$  can  take  values only from $\left\{\frac{r}{s(|V|-s)} | 1 \le r \le |E|, 1 \le s \le |V|-1, r,s \in \mathbb{Z}_+\right\}$. Thus there are only polynomially many  possible values of cut densities. We will use this fact in our proof.

    	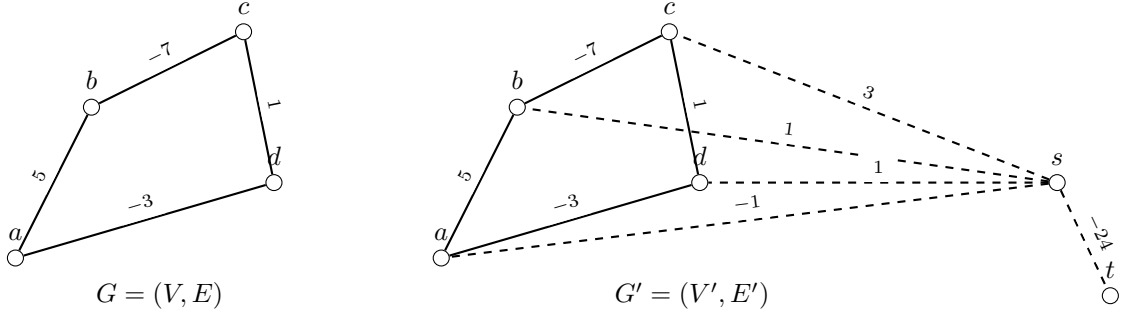
\begin{figure}
    		\begin{tikzpicture}
    		%\path[draw] (0, 0) node {a} -- (1,2) node{b} -- (3,3.) node {c} -- (4,-2) node{d} -- (0,0)  ;
    		%\vertex[label=$p_1$](p1) at (-1,1.5) {};
    		\vertex[label=$a$](a) at (0,0) {};
    		\vertex[label=$b$](b) at (1,2) {};
    		\vertex[label=$c$](c) at (3,3) {};
    		\vertex[label=$d$](d) at (3.4,1) {};
    		\tikzset{EdgeStyle/.style={font=\scriptsize,above,sloped,midway}}
    		\Edge[label = $5$](a)(b)
    		\Edge[label = $-7$](b)(c)
    		\Edge[label = $1$](c)(d)
    		\Edge[label = $-3$](d)(a)
    		
    		\node[] at (1.9,-0.5) { $G = (V,E)$ };
    		
    		\vertex[label=$a$](a) at (5.6,0) {};
    		\vertex[label=$b$](b) at (6.6,2) {};
    		\vertex[label=$c$](c) at (8.6,3) {};
    		\vertex[label=$d$](d) at (9,1) {};
    		\vertex[label=$s$](s) at (13.7,1) {};
    		\vertex[label=$t$](t) at (14.4,-0.5) {};
    		\tikzset{EdgeStyle/.style={font=\scriptsize,above,sloped,midway}}
    		\Edge[label = $5$](a)(b)
    		\Edge[label = $-7$](b)(c)
    		\Edge[label = $1$](c)(d)
    		\Edge[label = $-3$](d)(a)
    		\Edge[label = $-1$, style = {dashed}](s)(a)
    		\Edge[label = $1$, style = {dashed}](s)(b)
    		\Edge[label = $3$, style = {dashed}](s)(c)
    		\Edge[label = $1$, style = {dashed}](s)(d)
    		\Edge[label = $-24$, style = {dashed}](s)(t)
    		
    		\node[] at (8.9,-0.5) { $G' = (V',E')$ };
    		%\Edge(p2)(p5);
    		\end{tikzpicture}
    		\caption{Reduction from Cut-Weak-Membership to Span-Weak-Membership. The number shown on the edges in $E$ is the weight $y_e$, while on edges in $E'$ is  product of  $L  = 48$ and weight $y'_e$.} \label{cut-span}
    	\end{figure}
    	
    	\begin{lemma}
    		There is a reduction from Cut Weak Membership to Span Weak Membership.
    	\end{lemma}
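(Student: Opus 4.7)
My plan is to adapt the Strong Membership reduction in Lemma~\ref{weaker-densest-cut-span} to the weak setting by carefully tuning a blow-up factor $L$ and the output tolerance $\delta'$. Given a Cut Weak Membership instance $(G = (V,E), \hat{y}, \delta)$ with $\|\hat{y}\|_\infty \le 1$ (otherwise $\hat{y} \notin S(K_c,-\delta)$ is immediate), I would form a Span Weak Membership instance on $G' = (V',E')$ with $V' = V \cup \{s,t\}$ and $E' = E \cup \{\{s,t\}\} \cup \{\{v,s\} : v \in V\}$, taking
\[ \hat{y}'_e = \hat{y}_e/L, \qquad \hat{y}'_{\{v,s\}} = -\tfrac{1}{2L}\hat{y}(\delta_G(v)), \qquad \hat{y}'_{\{s,t\}} = -1, \]
where $L$ is a large polynomial in $|V|$ and $|E|$ chosen so that $\|\hat{y}'\|_\infty \le 1/2$, and $\delta' = \Theta(\delta/(L\sqrt{|E'|}))$.

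The key identity, inherited linearly from the Strong case, is $2L \sum_{e \in E^+_{G'}(S)} y'_e = y(\delta_G(S))$ for every $S \subseteq V$, where $y' \in \mathbb{R}^{E'}$ is built from $y \in \mathbb{R}^E$ by the formulas above. This bijectively identifies the cut constraints of $K_c$ on $G$ with the span constraints of $K_s$ on $G'$ indexed by $S \subseteq V$, up to the scaling factor $1/(2L)$. The remaining span constraints (indexed by $S' \subseteq V'$ with $s \in S'$ or $t \in S'$) evaluate to approximately $-1 + O(|E|/L) \le -1/2$ at $\hat{y}'$, and remain satisfied with slack at least $1/4$ throughout the $\delta'$-ball around $\hat{y}'$ once $L = \Omega(|E|)$ and $\sqrt{|E'|}\delta' \le 1/4$.

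Next I would establish the two implications required for the weak reduction. For $\hat{y} \in S(K_c,-\delta) \Rightarrow \hat{y}' \in S(K_s,-\delta')$, I pick an arbitrary $y'$ within $\delta'$ of $\hat{y}'$; combining the identity with a Cauchy--Schwarz perturbation bound $|y'(E^+_{G'}(S)) - \hat{y}'(E^+_{G'}(S))| \le \sqrt{|E'|}\,\delta'$ and the deep-membership estimate $\hat{y}(\delta_G(S)) \le -\delta$ (obtained by moving $\hat{y}$ a distance $\delta$ along the unit indicator of $\delta_G(S)$) yields $y'(E^+_{G'}(S)) \le 0$; the auxiliary $s,t$-span constraints are handled by the slack argument above. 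For the reverse implication $\hat{y}' \in S(K_s,\delta') \Rightarrow \hat{y} \in S(K_c,\delta)$, I take a witness $y' \in K_s$ near $\hat{y}'$, set $\tilde y_e = L y'_e$ for $e \in E$ (then shrink $\tilde y$ by the factor $1/(1+L\delta')$ to enforce $\|\tilde y\|_\infty \le 1$ while preserving cut-inequality signs), and verify $\|\tilde y - \hat{y}\|_2 = O(\sqrt{|E|}L\delta') \le \delta$ and $\tilde y \in K_c$.

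\textbf{Main obstacle.} The principal technical hurdle is the reverse implication: the span constraints on a general $y' \in K_s$ do not directly bound cut sums of $\tilde y$ because they also involve the extra variables $y'_{\{v,s\}}$. The crucial observation is that $y'$ being within $\delta'$ of $\hat{y}'$ forces each $y'_{\{v,s\}}$ to lie within $\delta'$ of its target $-\hat{y}(\delta_G(v))/(2L)$, so the span-constraint inequality collapses to the desired cut-constraint inequality on $\tilde y$ up to a total error of $O(\sqrt{|E'|}L\delta')$. Calibrating this error against $\delta$ while keeping $\delta'$ of polynomial bit complexity (so that the reduction runs in polynomial time) is precisely what the choice $\delta' = \Theta(\delta/(L\sqrt{|E'|}))$ accomplishes.
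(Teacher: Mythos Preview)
Your construction and the forward implication $\hat y\in S(K_c,-\delta)\Rightarrow \hat y'\in S(K_s,-\delta')$ are essentially the paper's (the paper argues the contrapositive, but the content is the same). The gap is in your reverse implication.

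You take a witness $y'\in K_s$ with $\|y'-\hat y'\|\le \delta'$, set $\tilde y_e=Ly'_e$ on $E$, and assert that (after shrinking) $\tilde y\in K_c$. This does not follow. For $S\subseteq V$ the span constraint on $y'$ reads
\[
\sum_{e\in E^+_G(S)} y'_e+\sum_{v\in S} y'_{\{v,s\}}\le 0,
\]
and the auxiliary coordinates $y'_{\{v,s\}}$ are only within $\delta'$ of the values $-\tfrac{1}{2L}\hat y(\delta_G(v))$ built from $\hat y$, not from $\tilde y$. Substituting and unwinding gives only
\[
\tilde y(\delta_G(S))\ \le\ \tfrac12\,\hat y(\delta_G(S))+O\!\left(L\sqrt{|E'|}\,\delta'\right),
\]
which, combined with $|\tilde y(\delta_G(S))-\hat y(\delta_G(S))|\le L\sqrt{|E|}\,\delta'$, yields at best $\tilde y(\delta_G(S))\le O(L\sqrt{|E'|}\,\delta')$, not $\le 0$. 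Your shrinking step cannot repair this: the cut inequalities $\sum_{e\in\delta_G(S)} y_e\le 0$ are homogeneous, so multiplying $\tilde y$ by $1/(1+L\delta')$ preserves the \emph{sign} of any violation. (The phrase ``preserving cut-inequality signs'' is exactly the problem; it preserves a \emph{positive} sign as well.) Hence $\tilde y\notin K_c$ in general, and you have not produced a point of $K_c$ within $\delta$ of $\hat y$.

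The paper avoids this entirely by never leaving the point $\hat y'$: since the identity $w(E^+_{G'}(S))=w(\delta_G(S))/(2L)$ holds \emph{exactly} at the pair $(\hat y',\hat y)$, one bounds the span violations of $\hat y'$ (using $\hat y'\in S(K_s,\delta')$) and reads off directly that every cut violation of $\hat y$ is at most $2L\sqrt{|E'|}\,\delta'$, then concludes $\hat y\in S(K_c,\delta)$. The point is that the ``bijective identification'' you invoke is only an identity along the affine image of $\mathbb R^{E}$ inside $\mathbb R^{E'}$; an arbitrary $y'\in K_s$ does not lie on that image, so you cannot pull the span constraints back to cut constraints on $\tilde y$ without the error you yourself flag in the ``main obstacle'' paragraph. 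Working at $\hat y'$ itself removes the obstacle rather than calibrating around it.
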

    	\begin{proof}
    		%The proof is via two reductions - one from Cut-Weak-Membership to Span-Positive and other from Unit-Capacity-Densest-Cut to Cut-Weak-Membership. First we give reduction from Cut-Weak-Membership to Span-Weak-Membership.
    		Our goal in Cut Weak Membership, given a graph a $G = (V,E)$ with  weights $\hat{y}_e$ on edges and $\delta > 0$, is to assert either  $\hat{y} = (\hat{y}_e)_{e \in E}$ is in $S(K_c,\delta)$ or $\hat{y}$ is not in $S(K_c,-\delta)$. If the point $\hat{y}$ violates   the constraint   $ ||y||_\infty \le 1 $    of $K_c$ then it can be asserted that $\hat{y}$ is not in $S(K_c,-\delta)$. So we assume $ ||\hat{y}||_\infty \le 1 $.   Given this assumption, we have   $ w(\delta_{G}(v)) \le |E|$.
    		
    		We construct an instance of Span-Weak-Membership (see Figure \ref{cut-span}), i.e., graph $G'= (V',E')$,  $\hat{y}'_e$ and $\delta'$ as follows (the values of $B$ and $L$ will be set later): \\ 
    		$V' = V \cup \{s,t\}$\\
    		$E' = E \cup \{\{s,t\}\} \cup \{\{v,s\}\} \quad \forall v \in V', v \neq \{s,t\}$ \\
    		$\hat{y}'_e = 
    		\left\{
    		\begin{array}{ll}
    		\frac{y_e}{L}  & \mbox{if } e \in E \\
    		-\frac{\frac{1}{2} w(\delta_{G}(v))}{L} & \mbox{if } e = \{v,s\}, v \neq t \\
    		-\frac{B}{L} & \mbox{if } e = \{s,t\}. \\
    		\end{array}
    		\right.
    		$\\
    		The value of $B$ is set to $ 2 |E| +  |V| |E|$ so that   $\sum_{e \in E_{G'}^+(S)}  \hat{y}'_e \le \frac{-B + |E| + 1/2 |V| |E|}{L} \le 0$ for all $S$ containing either $s$ or $t$.
    		%Further, $L = 2 ||((y_e)_{e \in E},\frac{1}{2} w(\delta_{G}(v)),\dots,\frac{1}{2} w(\delta_{G}(v)),B)||_*$ 
    		Further, $L = 2B$ so that $||\hat{y}'||_\infty = 1/2$ where $\hat{y}' = (\hat{y}'_e)_{e \in E'}$. Finally we choose
    		$\delta' = \frac{1}{2}\min \left\{\frac{\sqrt{|E|} \delta}{2L \sqrt{|E'|}}, \frac{|E| + 1/2 |V| |E|}{ \sqrt{ |E'| L}},\frac{1}{2}\right\}$.

    		%	Let  $K  = \left \{\sum_{e \in \delta_{G}(S)} y_e \le 0 \quad  \forall S \in  2^{V}\setminus \emptyset , ||y||_* \le 1\right \}$ be the polyhedron associated with Cut-Weak-Membership problem and \\$K' = \left \{\sum_{e \in E_{G'}^+(S)}   y_e \le 0 \quad  \forall S \in  2^{V'}\setminus \emptyset , ||y||_* \le 1  \right \}$ be the polyhedron associated with Span-Weak-Membership problem. 
    		
    		%	The  instance $G(V,E), \hat{y}_e \quad \forall e \in E$ and $\delta$ of Cut-Weak-Membership are given 

    		%	We set	$C = 2 |E|, B =  |E||E'|$, and $L = 2 \left(|E| + 1 + \frac{1}{2}|E||V| + C|V|+ B + 1\right)$. An explanation of these values is as follows: $C$ is chosen sufficiently large so that $ C > \frac{1}{2} w(\delta_{G}(v))$ and hence $\hat{y}'_e > 0$ for all edges $e \in E'$ (considering the constraints $y_e \ge 0$ of $D'$); $B$ is  large enough to ensure $w(E_{G'}^+(S)) = \sum_{e \in E^+(S)} w'_e \le \frac{-B + \frac{1}{2}|E'||E|}{L} < 0 $ for all $S$ that contains either $s$ or $t$ (we will use this fact later);  $L$ is also sufficiently large to make sure $\sum_{e\in E'} \hat{y}'_e = \frac{1}{L}\left(\sum_{e \in E}(\hat{y}_e + 1) + \sum_{v \in V} (\frac{1}{2} w(\delta_{G}(v)) + C) + B + 1\right) \le \frac{1}{L} \left(|E|+1 + \frac{1}{2} |E||V| + C|V| + B+1\right) = \frac{1}{2} < 1$ (considering the constraints $\sum_{e\in E'} \hat{y}_e \le 1$ of $D'$).
    		%Anything less than or equal to $-w(E_{H}(W)) = - \sum_{e \in F(H)} w_e $ will suffice for the value of $w'_{\{s,t\}}$.

    		%	We claim that 
    		\begin{claim}\label{spanhalfcut}
    			For all $S \subseteq V$,  $w(E_{G'}^+(S)) =  \frac{w(\delta_{G}(S))}{2L}$. 
    		\end{claim}
    		\begin{proof}
    			This is because  
    			
    			\[ L \, w(E_{G'}^+(S)) = L \, \sum_{e \in E_{G'}^{+}(S)} w'_e = w(E_{G}(S)) + w(\delta_{G}(S)) + \sum_{v \in S} -\frac{1}{2} \, w(\delta_{G}(v))\, , \]
    			
    			\noindent and since $w( \delta_G(v))$ counts edges in $E_G(S)$ twice and edges in $\delta_G(S)$ once,
    			
    			\[ L \, w(E_{G'}^+(S)) = w(E_{G}(S)) + w(\delta_{G}(S)) - \frac{1}{2} \cdot (2 w(E_{G}(S)) + w(\delta_{G}(S))) = \frac{w(\delta_{G}(S))}{2} \, .\]
    		\end{proof}
    		
    		Suppose the algorithm for Span Weak Membership asserts that the point $\hat{y}'$ is in $S(K_s,\delta')$. 
    		%By our construction, $(\hat{y},\hat{z})$ satisfies the inequalities $\sum_{e \in F'} y_e \le 1$  and $y_e,z_e \ge 0  \quad \forall e \in F'$. 
    		If 	$\hat{y}'$  satisfies all the constraints $\sum_{e \in E_{G'}^+(S)}  y_e \le 0$   for all $S \in  2^{V}\setminus \emptyset$ then the point $\hat{y}$ must satisfy  all the constraints $ \sum_{e \in \delta_{G}(S)}  y_e \le 0$ for all $S \in  2^{V}\setminus \emptyset$ (because by the  Claim \ref{spanhalfcut} $w(\delta_{G}(S)) = 2L \cdot w(E_{G'}^+(S)) $) and hence $\hat{y} \in K_c$ . Thus $\hat{y} \in S(K_c,\delta)$. Now suppose $\hat{y}'$ violates a constraint $ \sum_{e \in E_{G'}^+(R)}  y_e \le 0 $ for some $R \in  2^{V}\setminus \emptyset$. Since $\hat{y}' \in S(K_s,\delta')$, 
    		%$\sum_{e \in E_{G'}^+(S)}  \hat{y}'_e \le 0$ for all $S$ containing either $s$ or $t$ and $||\hat{y}'||_\infty = 1/2$, 
    		it is at most $\delta'$ distance away from   the hyperplanes corresponding to the violated constraints. Therefore, we have $w(E_{G'}^+(R)) = \sum_{e \in E_{G'}^+(R)}  \hat{y}'_e\le \delta' \sqrt{|E'|}$. 
    		%for all $S \in 2^{V}\setminus \emptyset$. 
    		By  Claim \ref{spanhalfcut}, $w(\delta_{G}(R)) \le 2L \delta' \sqrt{|E'|}$. Therefore, the point $\hat{y}$ is at most $\frac{2L \delta' \sqrt{|E'|}}{\sqrt{|E|}}$ distance from $K_c$. Since $\delta' < \frac{\sqrt{|E|} \delta }{2L  \sqrt{|E'|}}$, so $\hat{y} \in S(K_c,\delta)$.

    		Suppose the algorithm for Span Weak Membership problem asserts that the point $\hat{y}'$ is not in $S(K_s,-\delta')$.
    		%By our construction,  $w(E_{H'}^+(S)) \le 0 $ for all $S$ that contains either $s$ or $t$.  
    		If 	$\hat{y}'$ violates  a constraint $\sum_{e \in E_{G'}^+(S)}  y_e \le 0$ for some $S \in 2^{V}\setminus \emptyset$ then the point $\hat{y}$ also violates $ \sum_{e \in \delta_{G}(S)}  y_e \le 0$ for $S$ (by  Claim \ref{spanhalfcut}). Hence, it can be asserted that 		$\hat{y}$ is not in $S(K_c,-\delta)$. So now assume that $\hat{y}'$  satisfies all the constraints $\sum_{e \in E_{G'}^+(S)}  y_e \le 0$ for all $S \in 2^{V}\setminus \emptyset$. Also, as shown earlier,  $\hat{y}'$ satisfies  the other constraints of $K_s$. Since  $\hat{y}'$ is in $K_s$ but  not in $S(K_s,-\delta')$,  some $y \in S(\hat{y}',\delta')$ must have distance $< \delta'$ from some hyperplane of $K_s$.
    		%	Now for any $y \in S(\hat{y}',\delta')$, we have  $||y||_\infty - ||\hat{y}'||_\infty \le||y - \hat{y}||_\infty \le ||y - \hat{y}'||$. So $||y||_\infty \le   \delta +   1/2 \le 1$ as $\delta < \frac{1}{2}$. 
    		%As $\hat{y}'$ is not in $S(K',-\delta')$ so the distance of it from some hyperplanes of $K'$ must be at most $\delta'$. 
    		The distance of $\hat{y}'$ from the hyperplane   $ \sum_{e \in E_{G'}^+(S)}  y_e = 0$ for $S$ containing $s$ or $t$ is at least $\frac{|-B + |E| + 1/2 |V| |E'||}{\sqrt{ |E'| L}} = \frac{|E| + 1/2 |V| |E'|}{ \sqrt{ |E'| L}} > \delta'$. 
    		Also for any $y \in S(\hat{y}',\delta')$, we have  $||y||_\infty - ||\hat{y}'||_\infty \le||y - \hat{y}||_\infty \le ||y - \hat{y}'||$. So $||y||_\infty \le   \delta +   1/2 \le 1$ for all $y \in S(\hat{y}',\delta')$.
    		Therefore, it must be the case that  distance of $\hat{y}'$ from the hyperplane $ \sum_{e \in E_{G'}^+(S)}  y_e = 0$ for some $S \in 2^{V}\setminus \emptyset$ is  $< \delta'$.
    		% Since  $w(\delta_{G}(S)) = 2L \cdot w(E_{G'}^+(S)) $ 
    		By Claim \ref{spanhalfcut},  the distance of the point $\hat{y}$ from the  hyperplane $ \sum_{e \in \delta_{G}(S)} y_e = 0$ is at most $ \frac{2L \sqrt{|E'|} \delta'}{\sqrt{|E|}} < \delta$. Hence, $\hat{y}$ is not in $S(K_c,-\delta)$. \end{proof}
    		%Assume `yes' instance of Cut-Positive. That is there exists  $S \subseteq W$ s.t. $w(\delta_{H}(S)) > 0$. We need to show there exists  $S' \subseteq W'$ s.t. $\sum_{e \in E^{+}(S')} w'_e > 0 $. For $S' = S$, $\sum_{e \in E^{+}(S')} w'_e = w(E_{H}(S)) + w(\delta_{H}(S)) + \sum_{v \in S} -\frac{1}{2} \cdot w(\delta_{H}(v)) = w(E_{H}(S)) + w(\delta_{H}(S)) - \frac{1}{2} \cdot (2 w(E_{H}(S)) + w(\delta_{H}(S))) = \frac{w(\delta_{H}(S))}{2} > 0$. 
    		
    		%Now assume `no' instance of  cut-positive, i.e., $\forall S \subseteq W, w(\delta_{H}(S)) \le 0 $. We need to show $\forall S' \subseteq W', \sum_{e \in E^{+}(S')} w'_e \le 0$. Since $w'_{\{s,t\}} = -\infty $ so we need to consider only those $S'$ which do not contain either $s$ or $t$. But we have shown that for such $S'$, $\sum_{e \in E^{+}(S')} w'_e = \frac{w(\delta_{H}(S'))}{2} \le 0$. 
    	%\end{proof}	

    	Now we finish the proof by giving reduction from  Densest-Cut to Cut Weak Membership.
    	\begin{lemma}
    		There is a reduction from Densest-Cut  to Cut Weak Membership.
    	\end{lemma}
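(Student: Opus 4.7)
The plan is to adapt the strong-reduction from Lemma~\ref{weaker-densest-cut-span} by tuning the threshold so that the YES/NO separation exceeds the weak-membership tolerance $\delta$. Given a Densest-Cut instance $(G,M)$ with $n=|V|$ and $M=p/q$ in lowest terms, I would exploit the observation already used in the excerpt: every cut density has the form $r/(s(n-s))$ with $1\le r\le |E|$ and $1\le s\le n-1$. Writing $r/(s(n-s)) - p/q$ over the common denominator $q\,s(n-s)$ shows that any density distinct from $M$ differs from $M$ by at least $\eta_0 := 4/(qn^2)$, since the numerator is a nonzero integer and the denominator is at most $qn^2/4$. Set $M' := M + \eta_0/2$. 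Then in a YES instance some $S$ has density $\ge M' + \eta_0/2$, while in a NO instance every cut density is at most $M' - \eta_0/2$.

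Next I would recycle the construction from Lemma~\ref{weaker-densest-cut-span}: let $G'$ be the complete graph on $V$ with weights $\hat{y}_e = (1-M')/L$ if $e\in E(G)$ and $\hat{y}_e = -M'/L$ otherwise, where $L := 2\max\{M',|1-M'|\} + 2$, so that $\|\hat{y}\|_\infty \le 1/2$. The identity from the strong reduction gives $L\,w(\delta_{G'}(S)) = |\delta_G(S)| - M'|S|(n-|S|) = (D(S)-M')\,|S|(n-|S|)$, where $D(S)$ is the density of cut $S$. Combined with the gap above and $|S|(n-|S|)\ge n-1\ge 1$, this yields: in the YES case some $S$ satisfies $w(\delta_{G'}(S))\ge \eta_0/(2L)$, while in the NO case every $S$ satisfies $w(\delta_{G'}(S))\le -\eta_0/(2L)$.

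I would feed $(G',\hat{y},\delta)$ to the Cut Weak Membership oracle with $\delta := \eta_0/(4L\sqrt{|E'|})$. In a YES instance, the violated hyperplane $\sum_{e\in\delta(S)} y_e = 0$ has Euclidean distance from $\hat{y}$ at least $|w(\delta_{G'}(S))|/\sqrt{|\delta_{G'}(S)|} \ge \eta_0/(2L\sqrt{|E'|}) > \delta$, and $\hat{y}$ lies on the infeasible side, so $\hat{y}$ is more than $\delta$ from $K_c$; assertion $\hat{y}\in S(K_c,\delta)$ is therefore false and the oracle must answer $\hat{y}\notin S(K_c,-\delta)$. In a NO instance, every cut hyperplane has Euclidean slack at least $\eta_0/(2L\sqrt{|E'|}) \ge 2\delta$ and the box constraints $\|y\|_\infty\le 1$ have slack $1/2 \ge \delta$, so $S(\hat{y},\delta)\subseteq K_c$, i.e., $\hat{y}\in S(K_c,-\delta)$; hence the assertion $\hat{y}\notin S(K_c,-\delta)$ is false and the oracle must answer $\hat{y}\in S(K_c,\delta)$. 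The two oracle answers are disjoint across the YES/NO cases and thus decide Densest-Cut. All numbers $\eta_0, M', L, \delta$ have bit length polynomial in $\langle G\rangle + \langle M\rangle$, giving a polynomial-time Karp reduction.

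The main obstacle is the quantitative bookkeeping: propagating the density gap $\eta_0 = 4/(qn^2)$ through the conversion $L\,w(\delta_{G'}(S)) = (D(S)-M')|S|(n-|S|)$ and then through the factor $1/\sqrt{|\delta_{G'}(S)|}$ when passing from constraint values to Euclidean distances, so that both the hyperplane slacks \emph{and} the box-constraint slacks comfortably exceed $\delta$ in the NO case and the infeasibility gap comfortably exceeds $\delta$ in the YES case. Unlike the strong-membership reduction, weak membership has no tolerance to spare, so these margins must be verified explicitly; once the polynomial density gap is in place, the remaining estimates are routine.
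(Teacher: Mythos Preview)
Your argument is correct and uses the same underlying construction as the paper: the complete graph $G'$ on $V$ with edge weights $(1-M')/L$ on $E$ and $-M'/L$ on $E'\setminus E$, together with the identity $L\,w(\delta_{G'}(S)) = |\delta_G(S)| - M'\,|S|(n-|S|)$ and the rational granularity of cut densities to create a gap that dominates the weak-membership tolerance $\delta$.

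The one genuine difference is your threshold shift $M' = M + \eta_0/2$. The paper keeps the threshold at $M$ and uses the integrality of cut weights (multiples of $1/(qL)$) to argue that $w(\delta_{G'}(S))$ is either $\ge t$, $=0$, or $\le -t$; consequently the oracle only separates ``some density $\ge M$'' from ``all densities $\le M$'', and the paper then invokes the finiteness of the set of possible densities to resolve the boundary case with ``at most two oracle calls''. Your shift eliminates this boundary case altogether: every cut value is bounded away from zero with the correct sign, so a single oracle call yields a clean Karp reduction with disjoint answers on YES and NO instances. Conceptually the two gap arguments are the same observation (rational denominators bounded by $q\cdot n^2/4$), but your presentation avoids the somewhat informal ``two calls'' step at the end of the paper's proof. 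One small point worth making explicit: your choice of $L$ guarantees $\delta = \eta_0/(4L\sqrt{|E'|}) < 1/2$, which you need for the box-constraint slack; this holds since $L \ge 3$ and $\eta_0 \le 1$ for $n\ge 2$, but stating it removes any doubt.
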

    	
    	\begin{proof}
    		In the  Densest Cut problem, a  graph $G = (V,E)$ and a positive rational $M$ are given  and the goal is to determine if there exists a set $ S \subset V $ s.t. the density of the cut $(S,V \setminus S)$ is at least $M$, i.e.,    $ \frac{|\delta_{G}(S)|}{|S| |V \setminus S|} \ge M$. Let $M = \frac{p}{q}$ for positive integer $p,q$.
    		
    		Given the graph $G = (V,E)$ and $M$,  the instance of the Cut Weak Membership  is  a complete graph $G' = (V,E')$ (so $|E'| = \frac{|V|(|V|-1)}{2} $), weight $\hat{y}_e$  on each edge $e \in E'$ such that $\hat{y}_e$ is $\frac{1 - M}{L} $ if it existed in $E$ and  $\frac{-M}{L}$ otherwise  and $\delta = \frac{1}{2} \min \{ \frac{1}{2},\frac{t}{\sqrt{|E'|}}\}$.  Let $\hat{y} = (\hat{y}_e)_{e \in E'}$. 
    		%We set $\hat{y}_e$ to $\frac{1 - M}{L} $ if it existed in $E$, and to $\frac{-M}{L}$ otherwise. Let $\hat{y} = (\hat{y}_e)_{e \in E'}$.  
    		We set   $L$ to $ 2  \max\{M,|1-M|\}$ so that $||\hat{y}||_\infty = 1/2$ and $t$ to $ \frac{1}{qL}$ .
    		
    		It is easy to see that $w(\delta_{G'}(S)) = \frac{1}{L} (|\delta_{G}(S)| - M  |S| |V \setminus S|)$. Therefore, $\exists S \subset V $ s.t. $w(\delta_{G'}(S)) \ge 0 \Leftrightarrow \exists S \subset V $ s.t. $\frac{|\delta_{G}(S)|}{|S| |V \setminus S|} \ge M$.
    		
    		%	Let   $D  = \left \{\sum_{e \in \delta_{G'}(S)} (z_e - y_e) \le 0 \quad  \forall S \in  2^{V}\setminus \emptyset , \sum_{e \in E'} y_e \le 1,y_e,z_e \ge 0 \quad \forall e \in E'\right \}$.
    		
    		Since $M$ is equal to $\frac{p}{q}$ for some $p,q \in \mathbb{Z}_+$, therefore the weight of an edge is either $\frac{q - p}{qL}$ or $\frac{-p}{qL}$. So  if a cut value $w(\delta_{G'}(S))$ is strictly positive for any $S$ then $w(\delta_{G'}(S))$ must be at least $\frac{1}{qL} = t$. Similarly, if $w(\delta_{G'}(S)) < 0$ then we have $w(\delta_{G'}(S)) \le - t$.
    		
    		Now suppose an algorithm for Cut-Weak-Membership asserts $\hat{y}$ is in $S(K_c,\delta)$. 
    		%	Since $||\hat{y}||_* \le 1$  
    		Thus for all $S$, $w(\delta_{G'}(S)) = \sum_{e \in \delta_{G'}(S)}  \hat{y}_e  \le  \sqrt{|E'|} \delta$.
    		%Thus there exists vector $y_1$ in $K$ such that $||\hat{y} - y_1||_2 \le \delta$. Therefore, $\hat{y} = y_1 + \lambda \hat{n}$ for some $0 \le \lambda \le \delta$ and vector $\hat{n}$ such that $||\hat{n}||_2 = 1$.
    		%	 So for all $S$, $w(\delta_{G'}(S)) = \sum_{e \in \delta_{G'}(S)}  \hat{y}_e =  \sum_{e \in \delta_{G'}(S)}  (y_1)_e  + \lambda \sum_{e \in \delta_{G'}(S)}  n_e \le \delta \sqrt{|E'|}||\hat{n}||_2 =  \sqrt{|E'|} \delta$.
    		Since $\delta < \frac{t}{\sqrt{|E'|}}$, so it must be the case that for all $S$, $w(\delta_{G'}(S)) \le 0$. This implies that for all $S$, the cut density $\frac{|\delta_{G}(S)|}{|S| |V \setminus S|} \le M$.
    		
    		Now suppose the algorithm for Cut Weak Membership asserts that $\hat{y}$ is not in $S(K_c,-\delta)$.  If  $\hat{y} \not \in K_c$ (and since $||\hat{y}||_\infty \le 1$) then clearly there  exists a set $S$ such that $w(\delta_{G'}(S))  > 0$. This implies there is a cut $(S,V \setminus S)$ with density $\frac{|\delta_{G}(S)|}{|S| |V \setminus S|} > M$. Now assume $\hat{y}$ is in $K_c$.
    		Now for any $y \in S(\hat{y},\delta)$, we have  $||y||_\infty - ||\hat{y}||_\infty \le||y - \hat{y}||_\infty \le ||y - \hat{y}|| \le  \delta$. So $||y||_\infty \le   \delta +   1/2 < 1$.
    		%Since  $\hat{y} \not \in S(D,-\delta)$, so there exists $x \not \in K$ such that $||\hat{y} - x ||_2 \le \delta$. We have $-||\hat{y}  ||_*+|| x ||_* \le ||\hat{y} - x ||_* \le ||\hat{y} - x ||_2 \le \delta$. Therefore, $||x||_* \le 1$.
    		So   there must exist a   hyperplane $\sum_{e \in \delta_{G'}(S)} y_e = 0$ for some $S$ with at most $\delta$ distance from $\hat{y}$.  Therefore, there exist a set $S$ with $0 \ge w(\delta_{G'}(S)) \ge -\sqrt{|E'|} \delta$. Since $\delta <  \frac{t}{\sqrt{|E'|}}$, this means $w(\delta_{G'}(S)) = 0$ and hence density of cut $(S,V \setminus S)$ is $M$. Thus, if an algorithm for Cut Weak Membership asserts that $\hat{y}$ is not in $S(K_c,-\delta)$ then there exists a cut with density at least $M$.
    		
    		Therefore, assuming an efficient algorithm for Cut Weak Membership, it can be determined if there exists a cut with density at least $M$ or all cuts have density at most $M$. However, the goal in  Densest Cut is to determine if there is a cut with  density $ \ge M$ or all cuts have density strictly less than $M$. But  since the density can take only polynomial number of values  $\left\{\frac{r}{s(|V|-s)} | 1 \le r \le |E|, 1 \le s \le |V|-1, r,s \in \mathbb{Z}_+\right\}$ (as noted before), by using at most two oracle calls to the Cut-Weak-Membership problem we can solve the original problem.
    	\end{proof}

%% Please use bibtex, 

%\bibliography{lipics-v2019-sample-article}

%\appendix

\end{document}